\tikzset{every fit/.append style=text badly centered}
\newcommand\dboxed[1]{\dbox{\ensuremath{#1}}}
\newtheorem{theorem}{Theorem}[section]
\newtheorem{corollary}[theorem]{Corollary}
\newtheorem{example}[theorem]{Example}
\newtheorem{definition}[theorem]{Definition}
\newtheorem{lemma}[theorem]{Lemma}
\theoremstyle{definition}
\newtheorem*{remark}{Remark}
\newcommand{\ii}{\mathfrak{i}}
\newcommand{\Holant}{\operatorname{Holant}}
\newcommand{\holant}[2]{\ensuremath{\Holant\left(#1 \mid #2\right)}}
\newcommand{\CSP}{\operatorname{\#CSP}}
\newcommand{\rmnum}[1]{\romannumeral #1}
\newcommand{\Rmnum}[1]{\expandafter\@slowromancap\romannumeral #1@}
\numberwithin{equation}{section}
\date{May 22, 2020}
\def\ps@pprintTitle{%
    \let\@oddhead\@empty
    \let\@evenhead\@empty
    \def\@oddfoot{\footnotesize\emph
         {To appear in Information and Computation ({\sl\url{https://doi.org/10.1016/j.ic.2020.104589}})  \hfill May 22, 2020}}%
    \let\@evenfoot\@oddfoot
    }
\journal{Information and Computation}
\begin{document}

\begin{frontmatter}

\title{{\bf Beyond \#CSP: A Dichotomy for  Counting Weighted
Eulerian Orientations with   ARS}}




\author[mymainaddress]{Jin-Yi Cai\fnref{fn1}}
\ead{jyc@cs.wisc.edu}

\author[mysecondaryaddress]{Zhiguo Fu\fnref{fn2}}
\ead{fuzg432@nenu.edu.cn}

\author[mymainaddress]{Shuai Shao\fnref{fn1}\corref{mycorrespondingauthor}}
\ead{sh@cs.wisc.edu}

\cortext[mycorrespondingauthor]{Corresponding author}
\fntext[fn1]{Supported by NSF CCF-1714275.} \fntext[fn2]{Supported by NSFC-61872076, 
 Natural Science Foundation of Jilin Province 20200201161JC and Fundamental Research Funds for Central Universities.}

\address[mymainaddress]{Department of Computer Sciences, University of Wisconsin-Madison, Madison, USA}
\address[mysecondaryaddress]{School of Information Science and Technology \& KLAS, Northeast Normal University, Changchun, China}

\begin{abstract}
\small{We define and explore a notion of unique prime factorization
for constraint functions,
and use this as a new tool
to  prove a complexity classification for counting weighted  \emph{Eulerian orientation} problems with \emph{arrow reversal symmetry} ({\sc ars}). 
We prove that all such problems are either polynomial-time computable or \#P-hard.
We show that the class of  weighted  \emph{Eulerian orientation} problems subsumes
all  weighted counting constraint satisfaction problems (\#CSP) on Boolean variables. More significantly,
we establish a novel connection between \#CSP
and counting weighted Eulerian orientation problems that is global in nature. This
connection is based on a structural determination of all half-weighted affine linear subspaces
over $\mathbb{Z}_2$, which is proved using M\"obius inversion.}
\end{abstract}

\begin{keyword}
\small{Eulerian orientation, Holant problem,  \#CSP, Unique prime factorization.}
\end{keyword}

\end{frontmatter}

\section{Introduction}\label{sec:intro}
A most basic fact in mathematics is that (rational) integers have a  
unique prime factorization (UPF). When we consider algebraic extensions,
certain primes remain so, while others can be further factored.
For example, for algebraic integers in the extension $
\mathbb{Q}[\sqrt{-2}] \supset \mathbb{Q}$, the rational prime $3$ is no longer a prime,
as $3 = (1 + {\mathfrak i}\sqrt{2})(1 - {\mathfrak i}\sqrt{2})$,
 but $5$ remains a prime.
Also for algebraic integers, sometimes
UPF still holds, but in general it fails.
For example, for algebraic integers in $\mathbb{Q}[\sqrt{-5}]$,
$6 = 2 \cdot 3 = (1 + {\mathfrak i}\sqrt{5})(1 - {\mathfrak i}\sqrt{5})$ 
are two distinct
factorizations into irreducible elements, and thus UPF fails.
Unique prime factorization  is only restored in Kummer's theory of prime ideals.
Perhaps more important than what is in the details, in many mathematical investigations
the \emph{idea} of UPF has been a powerful guiding principle.
Is there some analogous notion that is useful for the complexity classification
program of counting problems?

In this paper we show that for a class (which is in fact \emph{broader} than all weighted  counting CSP)
called
counting weighted Eulerian orientation problems,
 we \emph{can} develop an analogous notion for the purpose of  complexity classification of counting problems.  
We prove a
UPF for constraint functions,
(and for the relevant constraint functions there is
also a need to extend the scope where the factorization takes place).
Next we 
develop a merging operation on  constraint functions.
The main technical challenge turns out to be the
interplay of these merging operations and  
the divisibility relation in the unique prime factorization.

Let us define the
counting problems that we wish to classify.
Let $G$ be an undirected Eulerian graph, 
i.e., every vertex has even degree. 
An \emph{Eulerian orientation} of $G$ is an orientation of its edges such that at each vertex
the number of incoming edges is equal to the number of outgoing edges.
Mihail and Winkler showed that counting the number of Eulerian orientations of an undirected Eulerian graph is \#P-complete \cite{MW}.
In this paper, we consider counting weighted Eulerian orientations  (\#EO problems), formulated as 
a partition function defined by constraint functions 
placed at each vertex that represent
weightings of various local Eulerian configurations. There are a host of problems
in statistical physics and combinatorics that can be formulated as computing this partition function.

This partition function is defined as follows.
Suppose $G$ is given, and
each vertex $v$ is associated with a weight function $f_v$ taking values in
$\mathbb{C}$. 
The 
incident edges  to  $v$  are totally ordered and correspond to input variables
to $f_v$.
Each edge has two ends, and an  orientation of the edge is  denoted by assigning 
0 to the head and 1 to the tail.  
Thus, locally at every vertex $v$,  a
$0$ represents an incoming edge and a $1$ represents an outgoing edge. (A loop has both ends at the same vertex.)
An Eulerian orientation corresponds to
an assignment to each edge ($01$ or $10$) where the numbers of 0's and 1's at
each $v$ are equal. Then
a vertex $v$ contributes a weight by the local constraint function
$f_v$ evaluated according to the local assignment.
 The weight of an Eulerian orientation is the product of weights over all vertices. 
 If every constraint function  $f_v$  is only nonzero   when the numbers of input 0's and 1's are equal, i.e., the support of $f_v$ is on half weighted inputs,
 then only Eulerian orientations contribute nonzero values.
 The partition function of \#EO is the sum of weights over all 
 Eulerian orientations. This is a sum-of-product computation.

  

The significance of  this partition function 
is evidenced by its appearance in several different
fields. 
In statistical physics, the partition function of the so-called \emph{ice-type
model} \cite{Pauling, Slater, rys1963uber} is the partition function of Eulerian orientations of some underlying Eulerian graph. 
When it is restricted on the square lattice, it is the
classical \emph{six-vertex model} \cite{Pauling}.
Literally thousands of papers have been written in the literature
dealing with the six-vertex model, mainly on the square lattice, but also on other
graphs~\cite{Baxter}. 
 It is perhaps one of the three most intensely studied models
in statistical physics, along with that of ferromagnetic
Ising and monomer-dimer.
The ``exact solution'' 
of the (unweighted) six-vertex model with periodic boundary conditions by Lieb is
an important milestone in statistical physics~\cite{Lie67a,Lie67b,Lieb}.
  In combinatorics, the resolution of the \emph{Alternating
Sign Matrix} conjecture is linked to the classical {six-vertex model} with the domain wall 
boundary condition~
\cite{Korepin,Mills-Robbins-Rumsey,Zeilberger,Kuperberg,Bressoud}.
 Las Vergnas also observed that the evaluation of the Tutte polynomial 
at the point $(3, 3)$ is related to the partition function of  Eulerian orientations
with a specific weight assignment~\cite{tutte}.
Cai, Fu and Xia proved a complexity classification of the partition function
of the six-vertex model on general 4-regular graphs \cite{cfx}.

The \#EO problems have an intrinsic significance in the classification program for counting problems. 
At first glance, the \#EO framework may appear to be specialized as it requires all constraint functions to be supported on  half weighted inputs.
However, surprisingly, we show that it encompasses all counting constraint satisfaction problems (\#CSP)
on Boolean variables.
\begin{theorem}\label{csp-by-eo-intro}
Every complex-weighted {\rm\#CSP} problem  on Boolean variables can be expressed as an {\rm\#EO} problem.
\end{theorem}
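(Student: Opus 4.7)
The plan is to give a direct gadget construction that turns each Boolean $\CSP$ instance into an \#EO instance with the same partition function. I begin by writing the $\CSP$ in its standard bipartite Holant form $\holant{\mathcal{EQ}}{\mathcal{F}}$, where variable vertices carry equalities $=_d$ (of degree equal to the number of occurrences of the variable) and constraint vertices carry the given $f\in\mathcal F$. The task then reduces to replacing every vertex function by a half-weighted one while preserving the value of the sum-of-products.

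The central device is a \emph{doubling} operation: for a Boolean function $f$ of arity $k$, set
\[
\tilde f(x_1,y_1,\ldots,x_k,y_k) \;=\; f(x_1,\ldots,x_k)\cdot\prod_{i=1}^{k}[x_i\neq y_i],
\]
a function of arity $2k$ whose support forces $y_i=1-x_i$. On this support each pair $(x_i,y_i)$ contributes exactly one $0$ and one $1$, so $\tilde f$ is half-weighted and hence a legitimate \#EO signature. Applied to $=_d$, the doubled function $\widetilde{=_d}$ is supported on tuples where all $x_i$ share a common bit $b$ and all $y_i$ equal $1-b$, which is still half-weighted. I would then form the \#EO instance by doubling every edge of the bipartite Holant graph into two parallel edges (arranged as consecutive inputs of each incident $\tilde f$) and placing $\tilde f$ at every vertex previously carrying $f$.

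To verify equality of partition functions I would unwind the implicit disequality that each \#EO edge contributes at its two ends. The factors $\prod_i[x_i\neq y_i]$ imposed at the two endpoints of a doubled pair turn out to be mutually consistent rather than overdetermined, and collapse the sum over the two edge orientations to a sum over one free Boolean $b_u$ per original variable vertex $u$. At a constraint vertex $v$, the surviving factor is $f$ evaluated at the tuple $(1-b_{u_{e_1}},\ldots,1-b_{u_{e_k}})$; the change of variables $c_u:=1-b_u$ in the outer sum absorbs the negation and identifies the \#EO partition function with the original $\CSP$ value.

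The only real obstacle is bookkeeping: keeping the \#EO orientation convention, under which each edge presents complementary bits at its two endpoints, aligned with the bipartite Holant convention for $\CSP$, under which each edge presents the same bit at both endpoints, and checking that the two complementarity constraints imposed at the ends of each doubled pair agree rather than conflict. None of the heavier machinery developed later in the paper (unique prime factorization, the merging operation, or the analysis of half-weighted affine linear subspaces) is needed for this theorem; it is purely a gadget reduction.
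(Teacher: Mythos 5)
Your doubling map $\tilde f = f \cdot \prod_i [x_i \neq y_i]$ is, up to reordering the $2k$ input slots, exactly the paper's lift $\widetilde{g}$, and your sign-flip $c_u = 1 - b_u$ is precisely the right way to reconcile the \#EO edge convention (complementary values at the two ends of every edge, since ${\rm \#EO} \equiv_T \Holant(\neq_2 \mid \cdot)$) with the bipartite-\#CSP convention (identical values at both ends). Where you genuinely diverge from the paper is in how the variable side of $\holant{\mathcal{EQ}}{\mathcal{G}}$ is handled: you keep each variable vertex $u$ of degree $d$ intact and assign it the lifted equality $\widetilde{(=_d)}$, which is $\neq_{2d}$ up to a variable permutation. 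This cleanly yields $\CSP(\mathcal{G}) \leqslant_T {\rm\#EO}(\widetilde{\mathcal{G}} \cup \mathcal{DEQ})$, which is enough for the informal statement of Theorem~\ref{csp-by-eo-intro}. The paper instead explodes $u$ into $d$ degree-$2$ vertices labelled $\neq_2$, wired so that the $2d$ edge-ends around the old $u$ form a single alternating circuit threading through the incident constraint vertices; since a degree-$2$ $\neq_2$ vertex is a trivial pass-through in an Eulerian orientation, those auxiliary vertices can be absorbed, giving $\CSP(\mathcal{G}) \equiv_T {\rm\#EO}(\widetilde{\mathcal{G}})$ on the nose as in Theorem~\ref{thm:csp-by-eo}, with no extra Disequalities in the target signature set --- the ``categorical'' strengthening the authors highlight in the remark after that theorem. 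You also only argue the forward reduction; the full theorem establishes ${\rm\#EO}(\widetilde{\mathcal{G}}) \leqslant_T \CSP(\mathcal{G})$ as well, by decomposing an arbitrary \#EO instance over $\widetilde{\mathcal{G}}$ into edge-disjoint circuits along the pairing structure and recovering one Boolean \#CSP variable per circuit. Your gadget construction and bookkeeping are correct for the one direction you address.
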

A more detailed statement for this theorem is given 
as Theorem~\ref{thm:csp-by-eo}.
By this theorem, a complexity dichotomy
 for  \#EO problems would 
 generalize the
 complexity dichotomy
 for $\CSP$ problems on the Boolean domain, 
 which is already known\footnote{In fact, the full complexity dichotomy
 for $\CSP$ problems on any finite domain is also known~\cite{bulatov-ccsp, dyer-richerby, bulatov2012csp, cai-chen-lu-nonnegative-csp, cai-chen-csp}.} \cite{creignou1996complexity, dyer2009complexity, bulatov2009complexity, Cai-Lu-Xia-csp}.

In physics, a symmetry called the \emph{arrow reversal symmetry} ({\sc{ars}})\footnote{On a square lattice, when there is no
external electric field, physical considerations imply that the model is unchanged
by reversing all arrows \cite{Baxter}.  This ‘zero field’
model includes the {\it ice}~\cite{Pauling}, {\it KDP}~\cite{Slater} and {\it F}~\cite{rys1963uber} models as special cases.
} is usually assumed. 
For complex-valued  local constraint functions $f$,
{\sc ars} requires that $f(\overline{\alpha})=\overline{f(\alpha)}$ 
for all $\alpha$,
where $\overline{f(\alpha)}$ denotes the complex conjugation of $f(\alpha)$,
and $\overline{\alpha}$  denotes the bit-wise complement of $\alpha$.
For real-valued functions, this is $f(\overline{\alpha})=f(\alpha)$.
%
This means, if we flip the
orientations of all edges, the (real)
function value $f$ is unchanged (or for complex $f$
it is changed  to its complex conjugation).
We will see that it is not only natural
 but also necessary for the proof that we
 consider complex values.

In complexity theory, there is a more intrinsic reason for considering the arrow reversal symmetry.
Under the holographic transformation $Z =
\frac{1}{\sqrt{2}}
\left[\begin{smallmatrix}
1 & 1 \\
{\mathfrak i} & -{\mathfrak i}
\end{smallmatrix}\right]$, any \#EO problem 
is transformed to a standard
Holant problem, and  complex-valued constraint functions with the {\sc ars} restriction translate  precisely to real-valued
constraint functions.
\begin{theorem}
A complex-valued constraint function $f$ satisfies  {\sc ars} if and only if after the holographic transformation $Z$, $f$ is transformed to a real-valued function. 
\end{theorem}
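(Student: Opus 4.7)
The plan is to reduce the entire equivalence to a single algebraic identity about~$Z$: its two columns,
$\frac{1}{\sqrt{2}}(1,\ii)^{\!\top}$ and $\frac{1}{\sqrt{2}}(1,-\ii)^{\!\top}$,
are complex conjugates of each other, which in matrix form reads $\overline{Z} = Z X$, where $X = \left[\begin{smallmatrix} 0 & 1 \\ 1 & 0 \end{smallmatrix}\right]$ is the bit-flip, or entry-wise $\overline{Z_{a,b}} = Z_{a,\bar b}$ for $a,b\in\{0,1\}$. This identity is immediate, and together with the invertibility of $Z$ (its determinant is $-\ii$) is essentially all that is needed.

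Using the convention that the $Z$-transform of an arity-$n$ constraint $f$ is $\hat f(\alpha) = \sum_{\beta \in \{0,1\}^n} \prod_{i=1}^n Z_{\alpha_i,\beta_i}\, f(\beta)$, I would take complex conjugates of both sides, pull the conjugate through each factor using $\overline{Z_{a,b}} = Z_{a,\bar b}$, and then change variables $\gamma = \bar\beta$:
\[
  \overline{\hat f(\alpha)}
  \;=\; \sum_{\beta}\prod_i \overline{Z_{\alpha_i,\beta_i}}\,\overline{f(\beta)}
  \;=\; \sum_{\beta}\prod_i Z_{\alpha_i,\bar\beta_i}\,\overline{f(\beta)}
  \;=\; \sum_{\gamma}\prod_i Z_{\alpha_i,\gamma_i}\,\overline{f(\bar\gamma)}.
\]

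Both directions now fall out in a single line. If $f$ satisfies {\sc ars}, so that $\overline{f(\bar\gamma)} = f(\gamma)$ for every $\gamma$, the right-hand side is exactly $\hat f(\alpha)$, and hence $\hat f$ is real. Conversely, if $\hat f(\alpha)\in\mathbb{R}$ for every $\alpha$, the same display says that $Z^{\otimes n}$ annihilates the function $\gamma \mapsto \overline{f(\bar\gamma)} - f(\gamma)$; invertibility of $Z^{\otimes n}$ then forces $\overline{f(\bar\gamma)} = f(\gamma)$ for all $\gamma$, which is precisely {\sc ars}.

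There is no genuine obstacle; the only care required is to match the paper's exact convention for the holographic transformation. Whether one writes $\hat f$ using $Z^{\otimes n}$, $(Z^{-1})^{\otimes n}$, a transpose, or a bipartite split that places $Z$ on the edge side, the same argument goes through after the corresponding analog (e.g.~$\overline{Z^{-1}} = X Z^{-1}$, reflecting that the \emph{rows} of $Z^{-1}$ are also complex conjugates) is substituted for $\overline{Z} = Z X$. The moral content is simply that complex conjugation permutes the two columns of $Z$ in the same pattern by which {\sc ars} complements the inputs of $f$, and this matching is what causes the two notions to coincide after the holographic transformation.
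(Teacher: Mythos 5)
Your proof is correct and rests on the same mechanism as the paper's: conjugating the $Z$-transform, using the structure of $Z$ (its columns are swapped by complex conjugation, $\overline{Z_{a,b}}=Z_{a,\bar b}$), and reindexing $\gamma=\bar\beta$. The only real difference is in the converse direction. The paper does a symmetric explicit computation, writing $f$ in terms of $\widehat f$ via $Z^{-1}$ and using the analogous identity $\overline{Z^{-1}_{a,b}}=Z^{-1}_{\bar a, b}$ (rows of $Z^{-1}$ are conjugate-swapped), then conjugating and applying the reality of $\widehat f$. You instead extract both directions from a single display: $\overline{\widehat f}=Z^{\otimes n}\bigl(\gamma\mapsto\overline{f(\bar\gamma)}\bigr)$, and for the converse appeal once to the invertibility of $Z^{\otimes n}$ to conclude $\overline{f(\bar\gamma)}=f(\gamma)$. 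Your version is slightly slicker in that it avoids redoing the sum manipulation for $Z^{-1}$; the paper's version is more self-contained in not needing to invoke invertibility as a separate fact. Both are sound, and your reading of the paper's convention ($\widehat f = Z^{\otimes n} f$ in the bipartite Holant split) is the right one.
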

We will describe holographic transformation and prove this theorem in Section~\ref{apen-holo}.
By this theorem,  the classification of \#EO problems can serve as building blocks towards a classification of general Holant problems. 
In Section~\ref{sec-conclusion}, we  will give a  partial map (Figure~\ref{fig:structure}) for the complexity classification program for Holant problems on the Boolean domain, where we indicate how the framework of \#EO problems fits in this program.
 
 In this paper, we prove a complexity dichotomy theorem for \#EO problems with {\sc ars}.
\begin{theorem}\label{main-intro}
For any set $\mathcal{F}$  of complex-valued  constraint functions satisfying the {\sc ars} condition,
the weighted counting problem for Eulerian orientations 
specified by $\mathcal{F}$ is either computable in polynomial time, or \#P-hard.
The polynomial time tractability criterion on $\mathcal{F}$ is explicit.
\end{theorem}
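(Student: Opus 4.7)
The plan is to prove the dichotomy by combining the holographic transformation with a new unique prime factorization (UPF) framework for constraint functions, and then importing hardness from the known Boolean \#CSP dichotomy via Theorem~\ref{csp-by-eo-intro}. The first step is to apply the transformation $Z = \frac{1}{\sqrt{2}}\bigl[\begin{smallmatrix} 1 & 1 \\ \ii & -\ii \end{smallmatrix}\bigr]$, so that each {\sc ars} signature becomes a real-valued Holant signature, as guaranteed by the theorem immediately preceding the one to be proved. This brings the problem into a setting where the existing tractable families (affine $\mathscr{A}$, product-type $\mathscr{P}$, and matchgate-realizable signatures) and the standard hardness machinery for real-valued Holant can be invoked directly.

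Next I would develop the UPF structure on the relevant monoid of {\sc ars} constraint functions, identifying the irreducible ``primes'' and proving uniqueness of factorization --- possibly after enlarging the scope to accommodate the primes that naturally arise. The structural characterization of half-weighted affine linear subspaces over $\mathbb{Z}_2$, established via M\"obius inversion, pins down which supports are admissible for \#EO signatures and thereby constrains what the primes can look like. With UPF in hand, every signature admits a canonical decomposition, and an explicit tractability criterion on $\mathcal{F}$ can be stated in terms of its prime factors.

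For the tractable direction, I would enumerate the prime types whose Holant value is polynomial-time computable (essentially those that map into $\mathscr{A}$, $\mathscr{P}$, or the matchgate class after the $Z$-transform). I would then check that whenever every prime factor of every signature in $\mathcal{F}$ is of such a type, the partition function can be evaluated in polynomial time, inheriting the algorithm from these classes together with the \#CSP-to-\#EO embedding of Theorem~\ref{csp-by-eo-intro} where needed.

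For the hardness direction, given $\mathcal{F}$ containing at least one signature with a non-tractable prime factor, I would use the merging operations developed in the paper to gadget-construct a signature that isolates this bad prime, and then reduce a known \#P-hard Boolean \#CSP problem into the corresponding \#EO problem via Theorem~\ref{csp-by-eo-intro}. The main obstacle --- and the technical heart of the proof --- will be controlling the merging operation in the presence of the divisibility relation of UPF: merging two signatures can combine, cancel, or propagate prime factors in subtle ways, and a naive gadget may fail to isolate the desired bad prime. I expect the bulk of the work to be a careful case analysis showing that, for each non-tractable prime that can occur in $\mathcal{F}$, some sequence of merges and tensorings produces the required hardness gadget, thereby closing the hardness side of the dichotomy.
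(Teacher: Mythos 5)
Your overall toolbox is the right one --- holographic transformation, UPF, merging operations, the M\"obius-inversion characterization of half-weighted affine subspaces, and a reduction from \#CSP --- but the logical architecture you describe would not close the dichotomy, and a couple of your specific claims are wrong.

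First, a concrete error in your tractability criterion: you list matchgate-realizable signatures as a tractable family. Matchgates only give polynomial-time algorithms for \emph{planar} graphs, and \#EO problems here are over general Eulerian graphs. The paper's actual tractability condition is simply $\mathcal{F}\subseteq\mathscr{A}$ or $\mathcal{F}\subseteq\mathscr{P}$ (Theorem~\ref{main}); there is no matchgate case. Second, you propose to apply the $Z$-transform and then ``invoke directly'' the standard hardness machinery for real-valued Holant. That machinery did not exist at the time: a dichotomy for real-valued Holant without auxiliary signatures is precisely what this paper helps to build (see the Recent Development section), so appealing to it is circular. The transformation theorem in Section~\ref{apen-holo} is used in the paper to \emph{motivate and contextualize} the \#EO-with-{\sc ars} framework, not as a step that lets one import a ready-made classification.

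Third, your hardness route via Theorem~\ref{csp-by-eo-intro} does not go through. That theorem shows $\CSP(\mathcal{G})\equiv_T{\rm\#EO}(\widetilde{\mathcal{G}})$ for a \emph{specially constructed} set $\widetilde{\mathcal{G}}$; it does not say how to reduce a hard \#CSP to ${\rm\#EO}(\mathcal{F})$ for the \emph{given} $\mathcal{F}$. The paper's actual hardness engine is different and hinges on a case split you omit entirely: by mating two copies of an irreducible $f$ and applying Cauchy--Schwarz (Lemma~\ref{twocase}), one either gets \#P-hardness directly, or realizes $\neq_4$, or deduces an orthogonality property. If $\neq_4$ is realizable, the paper bootstraps to all of $\mathcal{DEQ}$ and then runs two \emph{global} simulations: first $\CSP(|\mathcal{F}|^2)\leqslant_T\holant{\mathcal{DEQ}}{\mathcal{F}}$ using {\sc ars} (Lemma~\ref{lem-square}), which forces affine supports; then, after the M\"obius-inversion argument shows that half-weighted affine supports are pairwise opposite (Lemma~\ref{lem-opposite}), a second global simulation gives $\CSP(\mathcal{F})\leqslant_T\holant{\mathcal{DEQ}}{\mathcal{F}}$ (Lemma~\ref{lem-self}). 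If $\neq_4$ is \emph{not} realizable, the UPF/merging induction (Lemmas~\ref{factor}--\ref{induction}, Lemma~\ref{468}) shows $\mathcal{F}\subseteq\mathcal{B}\subseteq\mathscr{P}$, which is tractable. Your proposal conflates the roles of UPF (which only drives the $\neq_4$-unrealizable branch) and \#CSP simulation (which only drives the other branch), and it has no substitute for the orthogonality/mating step that sets up the dichotomy in the first place.
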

 The explicit  tractability criterion  will be stated  
 in terms of certain constraint function sets to be defined in Section~\ref{prelim}, where we give some preliminaries.
 The formal statement of the dichotomy theorem (Theorem~\ref{main}) and a proof outline will be given in Section~\ref{sec-main}.
The most technical part of the proof is an induction that guarantees
a suitable interplay between the merging operation and 
 unique factorization.  However this inductive proof
only works when the arity (i.e., the number of 
input variables) of these constraint functions is sufficiently high
(arity $\geqslant 10$).  For lower arity cases, we prove it separately. 
We will give the inductive proof and handle lower arity cases in Section \ref{no-neq_4}.
In addition, we discover a novel connection between 
\#CSP and \#EO problems, which will be introduced in Section \ref{neq_4}.
We establish this connection by a  simulation of \#CSP  by \#EO 
that is global in nature.
Owing to
the restriction that the support of every constraint function in  \#EO problems is on half weighted inputs, (provably) no local replacement reduction
can work. 
One crucial ingredient of the proof  in Section \ref{neq_4} is the determination of all half weighted affine
linear subspaces over $\mathbb{Z}_2$. This result should be of independent interest. 
This determination
uses techniques including M\"obius inversion.



\section{Preliminaries}\label{prelim}
\subsection{Definitions and notations}
A constraint function $f$, or a signature,   of arity $m$ is a map $\mathbb{Z}_2^{m} \rightarrow \mathbb{C}$.
In this paper we assume all signatures have positive arities, and since we study weighted Eulerian orientation problems, we may assume each signature $f$ has an even arity, say
$2n$, for some $n = n(f) >0$.
We use $f^\alpha$ to denote $f(\alpha)$.
The support  of a signature $f$ of
arity $2n$ is 
$\mathscr{S}(f) = \{\alpha \in \mathbb{Z}_2^{2n} \mid f^\alpha \neq 0\}$.
 If $\mathscr{S}(f)=\emptyset$, i.e., $f$ is identically $0$, we say $f$ is a zero signature and denote it by $f\equiv 0$.
Otherwise, $f$ is a nonzero signature.
We use ${\rm wt}({\alpha})$ to denote the Hamming weight of $\alpha\in \mathbb{Z}_2^{2n}$.
Let $\mathscr{H}_{2n}=\{\alpha \in \mathbb{Z}_2^{2n} \mid
 {\rm wt}(\alpha)= n \}$.
Note that if $\alpha$ belongs to   $\mathscr{H}_{2n}$ then its complement string
$\overline{\alpha}$ also belongs to 
$\mathscr{H}_{2n}$. 
A signature $f$ of arity $2n$ is 
 an \emph{Eulerian orientation} (EO) signature
 if $\mathscr{S}(f) \subseteq \mathscr{H}_{2n}$.
Let $\mathcal{F}$ be any fixed set of  EO signatures. 
An (EO-)signature grid
$\Omega=(G, \pi)$ over $\mathcal{F}$
 is a tuple, where $G = (V,E)$
is an Eulerian graph without isolated vertex (i.e., every vertex has positive even degree),
 $\pi$ labels each $v\in V$ with a signature
$f_v\in\mathcal{F}$ of arity ${\operatorname{deg}(v)}$,
and labels the incident edges
$E(v)$ at $v$ with input variables of $f_v$.
For any Eulerian graph $G$, let ${\rm EO}(G)$
be the set of all Eulerian orientations of $G$.
We view each edge as having two ends, and  an  orientation of the edge  is denoted by assigning 
0 to the head and 1 to the tail.  
An Eulerian orientation corresponds to
an assignment to the ends of each edge where the numbers of 0's and 1's at
each $v$ are equal. Then
a vertex $v$ contributes a weight by the local constraint function
$f_v$ evaluated according to the local assignment.
Each $\sigma \in {\rm EO}(G)$ gives an evaluation $\prod_{v\in V}f_v(\sigma|_{E(v)})$, where $\sigma|_{E(v)}$ assigns
$0$ to an incoming edge and $1$ to an outgoing edge.

\begin{definition}[{\rm \#EO} problems]
A {\rm \#EO} problem {\rm \#EO}$(\mathcal{\mathcal{F}})$
 specified by a  set  $\mathcal{F}$ of {\rm EO}
signatures  is the following:
The input is an {\rm (EO-)}signature grid $\Omega=(G, \pi)$ over $\mathcal{F}$;
the output is
the partition function of $\Omega$,
$${\rm\#EO}_{\Omega}=\sum\limits_{\sigma \in {\rm EO}(G)}\prod_{v\in V}f_v(\sigma|_{E_{(v)}}).$$
\end{definition}
\begin{example}[Unweighted {\rm \#EO} problem]
Let $\mathcal{F}=\{f_2, f_4, \ldots f_{2n}, \ldots\},$ where $f_{2n}^\alpha = 1$ when ${\rm wt}(\alpha)=n$ and $f_{2n}^\alpha =0$ otherwise. 
Then {\rm  \#EO}$(\mathcal{\mathcal{F}})$  counts the number of Eulerian orientations.
\end{example}

\begin{example}[Six-vertex models] 
Let $f_{\rm six}$ be an  {\rm EO} signature of arity $4$, where $f_{\rm six}^{0011}=f_{\rm six}^{1100}=a$, $f_{\rm six}^{0101}=f_{\rm six}^{1010}=b$, $f_{\rm six}^{0110}=f_{\rm six}^{1001}=c$, (where $a, b, c \in \mathbb{R}^+$). Then {\rm \#EO}$(f_{\rm six})$ is the  classical six-vertex model satisfying  {\sc ars}
with real parameters $(a, b, c)$.
\end{example}

{\rm \#EO} problems can be viewed as special cases of Holant problems.
A (general) signature grid
$\Omega=(G, \pi)$ over a set of  arbitrary (not necessarily EO) signatures $\mathcal{F}$
 is a tuple as before, where $G$ is a graph,
 each vertex $v$ is assigned some $f_v \in \mathcal{F}$ of arity
$\deg(v)$, with incident edges $E(v)$ labeled as input variables.
We consider all 0-1 edge assignments $\sigma: E(G)\rightarrow\{0, 1\}$.
Each  $\sigma$ gives an evaluation
$\prod_{v\in V(G)}f_v(\sigma|_{E(v)})$, where $\sigma|_{E(v)}$
denotes the restriction of $\sigma$ to $E(v)$. 

\begin{definition}[Holant problems]
The input to the problem {\rm Holant}$(\mathcal{F})$
is a signature grid $\Omega=(G, \pi)$ over $\mathcal{F}$.
The output is  the partition function,
$$\Holant_{\Omega} =\sum\limits_{\sigma:E(G)\rightarrow\{0, 1\}}\prod_{v\in V(G)}f_v(\sigma|_{E_{(v)}}).$$
The bipartite Holant problems $\holant{\mathcal{F}}{\mathcal{G}}$
are {\rm Holant} problems
 over bipartite graphs $H = (U,V,E)$,
where each vertex in $U$ or $V$ is labeled by a signature in $\mathcal{F}$ or $\mathcal{G}$ respectively.
\end{definition}
We use $\leqslant_T$ (and $\equiv_T$) to denote polynomial-time Turing reductions (and
equivalences, respectively).
In this paper without loss of generality we assume all signatures have positive arity, i.e., any signature grid $G$ has no isolated vertices.
We use $\neq_2$ to denote the binary  {\sc Disequality} signature
with truth table $(f^{00}, f^{01}, f^{10}, f^{11})
= (0, 1, 1, 0)$. It can also be expressed by the matrix $$N=\left[\begin{matrix}
0 & 1 \\
 1 & 0
\end{matrix}\right]$$ with one variable indexing rows  and the other
indexing
 columns respectively.

\begin{lemma}\label{eo=holant}
{\rm \#EO}$({\mathcal{F}})\equiv_T \Holant(\neq_2 \mid \mathcal{F}).$ 
\end{lemma}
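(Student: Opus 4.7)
The plan is to establish the Turing equivalence by exhibiting explicit structural reductions in both directions, each of which is a local graph manipulation that preserves the partition function on the nose; no inequalities or gadget algebra is needed.

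For the forward direction $\#\text{EO}(\mathcal{F}) \leqslant_T \Holant(\neq_2 \mid \mathcal{F})$, I would take an EO-signature grid $\Omega = (G,\pi)$ and construct a bipartite signature grid $\Omega' = (G', \pi')$ for $\Holant(\neq_2 \mid \mathcal{F})$ by subdividing every edge of $G$. That is, for each $e = \{u,v\} \in E(G)$, introduce a new vertex $w_e$ of degree $2$, replace $e$ by the two edges $\{u,w_e\}$ and $\{w_e,v\}$, assign the signature $\neq_2$ to $w_e$, and keep the original signature $f_v \in \mathcal{F}$ at each $v \in V(G)$ (including the incident-edge labeling). The resulting graph is bipartite with $\{w_e\}$ on the $\neq_2$ side and $V(G)$ on the $\mathcal{F}$ side. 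A $0/1$ edge assignment on $G'$ that makes every $\neq_2$ nonzero is exactly a choice of opposite labels on the two ``ends'' of each original edge of $G$, which is precisely an orientation of $G$; the value contributed by $v$ under $f_v$ depends only on which ends are $0$ and which are $1$, so the Holant sum reduces term-by-term to the $\#\text{EO}$ sum (only Eulerian orientations contribute, since each $f_v$ is supported on $\mathscr{H}_{\deg(v)}$). Hence $\#\text{EO}_\Omega = \Holant_{\Omega'}$, and the construction is clearly polynomial-time.

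For the reverse direction $\Holant(\neq_2 \mid \mathcal{F}) \leqslant_T \#\text{EO}(\mathcal{F})$, I would invert the above: given a bipartite signature grid $\Omega' = (G', \pi')$ for $\Holant(\neq_2 \mid \mathcal{F})$, every vertex on the $\neq_2$ side has arity exactly $2$, so I can suppress (contract) each such vertex $w$, identifying its two incident edges into a single edge of a new graph $G$ (possibly a loop if both edges of $w$ lead to the same vertex of $\mathcal{F}$, or a parallel edge). Each vertex of $G$ carries an $\mathcal{F}$-signature of its original (even) arity, so $G$ is Eulerian and $\Omega = (G,\pi)$ is a valid EO-signature grid. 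Under the bijection between edge assignments of $G'$ making every $\neq_2$ nonzero and orientations of $G$, the local weight at each $\mathcal{F}$-vertex is preserved; again only Eulerian orientations contribute, so $\Holant_{\Omega'} = \#\text{EO}_\Omega$.

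Since both directions are purely syntactic, polynomial-time, and partition-function-preserving (in fact many-one reductions), the two problems are Turing equivalent. There is no real obstacle here: the only thing to check carefully is that the correspondence between $0/1$-assignments to the ends of an edge that satisfy $\neq_2$ and the two orientations of that edge is consistent with the convention $0 =$ head, $1 =$ tail at each endpoint, and that degrees/parities work out so the resulting graph in the backward reduction is genuinely Eulerian, which is immediate from the EO assumption on $\mathcal{F}$.
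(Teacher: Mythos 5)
Your proof is correct and follows essentially the same route as the paper: subdivide every edge of $G$, place $\neq_2$ at the new midpoint to obtain the edge-vertex incidence graph, observe the term-by-term correspondence between the two partition functions, and note the reverse direction is obtained by suppressing the degree-2 $\neq_2$-vertices. The paper states this more tersely (``The process is obviously reversible'') but the content is identical.
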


\begin{proof}
If  $\Omega=(G, \pi)$ is an instance of {\rm \#EO}$({\mathcal{F}})$,
we add a middle  vertex on each edge  of $G$ and label it  by $\neq_2$.
This defines  an instance $\Omega'$ of $\Holant(\neq_2 \mid \mathcal{F})$
with a  bipartite graph $H$ (which is the edge-vertex incidence graph of $G$), where every edge  of $G$ is
 broken into two. There is a 1-1 correspondence of the terms  in the partition functions
${\rm\#EO}_{\Omega}$ and 
$\Holant_{\Omega'}$. The process is obviously reversible.
%
\end{proof}



Counting constraint satisfaction problems (\#CSP) can also be expressed as Holant problems (Lemma 1.2 in \cite{jcbook}).
We use $=_n$ to denote the {\sc Equality} signature of arity $n$,
which takes value $1$ on the all-0 or all-1 inputs,
and $0$ elsewhere.
Let $\mathcal{EQ}=\{=_1, =_2, \ldots, =_n, \ldots\}$ denote the set of all {\sc Equality} signatures.

\begin{lemma}
$\CSP(\mathcal{F})\equiv_T \holant{\mathcal{EQ}}{\mathcal{F}}$.
\end{lemma}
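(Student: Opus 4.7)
The plan is to prove both directions by an explicit construction that makes the bipartite Holant graph encode a \#CSP instance via its variable-constraint incidence structure.

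For the forward reduction $\CSP(\mathcal{F}) \leqslant_T \holant{\mathcal{EQ}}{\mathcal{F}}$, I would take an arbitrary \#CSP instance with variables $x_1, \ldots, x_n$ and constraints $C_1, \ldots, C_m$, where each $C_j$ applies some $f_j \in \mathcal{F}$ to a tuple of variables. I build a bipartite graph $H = (U, V, E)$ as follows: for each variable $x_i$, put a vertex $u_i \in U$ labeled with the equality signature $=_{d_i}$, where $d_i$ is the total number of occurrences of $x_i$ across all constraints (taking $=_1$ when $x_i$ occurs once, or introducing a dummy occurrence if $d_i = 0$, since we need signatures of positive arity). For each constraint $C_j = f_j(x_{i_1}, \ldots, x_{i_k})$, put a vertex $v_j \in V$ labeled by $f_j$, and add an edge between $v_j$ and $u_{i_\ell}$ for each $\ell$, labeled by the $\ell$-th variable slot of $f_j$.

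The correspondence of partition functions is immediate: an edge assignment $\sigma: E \to \{0,1\}$ contributes a nonzero term only if for each $u_i$ all incident edges receive the same bit — which by the semantics of $=_{d_i}$ forces a single value $\sigma(x_i) \in \{0,1\}$ — in which case the contribution at $v_j$ is exactly $f_j(\sigma(x_{i_1}), \ldots, \sigma(x_{i_k}))$. Summing over $\sigma$ reproduces $\sum_{x_1, \ldots, x_n} \prod_j C_j$, which is the \#CSP value.

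For the reverse reduction $\holant{\mathcal{EQ}}{\mathcal{F}} \leqslant_T \CSP(\mathcal{F})$, given an instance $\Omega = (H, \pi)$ with $H = (U, V, E)$, I introduce one \#CSP variable $x_u$ for each $u \in U$ (regardless of its equality arity). For each $v \in V$ labeled by $f_v \in \mathcal{F}$ of arity $k$, with incident edges going to $u_{i_1}, \ldots, u_{i_k} \in U$ (with repetition allowed for parallel edges), I create the \#CSP constraint $f_v(x_{u_{i_1}}, \ldots, x_{u_{i_k}})$. Because an equality vertex $u \in U$ forces all its incident edges to a common bit, edge assignments $\sigma$ with nonzero weight are in bijection with Boolean variable assignments $(x_u)_{u \in U}$, and the two partition functions agree term-by-term.

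Neither direction is subtle; the only minor wrinkle is bookkeeping for variables whose total degree is zero or one in the \#CSP instance, which is handled by inserting $=_1$ (the trivial unary constantly equal to $1$) or by adding a dummy edge that is later marginalized. Since the transformations in both directions are linear-size and preserve all values, this yields polynomial-time Turing equivalence.
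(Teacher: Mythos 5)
Your proposal is correct and is exactly the standard argument the paper defers to by citation (``Lemma 1.2 in \cite{jcbook}'') rather than reproving: variables become $\mathcal{EQ}$-vertices on the left, constraints become $\mathcal{F}$-vertices on the right, and the bipartite Holant graph is precisely the variable--constraint incidence graph, so the two partition functions agree term by term. The one small point you flag (variables of degree $0$) is most cleanly handled not by a dummy edge but by simply multiplying the Holant output by $2^{z}$ where $z$ is the number of unused variables, which is legitimate under a Turing reduction; degree-$1$ variables are already covered since $(=_1)=(1,1)\in\mathcal{EQ}$ and the paper's convention permits degree-$1$ vertices.
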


A signature $f$ of arity $n \geqslant 2$ can be expressed as 
a $4 \times 2^{n-2}$ matrix $M_{(i j)}(f)$, 
which lists  the $2^n$ values of $f$ with variables $(x_i, x_j)
\in \{0, 1\}^2$ as row index and the assignments of the other
 $n-2$ variables as column index, both in lexicographic order.
That is,
$$M_{(i j)}(f)=\begin{bmatrix}
f^{00,00\ldots0} & f^{00, 00\ldots1} & \ldots & f^{00, 11\ldots1}\\
f^{01,00\ldots0} & f^{01, 00\ldots1} & \ldots & f^{01, 11\ldots1}\\
f^{10,00\ldots0} & f^{10, 00\ldots1} & \ldots & f^{10, 11\ldots1}\\
f^{11,00\ldots0} & f^{11, 00\ldots1} & \ldots & f^{11, 11\ldots1}\\
\end{bmatrix}=\begin{bmatrix}
{\bf {f}}^{00}_{ij}\\
{\bf {f}}^{01}_{ij}\\
{\bf {f}}^{10}_{ij}\\
{\bf {f}}^{11}_{ij}\\
\end{bmatrix}.$$
We use $ {\bf {f}}^{ab}_{ij}$ 
to denote the row vector indexed by $(x_i, x_j)=(a, b)$ in $M_{(i j)}(f)$, $\overline{{\bf {f}}^{ab}_{ij}}$ denotes
its conjugate, 
and ${{\bf {f}}^{ab}_{ij}}^{\tt T}$ denotes its transpose.
Consider the reversal vector  ${\bf {f}}^{ab}_{ij}N^{\otimes (n-2)}$
 of ${\bf {f}}^{ab}_{ij}$. 
We have
\begin{equation*}
    {\bf {f}}^{ab}_{ij}N^{\otimes (n-2)}=(f^{ab,11\ldots1}, f^{ab, 11\ldots0}, \ldots, f^{ab, 00\ldots0})\\=(\overline{f^{\bar a\bar b,00\ldots0}}, \overline{f^{\bar a\bar b,00\ldots1}}, \ldots, \overline{f^{\bar a\bar b, 11\ldots1}})=\overline{{\bf {f}}^{\bar a\bar b}_{ij}}.
\end{equation*}
The second equality holds by {\sc ars}. By taking transpose, we have $N^{\otimes (n-2)}{{\bf {f}}^{ab}_{ij}}^{\tt T}={\overline{{\bf {f}}^{\bar a\bar b}_{ij}}}^{\tt T}$.
When $(ij)$ is clear from the context, we omit the subscript $(ij)$.


We generalize the notion of binary {\sc Disequality} to signatures of 
higher arities. 
A signature $f$ of arity $2n$ is called a  {\sc Disequality} signature of arity $2n$, denoted by $\neq_{2n}$, if $f =1$ when $x_1=x_2=\ldots=x_n\neq x_{n+1}=x_{n+2}=\ldots=x_{2n},$
and 0 otherwise. By permuting its variables
the  {\sc Disequality} signature  of arity $2n$ also defines
${2n \choose n}/2$ functions which we also call {\sc Disequality} signatures.
Alternatively we can define a {\sc Disequality} signature as follows. 
A signature $f$ of arity $2n$ is a  {\sc Disequality} signature
if $f^\alpha=f^{\bar \alpha}=1$ for some $\alpha$ with ${\rm wt}(\alpha)=n$, and $f^\beta=0$ elsewhere. Since there are ${2n \choose n}/2$ many possible ways to choose a pair $\{\alpha, \bar \alpha\}$, there are ${2n \choose n}/2$ many {\sc Disequality} signatures  of arity $2n$.
They are equivalent for the complexity of Holant problems; once we have one
we have them all, i.e., if $f$ is obtained from
$\ne_{2n}$ by relabeling its variables, then
for any set  $\mathcal{F}$ of EO signatures,
\[{\rm \#EO}(\mathcal{F}, f) \equiv_T {\rm \#EO}(\mathcal{F}, \ne_{2n}).\]
Let $\mathcal{DEQ}=\{\neq_2, \neq_4, \ldots, \neq_{2n}, \ldots \}$ denote the set of all {\sc Disequality} signatures.

\subsection{Gadget construction}
One basic tool used throughout this paper is gadget construction.
An $\mathcal{F}$-gate (or simply a gadget) is similar to a signature grid $(G, \pi)$ for $\Holant(\mathcal{F})$ except that $G = (V,E,D)$ is a graph with internal edges $E$ and dangling edges $D$.
The dangling edges $D$ define input variables for the $\mathcal{F}$-gate.
We order the regular edges in $E$ from  $1$ to $m$, and order the dangling edges in $D$ from $1$ to $n$.
Then the  $\mathcal{F}$-gate  defines a function $f$
\[
f(y_1, \dotsc, y_n) = \sum_{\sigma: E \rightarrow\{0, 1\}} \prod_{v\in V}f_v(\hat{\sigma}\mid_{E(v)}),
\]
where $(y_1, \dotsc, y_n) \in \{0, 1\}^n$ is an assignment on the dangling edges, $\hat{\sigma}$ is the extension of  $\sigma$ on $E$ by the assignment $(y_1, \ldots, y_n)$, and $f_v$ is the signature assigned at each vertex $v \in V$.
This function $f$ is called the signature of the $\mathcal{F}$-gate.
There may be no internal edges in an $\mathcal{F}$-gate. In this case, $f$ is simply a tensor product of these signatures $f_v$, i.e., $f={\bigotimes}_{v\in V}f_v$ (with possibly a permutation of variables).
We say a signature $f$ is \emph{realizable} from a signature set $\mathcal{F}$ by gadget construction
if $f$ is the signature of an 
 $\mathcal{F}$-gate. 
 In the following, when we describe a  gadget construction we identify a vertex in
 the  graph   $G = (V,E,D)$ for the gadget with the  signature labeling that vertex.

If $f$ is realizable from a set $\mathcal{F}$,
then we can freely include $f$  in the set of signatures while preserving the complexity (Lemma 1.3 in \cite{jcbook}). 
Note that by Lemma \ref{eo=holant}, in the setting of \#EO($\mathcal{F}$) problems (i.e., for \holant{\neq}{\mathcal{F}}), every edge in a gadget construction is effectively labeled by $\neq_2$.

A basic gadget construction is \emph{merging}. Given a signature $f$ of arity $n$, we can connect two variables $x_i$ and $x_j$ of $f$ using $\neq_2$, and it gives a signature of arity $n-2$. We use $\partial_{(ij)}f$ to denote this signature and $\partial_{(ij)}f=f^{01}_{ij}+f^{10}_{ij}$, where $f^{ab}_{ij}$ denotes the signature obtained by setting $(x_i, x_j)=(a, b)\in \{0, 1\}^2$. We use
$f^{ab}_{ij}$ to denote a function, and ${\bf f}^{ab}_{ij}$ to denote a vector that lists the truth table of $f^{ab}_{ij}$ in a given order.
We may further merge variables $x_u$ and $x_v$ of $\partial_{(ij)}f$ for any $\{u, v\}$ disjoint with $\{i, j\}$, and we use $\partial_{(uv)(ij)}f=\partial_{(uv)}(\partial_{(ij)}f)$ to denote the realized signature. 
Note that these two merging operations commute, $\partial_{(uv)(ij)}f=\partial_{(ij)(uv)}f$.  This is illustrated in the following commutative diagram.
\begin{equation*}
\begin{tikzpicture}[every node/.style={midway}]
  \matrix[column sep={8em,between origins}, row sep={4em}] at (0,0) {
    \node(f) {$f$}; & \node[align=left](f') {$\partial_{(ij)}f$}; \\
    \node(P) {$\partial_{(uv)}f$}; & \node[align=left](P') {\hspace{15ex}$\partial_{(uv)(ij)}f=\partial_{(ij)(uv)}f$};\\
  };
  \draw[<-] (P) -- (f) node[anchor=east]  {$\partial_{(uv)}$};
  \draw[->] (f) -- (f') node[anchor=south] {$\partial_{(ij)}$};
  \draw[->] (f') -- (P') node[anchor=west] {$\partial_{(uv)}$};
  \draw[ shorten >=-15ex, shorten <=7.5ex, ->] (P) -- (P') node[anchor=north] {\hspace{21ex}$\partial_{(ij)}$};
\end{tikzpicture}
\end{equation*}
(We adopt the notation $\partial$ for the similarity
of the merging operation  with taking partial derivatives.
They both reduce the number of variables, they both are linear,
and under mild smoothness conditions
we know $\frac{\partial^2 f}{\partial x \partial y}
= \frac{\partial^2 f}{\partial y \partial x}$. 
The 
commutativity of $\partial_{(ij)}f$ and $\partial_{(uv)}f$ is a key property in our proof.)
If by merging any two variables of $f$, we can only realize the zero signature, then we show that $f$ itself is ``almost'' a zero signature.

\begin{lemma}\label{zero}
Let $f$ be a signature of arity $n\geqslant 3$. If for every pair  of 
distinct indices $\{i, j\}$, by merging variables $x_i$ and $x_j$ of $f$ we have $\partial_{(ij)}f\equiv0$, then $f^\alpha=0$ for any $\alpha$ with $0 < {\rm wt}(\alpha)< n$. In particular, if $f$ is an {\rm EO} signature, then $f\equiv 0$.
\end{lemma}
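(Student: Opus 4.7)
\medskip

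\noindent\textbf{Proof proposal.} The plan is to turn the hypothesis into a simple sign-flip rule and then close a three-cycle of swaps.

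First, I would unpack what $\partial_{(ij)}f \equiv 0$ really says. By definition $\partial_{(ij)}f = f^{01}_{ij} + f^{10}_{ij}$, so $\partial_{(ij)}f \equiv 0$ is equivalent to the pointwise identity $f(x_i=0,x_j=1,\beta) = -f(x_i=1,x_j=0,\beta)$ for every partial assignment $\beta$ to the remaining $n-2$ variables. In more structural terms: if $\alpha,\alpha' \in \{0,1\}^n$ agree outside positions $\{i,j\}$ and satisfy $\alpha_i \neq \alpha_j$, then $f^{\alpha} = -f^{\alpha'}$. Thus the hypothesis says that $f$ is \emph{antisymmetric} under any transposition of two coordinates that actually changes the input string.

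Now fix $\alpha$ with $0 < \mathrm{wt}(\alpha) < n$. Then $\alpha$ has at least one $0$-coordinate, say at position $a$, and at least one $1$-coordinate, say at position $b$. Since $n \geqslant 3$, a third index $c$ exists; by symmetry I may assume $\alpha_c = 0$ (the case $\alpha_c = 1$ is handled identically with $0$'s and $1$'s interchanged). I then apply, in order, the three transpositions $(b,c)$, $(a,c)$, $(a,b)$ to the triple $(\alpha_a,\alpha_b,\alpha_c) = (0,1,0)$, tracking the values at the other positions (which never change). The triple evolves as
\begin{equation*}
(0,1,0) \;\longrightarrow\; (0,0,1) \;\longrightarrow\; (1,0,0) \;\longrightarrow\; (0,1,0),
\end{equation*}
so at each step the two swapped coordinates carry distinct values (the sign-flip rule applies), and after three steps we return to $\alpha$. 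Composing, $f^{\alpha} = (-1)^3 f^{\alpha} = -f^{\alpha}$, so $f^{\alpha} = 0$.

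For the final sentence, if $f$ is an EO signature, its arity $n$ is even, hence $n \geqslant 4$; by definition $f^{\alpha}=0$ whenever $\mathrm{wt}(\alpha) \neq n/2$, and the case $\mathrm{wt}(\alpha)=n/2$ satisfies $0 < \mathrm{wt}(\alpha) < n$ and so is also killed by the argument above. Therefore $f \equiv 0$.

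I do not anticipate a real obstacle: the whole argument is a one-step unpacking of the hypothesis followed by a single three-cycle calculation, and the slight case split on $\alpha_c \in \{0,1\}$ is symmetric. The only thing to be careful about is using three transpositions (an odd number) whose successive states each have distinct values at the positions being swapped, which is exactly what the $(b,c),(a,c),(a,b)$ choice ensures.
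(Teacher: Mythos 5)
Your proof is correct and takes essentially the same approach as the paper's: the paper also picks three positions where $\alpha$ restricts to a weight-one (or weight-two) pattern, writes down the three vanishing-merge equations among $f^{001\delta}$, $f^{010\delta}$, $f^{100\delta}$, and notes the $3\times 3$ system has only the trivial solution. Your antisymmetry / three-cycle formulation ($f^{\alpha} = -f^{\alpha}$ after an odd number of sign-flipping transpositions) is just a cleaner repackaging of exactly that linear system.
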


\begin{proof}
Suppose there exists some $\alpha$, where $0 <  {\rm wt}(\alpha) < n$, such that $f^{\alpha}\neq 0$.
Since $\alpha$ is not all-0 nor all-1 and $\alpha$ has length at least $3$, we can find three bits in some order such that on these three bits, $\alpha$ takes value $001$ or $110$.
Without loss of generality, we assume they are the first three bits of $\alpha$ and we denote $\alpha$ by $001\delta$ or $110\delta$ ($\delta$ maybe empty). 
We first consider the case that $\alpha=001\delta$.
Consider another two strings $\beta=010\delta$ and $\gamma=100\delta$. 
Note that if we merge variables $x_1$ and $x_2$ of $f$, we get $\partial_{(12)}f$, 
its entry $(\partial_{(12)}f)^{0\delta}$ on the input $0\delta$ 
(for bit positions 3 to $n$) is the sum of $f^{010\delta}$ and $f^{100\delta}$. 
Since $\partial_{(12)}f\equiv0$, we have $$f^{010\delta}+f^{100\delta}=0.$$
Similarly, by merging variables $x_1$ and $x_3$, we have $$f^{001\delta}+f^{100\delta}=0,$$ and by merging variables $x_2$ and $x_3$, we have $$f^{001\delta}+f^{010\delta}=0.$$
These three equations have only a trivial solution, $f^{001\delta}=f^{010\delta}=f^{100\delta}=0$. A contradiction.

If $\alpha=110\delta$, the proof is symmetric.
\end{proof}

The following lemma  makes sure that a realized signature 
is suitable for \#EO problems.
\begin{lemma}\label{lemma:eo-gates}
Any signature realizable from a set $\mathcal{F}$ of {\rm EO} signatures satisfying   {\sc ars} is also an {\rm EO} signature  satisfying   {\sc ars}. 
\end{lemma}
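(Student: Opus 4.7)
My plan is to verify the two properties separately: (i) the support of the realized signature $f$ lies in $\mathscr{H}_{2n}$, and (ii) $f$ satisfies {\sc ars}.

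For (i), I will argue by a double counting of 0's at edge-ends in any assignment that gives a nonzero contribution. Let the gadget have vertex set $V$, internal edges $E$, and dangling edges $D$ with $|D|=n$; by the remark after Lemma~\ref{eo=holant}, each internal edge is effectively labeled by $\neq_2$. Fix $y=(y_1,\dotsc,y_n)\in\{0,1\}^n$ with $f(y)\neq 0$, and pick an internal assignment $\sigma$ making the product $\prod_v f_v(\hat\sigma|_{E(v)})$ nonzero. Since each $f_v$ is EO, the inputs it sees contain exactly $\deg(v)/2$ zeros, so the total number of 0's appearing at all edge-ends equals $\sum_v \deg(v)/2 = (2|E|+n)/2 = |E|+n/2$. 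On the other hand, every internal edge carries a $\neq_2$ and thus contributes exactly one 0 across its two ends, accounting for $|E|$ zeros; the remaining $n/2$ zeros must come from the dangling ends, i.e., exactly $n/2$ of the $y_i$'s are 0. Hence $y\in\mathscr{H}_n$, so $f$ is an EO signature (in particular $n$ is even).

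For (ii), the key observation is that $\neq_2$ is invariant under complementing both its variables, so the involution $\sigma\mapsto\bar\sigma$ on internal-edge assignments is a bijection preserving the $\neq_2$ constraints. Writing $\hat\sigma_y$ for the extension by $y$, we have $\overline{\hat\sigma_y}=\hat{\bar\sigma}_{\bar y}$ at each edge-end. Using {\sc ars} of each $f_v$,
\[
f(\bar y)=\sum_\sigma\prod_{v\in V}f_v(\hat\sigma_{\bar y}|_{E(v)})
=\sum_\sigma\prod_{v\in V}f_v\!\left(\overline{\hat{\bar\sigma}_y|_{E(v)}}\right)
=\sum_\sigma\prod_{v\in V}\overline{f_v(\hat{\bar\sigma}_y|_{E(v)})}.
\]
Reindexing by $\sigma'=\bar\sigma$ and pulling the conjugation outside the sum-of-products yields $f(\bar y)=\overline{f(y)}$, which is exactly {\sc ars}.

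I do not anticipate any real obstacle here: the argument for the EO property reduces to one linear equation in the counts of 0's, and {\sc ars} follows mechanically from the fact that $\neq_2$ is self-complementary together with {\sc ars} at each vertex. The only mild subtlety is bookkeeping the convention that $\sigma$ records the value at each edge-end (rather than each edge), so that the $\neq_2$ on an internal edge means the two ends receive opposite bits; once this is fixed, both counts and the involution work out cleanly.
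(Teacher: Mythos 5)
Your proof is correct, but it takes a genuinely different route from the paper's. The paper decomposes the $\mathcal{F}$-gate into a tensor product $f_1\otimes\cdots\otimes f_k$ followed by a sequence of $\partial_{(ij)}$ merges (one for each internal edge), then checks that the tensor-product step and each merge step preserve both the EO condition and {\sc ars}; the per-operation verifications are short algebraic observations ($\partial_{(ij)}f=f^{01}_{ij}+f^{10}_{ij}$, and a weight/conjugation check for the tensor product), and the conclusion follows by induction on the sequence of merges. You instead argue globally and all at once: for the EO property you use a double count of $0$'s over all edge-ends in a nonzero contribution, exploiting the facts that each vertex is locally balanced and each internal $\neq_2$ edge carries exactly one $0$, so the dangling ends must account for exactly $n/2$ zeros; for {\sc ars} you apply the complement involution $\sigma\mapsto\bar\sigma$ on internal assignments, which preserves the $\neq_2$ constraints, and then pull out the conjugation. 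Both approaches are valid. The paper's version is modular, reuses the merging machinery that is central to the rest of the paper, and records the fact that each atomic operation already preserves the class; your version is self-contained and arguably more transparent for the EO part, since the edge-end counting is the natural ``handshaking''-style reason the realized signature must be balanced, without appealing to any choice of cutting/merging order. One small caveat worth making explicit in your write-up: the counting argument implicitly shows that if $n$ is odd then $f\equiv 0$ (so it is vacuously EO of odd arity); in the paper's conventions EO signatures have even arity, which is consistent since a nonzero realized signature must indeed have even arity by your count.
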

\begin{proof}
By definition $\partial_{(ij)}f=f^{01}_{ij}+f^{10}_{ij}$.
Hence  for any EO signature satisfying   {\sc ars}, after merging any two variables, the realized signature is still  an EO signature  satisfying   {\sc ars}.
Then, suppose $f$ is  realized by a graph $G$ with dangling
edges and $n$ vertices labeled by signatures $f_1, f_2, \ldots, f_n \in \mathcal{F}$. We first cut all  internal edges in $G$ and  get 
the signature $f'=f_1\otimes f_2 \otimes \cdots \otimes f_n$.
Clearly $f'$ is an EO signature satisfying   {\sc ars}  since all $f_i$ are. Then, $f$ can be realized by \emph{merging} (with $\neq_2$) all cut edges of $f'$ in a sequence.
After each merging operation, the realized signature is an EO signature  satisfying   {\sc ars}, and hence $f$ is an EO signature  satisfying   {\sc ars}.
\end{proof}

Having established Lemma~\ref{lemma:eo-gates}, we have the following reduction. 
\begin{lemma}
If $f$ is realizable from a set $\mathcal{F}$ of {\rm EO} signatures, then ${\rm\#EO}(\{f\}\cup \mathcal{F})\equiv_T\rm{\#EO}(\mathcal{F})$.
\end{lemma}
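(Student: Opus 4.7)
The reduction ${\rm\#EO}(\mathcal{F})\leqslant_T{\rm\#EO}(\{f\}\cup\mathcal{F})$ is immediate since the left-hand problem is a restriction of the right-hand one. So the plan is to establish ${\rm\#EO}(\{f\}\cup\mathcal{F})\leqslant_T{\rm\#EO}(\mathcal{F})$ by the standard gadget-substitution argument, with one wrinkle: we must check that the Eulerian structure required by an \#EO-signature grid is preserved.

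Let $\Gamma$ be an $\mathcal{F}$-gate that realizes $f$, with internal edges $E_\Gamma$, dangling edges $D_\Gamma$, and internal vertices $V_\Gamma$ each labeled by some $f_w\in\mathcal{F}$. Given an input instance $\Omega=(G,\pi)$ of ${\rm\#EO}(\{f\}\cup\mathcal{F})$, form $\Omega'=(G',\pi')$ by replacing each vertex $v$ of $G$ labeled by $f$ with a fresh copy of $\Gamma$, attaching the dangling edges of $\Gamma$ to the edges of $G$ originally incident to $v$ according to the variable ordering used in the definition of $\Gamma$. All newly introduced vertices are labeled by their signatures inside $\Gamma$, and all other vertices retain their labels from $\pi$. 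This produces a signature grid $\Omega'$ over $\mathcal{F}$ whose size is polynomial in the size of $\Omega$.

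To see that $\Omega'$ is a legitimate \#EO-signature grid, note that by Lemma~\ref{lemma:eo-gates}, $f$ is an EO signature, so its arity $2n(f)$ is even; the number of dangling edges of $\Gamma$ is $2n(f)$, so every vertex of $G$ outside the replaced copies keeps exactly its original degree. Each new internal vertex $w$ of a copy of $\Gamma$ has degree equal to the arity of $f_w$, which is even and positive since $f_w\in\mathcal{F}$ consists of EO signatures. Hence $G'$ is Eulerian without isolated vertices, and every label is an EO signature, so $\Omega'$ is a valid instance of ${\rm\#EO}(\mathcal{F})$.

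It remains to verify ${\rm\#EO}_{\Omega'}={\rm\#EO}_\Omega$. This is the content of the gadget composition principle (Lemma~1.3 of \cite{jcbook}), but one can also see it directly. By Lemma~\ref{eo=holant} the \#EO partition function equals the Holant partition function on the edge-vertex incidence graph where every edge carries $\neq_2$. In this bipartite picture, for each copy of $\Gamma$ replacing a vertex $v$, the defining identity
\[
f(y_1,\ldots,y_{2n(f)})=\sum_{\sigma:E_\Gamma\to\{0,1\}}\prod_{w\in V_\Gamma} f_w(\hat{\sigma}|_{E(w)})
\]
lets us exchange the contribution $f(\sigma|_{E(v)})$ of the $v$-vertex in $\Omega$ with the sum over internal assignments of the corresponding copy of $\Gamma$ in $\Omega'$, where the fixed boundary assignment of the dangling edges is precisely $\sigma|_{E(v)}$. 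Summing over all Eulerian orientations of $G$ and applying this identity independently at each replaced vertex factorizes the sum-of-products for $\Omega$ into the sum-of-products for $\Omega'$; conversely, every Eulerian orientation of $G'$ restricts to an Eulerian orientation of $G$ on the shared edges because the gadget's signatures are EO. The main (and very mild) obstacle is simply this bookkeeping of degrees and orientations, which is handled by the EO property of all signatures involved. Since $\Omega'$ can be built in polynomial time, the desired Turing reduction follows.
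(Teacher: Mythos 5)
Your proof is correct and follows essentially the same route as the paper, which simply invokes the general gadget-substitution principle (Lemma~1.3 of \cite{jcbook}) together with Lemma~\ref{lemma:eo-gates}; you spell out the same argument in detail. One minor imprecision worth flagging: the fact that every Eulerian orientation of $G'$ restricts to an Eulerian orientation of $G$ is a purely combinatorial consequence of conservation of flow (balance at every internal gadget vertex forces balance on the dangling edges by a counting argument), not a consequence of the signatures being EO --- the EO property of the internal $f_w$ is what guarantees that internally-unbalanced orientations contribute zero, so that the sum over ${\rm EO}(G')$ still computes the full gadget function $f$ as defined by a sum over all internal assignments.
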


A particular gadget construction often used in this paper is \emph{mating}. Given an EO signature $f$ of arity $n\geqslant 3$, we connect two copies of $f$ in the following manner:
Fix a set $S$ of $n-2$ variables among all $n$ variables of $f$. For each $x_k\in S$, connect $x_k$ of one copy of $f$ with $x_k$ of the other copy using $\neq_2$. 
The two variables $x_i, x_j$ that are not in $S$ are called dangling variables.
Then, 
this mating construction  realizes a signature of arity $4$, denoted by $\mathfrak m_{ij}f$.
It can be represented by matrix multiplication. 
We have 
\begin{equation}\label{m-form}
\begin{aligned}[b]
M(\mathfrak m_{ij}f)&=M_{(ij)}(f)N^{\otimes (n-2)}M^{\tt T}_{(ij)}(f)\\
&=\begin{bmatrix}
{\bf {f}}^{00}\\
{\bf {f}}^{01}\\
{\bf {f}}^{10}\\
{\bf {f}}^{11}\\
\end{bmatrix}
\left[\begin{matrix}
0 & 1 \\
 1 & 0
\end{matrix}\right]^{\otimes (n-2)} 
\left[\begin{matrix}
{{\bf {f}}^{00}}^{\tt T} &{{\bf {f}}^{01}}^{\tt T}
&{{\bf {f}}^{10}}^{\tt T}&{{\bf {f}}^{11}}^{\tt T}
\end{matrix}\right]\\
&=\begin{bmatrix}
{\bf {f}}^{00}\\
{\bf {f}}^{01}\\
{\bf {f}}^{10}\\
{\bf {f}}^{11}\\
\end{bmatrix}
\left[\begin{matrix}
{\overline{{\bf {f}}^{11}}}^{\tt T} &{\overline{{\bf {f}}^{10}}}^{\tt T}
&{\overline{{\bf {f}}^{01}}}^{\tt T}&{\overline{{\bf {f}}^{00}}}^{\tt T}
\end{matrix}\right]\\
&=\left[\begin{matrix}
0 & 0 & 0& |{\bf f}^{00}|^2\\
0 & \langle{\bf f}^{01}, {\bf f}^{10}\rangle & |{\bf f}^{01}|^2 & 0\\
0 & |{\bf f}^{10}|^2 & \langle{\bf f}^{10}, {\bf f}^{01}\rangle  & 0\\
|{\bf f}^{11}|^2 & 0 & 0 & 0
\end{matrix}\right],\\
\end{aligned}
\end{equation}
where $\langle \cdot, \cdot\rangle$ denotes the (complex) inner product and $|\cdot|$ denotes the  norm defined by this inner product.
Note that $|\langle{\bf f}^{01}, {\bf f}^{10}\rangle|^2\leqslant|{\bf f}^{01}|^2|{\bf f}^{10}|^2$ by Cauchy-Schwarz inequality.

\subsection{Signature factorization}
Recall that in this paper
all signatures have positive arity.
A nonzero signature $g$ \emph{divides} $f$, denoted by $g \mid f$, if there is a signature $h$ such that  $f=g \otimes h$ (with possibly a permutation of variables)
or there is a constant $\lambda$ such that $f= \lambda \cdot g$.
In the latter case, if $\lambda \neq 0$, then we also have $f \mid g$ since $g= \frac{1}{\lambda} \cdot f$.
For nonzero signatures, if both $g\mid f$ and $f \mid g$, then they are nonzero constant multiples of
each other, and 
we say $g$ is an \emph{associate} of $f$, 
denoted by $g \sim f$.
In terms of  this division relation, we can define \emph{irreducible} signatures and \emph{prime} signatures. We will show that they are equivalent, and this gives us the \emph{unique prime factorization} of signatures\footnote{The factorization of signatures is synonymous with the decomposition of multipartite quantum states in quantum information theory. 
There, the uniqueness of decomposition is usually assumed as a common knowledge.
To our best knowledge, we are not aware of any formal proof.}.

\begin{definition}[Irreducible signatures]
A nonzero signature $f$ is irreducible if $g \mid f$ implies that $g \sim f$.
We say a signature is reducible if it is not irreducible or it is a zero signature.
By definition, if a signature $f$ of arity greater than 1 is reducible, then there is a factorization
$f = g \otimes h$,
for some signatures $g$ and $h$ (of positive arities). 
\end{definition}

\begin{definition}[Prime signatures]
A nonzero signature $f$ is a prime signature, if
for any nonzero signatures $g$ and  $h$,
$f \mid g \otimes h$ 
implies that $f\mid g$ or $f \mid h$. 
\end{definition}

\begin{lemma}\label{prime=irreducible}
The notions of  irreducible signatures and prime signatures are equivalent.
\end{lemma}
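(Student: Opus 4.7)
The plan is to prove both directions of the equivalence separately. The forward direction, prime $\Rightarrow$ irreducible, is a short arity count: suppose $f$ is prime and $g\mid f$, so either $f=\lambda g$ with $\lambda\neq 0$ (giving $g\sim f$ immediately) or $f=g\otimes h$ for some positive-arity $h$. The latter leads to a contradiction: $f\mid g\otimes h$ trivially, so by primality $f\mid g$ or $f\mid h$, but each would force $\operatorname{arity}(g)\geqslant\operatorname{arity}(f)$ or $\operatorname{arity}(h)\geqslant\operatorname{arity}(f)$, impossible when $\operatorname{arity}(f)=\operatorname{arity}(g)+\operatorname{arity}(h)$ with both summands $\geqslant 1$. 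Thus $g\sim f$, and $f$ is irreducible.

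The reverse direction, irreducible $\Rightarrow$ prime, is the substantial part and uses a tensor-factorization argument. Assume $f$ is irreducible and $f\mid g\otimes h$ with $g,h$ both nonzero. Then $g\otimes h=f\otimes k$ for some nonzero signature $k$ (the alternative $g\otimes h=\lambda f$ with $\lambda\neq 0$ would give $g\mid f$, hence $g\sim f$ by irreducibility, forcing $\operatorname{arity}(h)=0$, which is excluded). Align the variable sets so that $X\sqcup W=Y\sqcup Z$, where $X,W,Y,Z$ are the variable sets of $f,k,g,h$; let $A=X\cap Y$, $B=X\cap Z$, $C=W\cap Y$, $D=W\cap Z$. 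The target is to show $B=\emptyset$ (so $X\subseteq Y$ and $f\mid g$) or $A=\emptyset$ (so $f\mid h$).

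Suppose for contradiction $|A|,|B|\geqslant 1$. Choose $\alpha\in\{0,1\}^X$ and $\beta\in\{0,1\}^W$ with $f(\alpha)k(\beta)\neq 0$, and split as $\alpha=(\alpha_A,\alpha_B)$, $\beta=(\beta_C,\beta_D)$. The defining identity $g(A,C)h(B,D)=f(A,B)k(C,D)$, after fixing $(B,D)=(\alpha_B,\beta_D)$, gives $g(A,C)=\mu\,f_{\alpha_B}(A)\,k_{\beta_D}(C)$, where $f_{\alpha_B}$ is $f$ partially evaluated on $B$ at $\alpha_B$ (a signature on $A$); the symmetric fixing $(A,C)=(\alpha_A,\beta_C)$ gives $h(B,D)=\nu\,f_{\alpha_A}(B)\,k_{\beta_C}(D)$. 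Substituting both back into $g\otimes h=f\otimes k$ and fixing $(C,D)=\beta$ yields
\[
f(A,B)=\mu\nu\,k(\beta)\cdot f_{\alpha_B}(A)\,f_{\alpha_A}(B),
\]
a nontrivial tensor factorization of $f$ into two positive-arity pieces on $A$ and $B$, contradicting irreducibility. The remaining case $B=\emptyset$ is dispatched directly: fixing $D$ at a nonvanishing point of $h$ in the defining identity reads off $g=\lambda'(f\otimes k')$ for a partial evaluation $k'$ of $k$, hence $f\mid g$ (with the degenerate sub-case $C=\emptyset$ giving $g=\lambda'f$); the case $A=\emptyset$ is symmetric. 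The main obstacle is the bookkeeping above: coherently choosing $\alpha,\beta$ so that all denominators $h(\alpha_B,\beta_D)$, $g(\alpha_A,\beta_C)$, $k(\beta)$ arising in the successive substitutions are simultaneously nonzero, and verifying that the recombined product $f_{\alpha_B}\otimes f_{\alpha_A}$ actually covers all of $X=A\sqcup B$, delivering an honest tensor factorization of $f$.
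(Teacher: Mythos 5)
Your proof is correct and follows essentially the same approach as the paper's: the forward direction by arity counting, the reverse by partitioning variable scopes ($A,B,C,D$) and then pinning to produce a tensor factorization of $f$ that contradicts irreducibility. The only difference is that you take a small detour by first characterizing $g$ and $h$ and then substituting back, whereas the paper obtains the factorization of $f$ in one step by directly pinning the variables of $k$ (the paper's $e$) to a nonvanishing assignment; also note that the nonvanishing of your denominators $g(\alpha_A,\beta_C)$ and $h(\alpha_B,\beta_D)$ is automatic from $g(\alpha_A,\beta_C)h(\alpha_B,\beta_D)=f(\alpha)k(\beta)\neq 0$, so the bookkeeping you flag as the ``main obstacle'' resolves itself.
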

\begin{proof}
Suppose $f$ is a prime signature. 
If  $f$ is not irreducible,  then there is
a nonzero  signature $g$ such that
$g \mid f$ but not $g \sim f$.
So there is a signature $h$ (of arity $\geqslant 1$) 
such that 
$f = g \otimes h$, 
up to a permutation
of variables ($h \not\equiv 0$ due to $f \not\equiv 0$).  Then $f \mid  g \otimes h$ and  by being a prime,
either $f \mid g$ or $f \mid h$. This is impossible because both $g$ and $h$ have lower arity  than  $f$.

Now, suppose $f$ is irreducible and let $f \mid g\otimes h$,
where $g$ and $h$ are nonzero 
 signatures (of arity $\geqslant 1$). 
 If $f \sim g\otimes h$, then   $f=(\lambda g)\otimes h$
 for some  constant $\lambda \not = 0$. This contradicts
  $f$ being irreducible.
 Thus, there is a nonzero signature $e$ 
 (of arity $\geqslant 1$)
 such that, up to a permutation
 of variables,
\begin{equation}\label{efgh}
    e\otimes f =  g\otimes h.
\end{equation}
Consider the  scope of $f$, i.e., its set of variables. 
Suppose it intersects with the scopes of both $g$ and $h$. Since $e\not \equiv 0$, we can pick an input $\beta$ of $e$ such that $e^\beta = \lambda_1 \neq 0$. 
  By setting the variables in the scope of $e$ to $\beta$ on both sides of (\ref{efgh}),  we have $$\lambda_1 \cdot f
  = g' \otimes h',$$
  where $g'$ and $h'$
  denote the resulting signatures from $g$ and $h$ respectively, both of
  which have a non-empty scope, i.e., having arity $\geqslant 1$.
  This is a contradiction to  $f$ being irreducible.

Hence the scope of $f$ is a subset of the scope of either  $g$ or $h$. Suppose it is $g$, then
the scope of $h$ is a subset of the scope of $e$.
 Since $h\not\equiv 0$, we can pick an input $\alpha$ of $h$ such that $h^\alpha = \lambda_2 \neq 0$. 
  By setting the variables in the scope of $h$ to $\alpha$ on both sides of (\ref{efgh}),
 we have $$e'\otimes f=\lambda_2 \cdot g,$$ where $e'$ denotes the resulting signature by setting
 $\alpha$ in $e$. 
 Thus, we have $f \mid g$. Similarly,
 if the scope of $f$ is a subset of the scope of $h$, then  we have $f\mid h$.
\end{proof}

A prime factorization of a signature $f$ is $f=g_1\otimes \ldots \otimes g_k$ up to a permutation of
variables, where each $g_i$ is a prime signature (irreducible). 
Start with any nonzero signature, if we 
keep factoring reducible signatures and induct
on arity, any nonzero $f$ has a factorization
into irreducible (prime)  signatures.  The following
important lemma says that the prime factorization of a nonzero  signature is unique up to the order of the tensor factors and constant scaling factors. 
It can be proved using Lemma~\ref{prime=irreducible} and a standard argument, 
which we omit.

\begin{lemma}[Unique prime factorization]\label{unique}
Every nonzero signature $f$ has a prime factorization.
If  $f$ has  prime factorizations
 $f=g_1\otimes \ldots \otimes g_k$ and $f=h_1\otimes \ldots \otimes h_\ell$,
both up to a permutation of variables,
then $k=\ell$ and after reordering the factors we have $g_i \sim h_i$  for all $i$. 
\end{lemma}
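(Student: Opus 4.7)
For existence, I would induct on the arity of $f$. An arity-$1$ nonzero signature admits no nontrivial tensor decomposition and is therefore irreducible and its own prime factorization. For arity $n \geq 2$, either $f$ is already irreducible, in which case $f$ itself is a one-term prime factorization, or $f = g \otimes h$ for nonzero $g, h$ of strictly smaller positive arities; the inductive hypothesis gives prime factorizations of $g$ and $h$, and concatenating them produces a prime factorization of $f$.

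For uniqueness, I would induct on $k$. The base $k = 1$ is immediate: $f = g_1$ is irreducible, so any competing factorization $h_1 \otimes \cdots \otimes h_\ell$ with $\ell \geq 2$ would exhibit $f$ as a tensor product of two nonzero signatures of positive arity, contradicting irreducibility; hence $\ell = 1$ and $g_1 = h_1$, giving $g_1 \sim h_1$. For the inductive step with $k \geq 2$, Lemma~\ref{prime=irreducible} promotes the irreducible $g_1$ to a prime signature. Since $g_1 \mid f = h_1 \otimes (h_2 \otimes \cdots \otimes h_\ell)$, iteratively invoking the prime property extracts an index $i$ with $g_1 \mid h_i$; irreducibility of $h_i$ rules out $h_i = g_1 \otimes r$ with $r$ of positive arity and forces $g_1 \sim h_i$. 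Relabel so $i = 1$ and write $g_1 = \lambda h_1$ with $\lambda \neq 0$.

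It remains to cancel $g_1$ and $h_1$ and apply the inductive hypothesis. Let $T$ and $S$ denote the scopes of $g_1$ and $h_1$ in the respective factorizations; by associateness they have the same size. A preliminary scope-matching step shows we may arrange $T = S$: if the scope $S_j$ of any irreducible $h_j$ properly straddled $T$, choosing $\beta$ on the variables outside $S_j$ so that every other $g_a$ and $h_b$ evaluates to a nonzero value and comparing the two expressions for $f(x_{S_j},\beta)$ would display $h_j$ as a nontrivial tensor product of functions on $S_j \cap T$ and $S_j \setminus T$, contradicting irreducibility; then partial evaluation of the $g$-identity on $T$ together with irreducibility of $g_1$ pins down a unique $h_j$ whose scope equals $T$, which is the $h_1$ we want. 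With $T = S$ secured, pick $\alpha \in \{0,1\}^{|T|}$ with $g_1(\alpha) \neq 0$, restrict $f$ to $x_T = \alpha$, and divide through by the common nonzero scalar; the residual signature on the complementary scope then inherits two prime factorizations of lengths $k-1$ and $\ell-1$, and the inductive hypothesis closes the argument. The main obstacle I anticipate is precisely this scope-matching step, since the definitional ``permutation of variables'' does not automatically align the scopes of $g_1$ and $h_1$; the straddling argument sketched above is the cleanest route and encodes the underlying fact that the block partition of the variable set induced by any prime factorization is itself an invariant of $f$.
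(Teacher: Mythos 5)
The paper gives no proof of this lemma---it states only that the result ``can be proved using Lemma~\ref{prime=irreducible} and a standard argument, which we omit''---so there is nothing in the text to compare against line by line; your argument is a correct instantiation of exactly the program the paper alludes to, and the existence half, the induction on $k$, and the use of primality to peel off a factor associate to $g_1$ are all as they should be. You are also right that the real content is the scope-matching step, which the bare divisibility $g_1\mid h_i$ does not settle on its own. One imprecision there is worth flagging: you ask for a single $\beta$ on the variables outside $S_j$ making ``every other $g_a$ and $h_b$'' nonzero, but once $\beta$ is fixed the $g_a$'s whose scope meets $S_j$ are still \emph{functions} on $T_a\cap S_j$, not scalars, and one cannot independently control both factorizations on the same complementary variables. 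The fix is to pick $\beta$ making only the scalars $h_b(\beta|_{S_b})$ for $b\neq j$ nonzero (possible because the $S_b$ are disjoint and each $h_b\not\equiv 0$); then $f|_\beta=c\,h_j$ with $c\neq 0$, and the $g$-side expression of $f|_\beta$ is automatically a nonzero tensor product of a function on $S_j\cap T$ with a function on $S_j\setminus T$, exhibiting $h_j$ as reducible. A second small point: the unique $h_j$ of scope $T$ produced by the straddling argument need not be the same index as the $h_1$ your primality step found (several $h_b$ could be associates of $g_1$), but this is harmless since the lemma only requires a match after reordering---and indeed the scope-matching argument on its own already yields $g_1\sim h_j$, making the separate appeal to primality somewhat redundant. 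A marginally cleaner route is to first prove, by induction on the \emph{arity} of $f$, that any two irreducible divisors of $f$ either have disjoint scopes or are scalar multiples of each other (restrict a shared variable when the complementary scopes overlap; restrict to the scope of one quotient factor when they are disjoint); UPF then follows at once, and the scope-matching issue never arises.
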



If a vertex $v$ in a signature grid 
is labeled by a reducible signature $f=g\otimes h$, we can replace the vertex $v$ 
by two vertices $v_1$ and $v_2$ and label $v_1$ with $g$ and $v_2$ with $h$, respectively.
The incident edges of $v$ become incident edges of $v_1$ and $v_2$ respectively
according to the partition of variables of $f$ in the tensor product of $g$ and $h$.  This does not
change the Holant value.
Clearly, $f=g\otimes h$ is realizable from $\{g, h\}$. On the other hand,  Lin and Wang proved in~\cite{beida} 
(Corollary 3.3) that, from a reducible signature $f=g\otimes h$ we can freely replace $f$ by $g$ and $h$ while preserving the complexity of a  Holant problem.  To apply their theorem to \#EO problems we need to take care of one subtlety, namely from the factorization $f=g\otimes h$ 
 we have to get EO signatures $g$ and $h$, so that we can discuss 
  \#EO problems involving
  signatures $g$ and $h$.
  This is true for  EO signatures $f$  
  satisfying  {\sc ars}.



\begin{lemma}\label{decom-eo}
Let $f$ be a nonzero reducible {\rm EO} signature  satisfying  {\sc ars}. Then,
\begin{enumerate}
    \item for any factorization  $f=g\otimes h$,
$g$ and $h$ are both {\rm EO} signatures;
\item there exists a factorization $f=g\otimes h$ such that $g$ and $h$ are both {\rm EO} signatures satisfying  {\sc ars}.
\end{enumerate}
\end{lemma}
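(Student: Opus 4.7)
The plan is to prove the two parts in sequence, leveraging ARS of $f$ (which does not a priori hold for the tensor factors).

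For Part 1, fix a factorization $f = g \otimes h$ with scopes $S_g$ and $S_h$ partitioning the variables of $f$. I would first show that the support of $g$ lies in a single Hamming-weight class, and similarly for $h$. Pick any $\gamma_0 \in \mathscr{S}(h)$; then for every $\beta \in \mathscr{S}(g)$, we have $f^{\beta\gamma_0} = g^\beta h^{\gamma_0} \ne 0$, and since $f$ is EO this forces $\mathrm{wt}(\beta) = n - \mathrm{wt}(\gamma_0) =: w_g$. Next I would invoke ARS: $f^{\bar\beta \bar\gamma_0} = \overline{f^{\beta\gamma_0}} \ne 0$, so $g^{\bar\beta} h^{\bar\gamma_0} \ne 0$, hence $\bar\beta \in \mathscr{S}(g)$. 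Thus $\mathscr{S}(g)$ is closed under complementation, giving $\mathrm{wt}(\bar\beta) = k - w_g = w_g$, so the arity $k$ of $g$ is even and $g$ is EO. The same argument works for $h$.

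For Part 2, the issue is that in general $g$ and $h$ given by the factorization need not individually satisfy ARS; I need to redistribute a scalar to fix this. Fixing $\beta\in\mathscr{S}(g)$ and $\gamma\in\mathscr{S}(h)$, ARS of $f$ gives
\[
g^{\bar\beta}\, h^{\bar\gamma} \;=\; \overline{g^\beta}\,\overline{h^\gamma}.
\]
Dividing by $g^\beta h^\gamma$ and separating variables shows that the quantity $c(\beta) := g^{\bar\beta}/\overline{g^\beta}$ is independent of $\beta \in \mathscr{S}(g)$; call its common value $c$. The analogous quantity for $h$ is then $c^{-1}$. Applying the relation $g^{\bar\beta} = c\,\overline{g^\beta}$ twice yields $g^\beta = |c|^2\, g^\beta$, so $|c| = 1$. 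Choose $\lambda \in \mathbb{C}$ with $|\lambda|=1$ and $\overline{\lambda}/\lambda = c$ (take any square root of $\overline{c}$). Then set $g' = \lambda g$ and $h' = \lambda^{-1} h$, so $f = g' \otimes h'$. A direct check, using $\lambda c = \overline{\lambda}$, shows that $(g')^{\bar\beta} = \lambda c\, \overline{g^\beta} = \overline{\lambda g^\beta} = \overline{(g')^\beta}$ on $\mathscr{S}(g)$, and $(g')^{\bar\beta} = 0 = \overline{(g')^\beta}$ off the support; an identical computation with $c^{-1}$ in place of $c$ shows $h'$ satisfies ARS.

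The main subtlety is the separation-of-variables step in Part 2: one must verify that the ratio $g^{\bar\beta}/\overline{g^\beta}$ is truly constant over the support of $g$ (and is a global constant that does not depend on the choice of $\gamma \in \mathscr{S}(h)$ used to derive it), and that this constant has modulus one so that a phase $\lambda$ solving $\overline{\lambda}/\lambda = c$ actually exists. Both facts follow from the tensor structure of $f$ and the involutive nature of complementation combined with ARS; once $|c|=1$ is in hand, the choice of $\lambda$ is straightforward. Part 1 is then only a book-keeping argument about weights, whose one genuine use of ARS is to force closure of $\mathscr{S}(g)$ under bit-complementation, without which the weight could be any $w_g$ rather than exactly $k/2$.
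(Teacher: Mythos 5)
Your proof is correct and follows essentially the same approach as the paper: use {\sc ars} of $f$ to show the factors are EO (by closure of each factor's support under complementation), then transfer a unit-modulus scalar between $g$ and $h$ to restore {\sc ars} on each factor. The only notable difference is presentational — your Part~2 packages the key step as a separation-of-variables claim that the ratio $g^{\bar\beta}/\overline{g^\beta}$ is a single global constant of modulus one, whereas the paper fixes one support point $\alpha\circ\beta$, derives $|g^\alpha|=|g^{\bar\alpha}|$, chooses $\lambda=(g^\alpha g^{\bar\alpha})^{-1/2}$, and then verifies {\sc ars} for $h$ and then $g$ sequentially; these are the same computation organized differently.
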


\begin{proof}
\begin{enumerate}
    \item 

Since $f\not\equiv 0$, we know $g\not\equiv 0$ and $h \not\equiv 0$ for any factorization  $f=g\otimes h$.
For a contradiction,
suppose there is a factorization  $f=g\otimes h$ such that $g$ is not an EO signature. 
Then, there is an input $\alpha$ of $g$ such that $g^{\alpha}\neq 0$, and ${\rm wt}(\alpha)\neq {\rm wt}(\bar{\alpha})$. (This is true no matter whether $g$ has even  or odd arity.)
Since $h\not\equiv 0$, there is an input $\beta$ of $h$ such that $h^{\beta}\neq 0$.
Note that $\alpha\circ\beta$ is an input of $f$ and we  have 
$$f^{\alpha\circ\beta}=g^{\alpha}\cdot h^{\beta}\neq 0.$$
Moreover, since $f$ satisfies  {\sc ars}, we have
$$0 \not = f^{\bar\alpha\circ\bar\beta}=g^{\bar\alpha}\cdot h^{\bar\beta}.$$
 Then, we know $g^{\bar\alpha}\neq 0$, and hence
 $$f^{\bar\alpha\circ\beta}=g^{\bar\alpha}\cdot h^{\beta} \neq 0.$$
 However, notice that ${\rm wt}(\alpha\circ\beta)
 ={\rm wt}(\alpha)
 + {\rm wt}(\beta)
 \neq {\rm wt}(\bar\alpha)
 + {\rm wt}(\beta)
 =
 {\rm wt}(\bar\alpha\circ\beta)$.
 This implies that $\mathscr{S}(f)\not\subseteq \mathscr{H}_{{\rm arity}(f)}$,  contradicting   $f$ being an EO signature.
 Thus, for any $f=g\otimes h$, $f$
 and $g$ are both EO signatures.
 \item 
 By the first part of the proof, it suffices to show that there exists a factorization $f=g\otimes h$ such that $g$ and $h$ both satisfy {\sc ars}.
 Suppose $f=g\otimes h$.
Since $f\not\equiv 0$, there is $\alpha\circ\beta$ such that $f^{\alpha\circ\beta}=g^\alpha \cdot h^\beta\neq 0.$ Since $f$ satisfy {\sc ars}, we have 
$$\overline{g^\alpha} \cdot \overline{h^\beta}=\overline{f^{\alpha\circ\beta}}=f^{\bar \alpha\circ \bar\beta}=g^{\bar \alpha} \cdot h^{\bar \beta} \not = 0,$$ and also
$${g^\alpha} \cdot h^{\bar \beta}={f^{\alpha\circ\bar \beta}}=\overline{f^{\bar \alpha\circ \beta}}=\overline{g^{\bar \alpha}} \cdot \overline{h^{ \beta}} \not = 0.$$ 
Multiply these two equalities, and cancel a nonzero
common
factor,  we have 
$$|g^\alpha|^2=|g^{\bar \alpha}|^2.$$
Since $g^\alpha$ and $g^{\bar \alpha}$ have the same norm, we can pick a scalar
$\lambda=1/( g^\alpha g^{\bar \alpha})^{1/2}$
such that $\overline{\lambda g^\alpha}=\lambda g^{\bar \alpha}.$ We have $f=(\lambda g)\otimes (\frac{1}{\lambda} h)$ and we will show $\lambda g$ and $\frac{1}{\lambda} h$ satisfy the {\sc ars} condition.
We rename $\lambda g$ and $\frac{1}{\lambda} h$ by $g$ and $h$, and now we can assume there is an $\alpha$ such that $\overline{g^\alpha}=g^{\bar \alpha} \neq 0$.
For any input $\beta$ of $h$, we have 
$$\overline{g^\alpha} \cdot \overline{h^\beta}=\overline{f^{\alpha\circ\beta}}=f^{\bar \alpha\circ \bar\beta}=g^{\bar \alpha} \cdot h^{\bar \beta}=\overline{g^\alpha} \cdot h^{\bar \beta},$$
and hence, $\overline{h^\beta}=h^{\bar \beta}$.
Hence $h\not \equiv 0$ satisfies the {\sc ars}
condition. We can pick a particular $\beta$ such that $\overline{h^\beta}=h^{\bar \beta}\neq 0$. Then, for any input $\alpha'$ of $g$, since $f$ satisfies  the {\sc ars}
condition, we have $\overline{g^{\alpha'}} \cdot \overline{h^\beta}=g^{\overline {\alpha'}} \cdot h^{\overline \beta}=g^{\overline{\alpha'}}\cdot \overline{h^\beta}$, and hence $\overline{g^{\alpha'}}=g^{\overline {\alpha'}}$. That is, $g$ also satisfies  {\sc ars}. \qedhere
 \end{enumerate}
\end{proof}
\begin{remark}
\begin{enumerate}
    \item The first part does not hold without assuming {\sc ars}. For example, $f=(0, 0, 1, 0)=(0, 1)\otimes(1, 0)$, where $(0, 0, 1, 0)$ is an EO signature but $(0, 1)$ and $(1, 0)$ are not.
    \item  The second part does not hold when we restrict signatures to real valued signatures. For example, $f=
(0, 1, -1, 0)^{\otimes 2}=
\left[\begin{smallmatrix}
0 & 0 &0 & 0&\\
0 & 1 &-1 & 0&\\
0 & -1 &1 & 0&\\
0 & 0 &0 & 0&\\
\end{smallmatrix}\right]$ is a real valued reducible signature satisfying   {\sc ars}. (Note that while
$f$ satisfies  {\sc ars}, the binary factor $(0, 1, -1, 0)$ does not satisfy  {\sc ars}.) In fact it cannot be factorized into two real valued signatures that also satisfy   {\sc ars}.
To see that, first  $f$  cannot  
have a unary tensor factor by Lemma \ref{decom-eo}.  
So any factorization of $f$ must be with two binary signatures.
A real valued binary EO signature satisfying  {\sc ars} has the form $(0, a, a, 0)$ where $a \in \mathbb{R}$. 
Thus, if $f$ is a tensor product of  such two signatures,
we have $f=(0, a, a, 0)\otimes (0, b, b, 0)$ up to a permutation
of variables, which implies all nonzero entries of $f$ are the same, a contradiction.
However, $f=(0, \ii, -\ii, 0)\otimes (0, -\ii, \ii, 0)$, which is a tensor product of two complex valued signatures satisfying  {\sc ars}.  Thus, by going to the
complex field we restored a closure property for 
prime factorizations of signatures satisfying {\sc ars}.
\end{enumerate}
\end{remark}

In the following, when we say that a nonzero EO signature $f$ satisfying  {\sc ars} has a factorization $g \otimes h$,  we always assume $g$ and $h$ are EO signatures satisfying  {\sc ars}. 
The following 
lemma follows from Corollary 3.3  of  Lin and Wang \cite{beida}. 

\begin{lemma}\label{lem-decom}  
If a nonzero {\rm EO} signature $f$ satisfying  {\sc ars} has a factorization $g \otimes h$, then 
\[{\rm\#EO}(\{g, h\}\cup \mathcal{F})\equiv_T{\rm\#EO}(\{f\} \cup \mathcal{F})\] for any {\rm EO} signature set $\mathcal{F}$. In this case, we also say $g$ and $h$ are realizable from $f$.
\end{lemma}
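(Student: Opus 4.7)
The plan is to prove the two directions of the Turing equivalence separately, with the forward direction following from a trivial gadget and the reverse direction from Corollary~3.3 of Lin and Wang~\cite{beida}.

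For ${\rm\#EO}(\{f\}\cup \mathcal{F}) \leqslant_T {\rm\#EO}(\{g, h\}\cup \mathcal{F})$, I would realize $f$ from $\{g,h\}$ by the gadget consisting of one $g$-vertex and one $h$-vertex with no internal edges, whose dangling edges inherit the variables of $f$ according to the factorization $f = g \otimes h$ (up to a permutation of variables). The earlier realizability lemma (allowing us to freely add signatures realizable from $\mathcal{F}$) then yields the reduction in this direction.

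For ${\rm\#EO}(\{g, h\}\cup \mathcal{F}) \leqslant_T {\rm\#EO}(\{f\}\cup \mathcal{F})$, I would first apply Lemma~\ref{eo=holant} to rewrite both sides as bipartite Holant problems of the form $\Holant(\neq_2 \mid \cdot\,)$ and then invoke Corollary~3.3 of Lin and Wang, which in the general Holant setting provides an explicit gadget-based simulation that replaces any signature grid containing $g$- and $h$-vertices by one containing only $f$-vertices. Two closure observations are required to transport this reduction into the $\#EO$ framework. First, the chosen factorization $f = g \otimes h$ must produce factors $g$ and $h$ that are themselves EO signatures satisfying ARS, so that ${\rm\#EO}(\{g,h\}\cup \mathcal{F})$ is a well-defined $\#EO$ problem; this is the content of Lemma~\ref{decom-eo} together with the convention adopted immediately above the present statement. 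Second, every intermediate gadget in the Lin--Wang simulation is built using only $f$ and the bipartite edge label $\neq_2$, each an EO signature satisfying ARS; by Lemma~\ref{lemma:eo-gates} every signature realized in the course of the simulation is again EO and ARS-symmetric, so the reduction never leaves the $\#EO$ framework.

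I do not foresee a substantive technical obstacle: the core simulation is Lin and Wang's, and the remaining work is to verify that both the factorization step and each intermediate gadget step respect the EO and ARS constraints, both of which follow immediately from closure lemmas already in hand. The only delicate point worth stating carefully is that the ARS condition on $f$ is essential for Lemma~\ref{decom-eo}(ii) to deliver EO, ARS-symmetric factors over $\mathbb{C}$; without it the factorization need not live inside our framework, as the remark after Lemma~\ref{decom-eo} illustrates.
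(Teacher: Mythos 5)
Your proposal is correct and follows essentially the same route as the paper: the forward direction is the trivial gadget, and the reverse direction is cited directly from Lin–Wang's Corollary~3.3, with Lemma~\ref{decom-eo}(ii) supplying the needed closure so that $g$ and $h$ are themselves EO signatures with {\sc ars}. One small over-caution in your writeup: your second closure observation (about intermediate gadgets in the Lin–Wang simulation staying inside the EO/{\sc ars} class) is not actually required, since Corollary~3.3 delivers a Turing reduction whose intermediate computations need not correspond to \#EO instances at all; the only closure step that is genuinely load-bearing is the first one, guaranteeing the problem on the $\{g,h\}$ side is well-posed as an \#EO problem.
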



\begin{definition}
We use $\mathcal{B}$ to denote the set of signatures that are  tensor products 
of (one or more) binary {\rm EO} signatures satisfying {\sc ars}. 
\end{definition}

Note that a binary {\rm EO} signature $b(x_i, x_j)$ over variables $x_i, x_j$ satisfying  {\sc ars} has a particular form $(0, a, \bar{a}, 0)$ for 
some $a \in \mathbb{C}$, and it is irreducible when $a\neq 0$.
Signatures in $\mathcal{B}$ satisfy a closure property:
If $f \in \mathcal{B}$ and $b(y,z) \in \mathcal{B}$
where $y$ and $z$ are distinct from variables of $f$,
and we connect one variable $x$ of $f$ via $\neq_2$
with $y$, the new signature also belongs to
$\mathcal{B}$.  This  is easily verified by
$\left[\begin{smallmatrix}
0 & a \\
\bar{a} & 0 
\end{smallmatrix}\right]
\left[\begin{smallmatrix}
0 & 1 \\
1 & 0 
\end{smallmatrix}\right]
\left[\begin{smallmatrix}
0 & b\\
\bar{b} & 0 
\end{smallmatrix}\right]
=
\left[\begin{smallmatrix}
0 & ab \\
\overline{ab} & 0 
\end{smallmatrix}\right]$.
We use the notation
$b^\ii(\cdot, \cdot)$
to denote the binary 
signature $(0, \ii, -\ii, 0)$
in $\mathcal{B}$.
For signatures in $\mathcal{B}$, we have the following result.

 

 \begin{lemma}\label{0ii0}
Suppose $f\in \mathcal{B}$. Then $\partial_{(ij)}f\equiv 0$ iff the signature $b^\ii(x_i, x_j)=(0, \ii, -\ii, 0)$ divides $f$. 
\end{lemma}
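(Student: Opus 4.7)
The plan is to exploit the rigid structure of $\mathcal{B}$. Any $f \in \mathcal{B}$ decomposes as $f = b_1 \otimes \cdots \otimes b_k$, where each $b_m$ is a binary {\rm EO} signature satisfying {\sc ars}, of the form $(0, a_m, \bar a_m, 0)$ on its pair of variables; without loss of generality every $a_m \neq 0$ (otherwise $f \equiv 0$ and the biconditional holds trivially). The ``$\Leftarrow$'' direction is almost immediate: if $b^{\ii}(x_i, x_j) \mid f$, write $f = b^{\ii}(x_i, x_j) \otimes h$, and since $\partial_{(ij)} b^{\ii}(x_i, x_j) = \ii + (-\ii) = 0$, linearity of the merging operation gives $\partial_{(ij)} f = 0$.

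For the ``$\Rightarrow$'' direction, since every variable of $f$ lies in exactly one tensor factor, I split into two cases according to whether $x_i$ and $x_j$ share a factor. In the \emph{same-factor} case, $f = b_m(x_i, x_j) \otimes g$ with $g \not\equiv 0$, and linearity yields
\[
\partial_{(ij)} f \;=\; (a_m + \bar a_m)\,g \;=\; 2\operatorname{Re}(a_m)\,g.
\]
The hypothesis $\partial_{(ij)} f \equiv 0$ then forces $\operatorname{Re}(a_m) = 0$, so $a_m$ is a nonzero purely imaginary number and $b_m$ is a nonzero scalar multiple of $b^{\ii}(x_i, x_j)$; hence $b^{\ii}(x_i, x_j) \mid f$, as desired.

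In the \emph{different-factor} case, let $b_p$ contain $x_i$ with partner $u$ and $b_q$ contain $x_j$ with partner $v$, and write $f = b_p(x_i, u) \otimes b_q(x_j, v) \otimes g$. A direct summation over the two {\rm EO}-consistent settings $(x_i, x_j) \in \{(0,1), (1,0)\}$ gives
\[
\partial_{(ij)}\bigl(b_p(x_i, u) \otimes b_q(x_j, v)\bigr)(u, v) \;=\; (0,\, \bar a_p a_q,\, a_p \bar a_q,\, 0),
\]
which is nonzero since $a_p a_q \neq 0$. Therefore $\partial_{(ij)} f \not\equiv 0$, contradicting the hypothesis; this case is ruled out. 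The main (mild) obstacle is organizing the two-case split cleanly, but once that is in hand both computations reduce to routine arithmetic with the explicit shape $(0, a, \bar a, 0)$ of a binary {\sc ars}-{\rm EO} signature, and no appeal to unique prime factorization is actually needed for this particular lemma.
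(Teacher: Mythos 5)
Your proof is correct and essentially matches the paper's: both split on whether $x_i$ and $x_j$ land in the same binary tensor factor of $f$ or in two different ones, and both do the same two short computations. Your parenthetical observation is also accurate — the paper phrases the decomposition as ``the unique prime factorization of $f$,'' but for this lemma only the existence of a decomposition of $f$ into binary factors (which is the definition of $\mathcal{B}$) is used, not its uniqueness.
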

\begin{proof}
If  $b^\ii(x_i, x_j)|f$, then $f=b^\ii(x_i, x_j)\otimes g$, where $g$ is a constant or
a signature on variables other than $x_i, x_j$. 
We have $\partial_{(ij)}f=(\ii-\ii) \cdot g\equiv0$.

Now, suppose $\partial_{(ij)}f\equiv 0$. If $f\equiv 0$, then it is trivial.
Otherwise, $f\not\equiv 0$. Consider the unique prime factorization of $f$. If $x_i$ and $x_j$ appear in one binary signature $b(x_i, x_j)=(0, a, \bar{a}, 0)$, then $a\neq 0$, and $f=b(x_i, x_j)\otimes g$, where $g$ is a constant
or a signature on  variables other than $x_i, x_j$ and $g\not\equiv 0$ due to $f\not\equiv 0$.
Then, we have $\partial_{(ij)}f=(a+\bar{a}) g\equiv0$, which means $a+\bar{a}=0$. That is, $a=\lambda \ii$ for some $\lambda \in \mathbb{R}$. So, we have $b^\ii(x_i, x_j)|f$.

Otherwise, $x_i$ and $x_j$ appear in separate binary signatures $b_1(x_i, x_{i'})=(0, a, \bar{a}, 0)$ and $b_2(x_j, x_{j'})=(0, b, \bar{b}, 0)$ in the unique prime factorization of $f$. That is, $f=b_1(x_i, x_{i'})\otimes b_2(x_j, x_{j'})\otimes g$, where $g$ is a constant or a signature on variables other than $x_i, x_{i'}, x_j, x_{j'}$ and $g\not\equiv 0$. 
Then
 $\partial_{(ij)}f=b'(x_{i'}, x_{j'})\otimes  g$ where $b'(x_{i'}, x_{j'})=(0, \bar{a}b, a\bar{b}, 0)\not\equiv 0$. Hence, $\partial_{(ij)}f\not\equiv 0$. A contradiction.
\end{proof}


\subsection{Tractable signatures and \#P-hardness results}
We give some known signature sets 
that
 define polynomial time computable (tractable) counting problems.
In this paper, we need two families: affine signatures and
product-type signatures.

\begin{definition}\label{definition-affine}
 A signature $f(x_1, \ldots, x_n)$ of arity $n$
is \emph{affine} if it has the form
 \[
  \lambda \cdot \chi_{A X = 0} \cdot {\mathfrak i} ^{Q(X)},
 \]
 where $\lambda \in \mathbb{C}$,
 $X = (x_1, x_2, \dotsc, x_n, 1)$,
 $A$ is a matrix over $\mathbb{Z}_2$,
 $Q(x_1, x_2, \ldots, x_n)\in \mathbb{Z}_4[x_1, x_2, \ldots, x_n]$
is a quadratic (total degree at most 2) multilinear polynomial
 with the additional requirement that the coefficients of all
 cross terms are even, i.e., $Q$ has the form
 \[Q(x_1, x_2, \ldots, x_n)=a_0+\displaystyle\sum_{k=1}^na_kx_k+\displaystyle\sum_{1\leqslant i<j\leqslant n}2b_{ij}x_ix_j,\]
 and $\chi$ is a 0-1 indicator function
 such that $\chi_{AX = 0}$ is~$1$ iff $A X = 0$.
 We use $\mathscr{A}$ to denote the set of all affine signatures.
\end{definition}
If the support set $\mathscr{S}(f)$ is an affine linear subspace, then we say $f$ has affine support. Clearly, any affine signature has affine support.

\begin{definition}
\label{definition-product-2}
 A signature on a set of variables $X$
 is of \emph{product type} if it can be expressed as a
product of unary functions,
 binary equality functions $(1,0,0,1)$,
and binary disequality functions $(0,1,1,0)$, each on one or two
variables of $X$.
 We use $\mathscr{P}$ to denote the set of product-type functions.
\end{definition}

Note that the products of 
 unary,
 binary equality,
and binary disequality functions in the definition of
$\mathscr{P}$ can be on overlapping variables.
A simple and  important observation is that $\mathcal{B}\subseteq \mathscr{P}$. 
By  definition, it is easy to verify the following lemmas.
\begin{lemma}\label{product-affine-support}
Any signature of product type has affine support.
\end{lemma}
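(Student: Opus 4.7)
The plan is to directly translate each of the three allowed factor types in Definition~\ref{definition-product-2} into a condition of the form $L(x_1,\dots,x_n) = c$, where $L$ is a $\mathbb{Z}_2$-linear form and $c \in \mathbb{Z}_2$, and then observe that the support $\mathscr{S}(f)$ is the intersection of all these conditions, hence is either empty or an affine linear subspace of $\mathbb{Z}_2^n$. (By convention the empty set is also an affine subspace, realizable by an inconsistent system $AX=0$ in Definition~\ref{definition-affine}, so the zero case needs no separate treatment.)

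Concretely, write $f = \prod_k g_k$ as guaranteed by Definition~\ref{definition-product-2}, and note that
\[
\mathscr{S}(f) \;=\; \bigcap_{k} \mathscr{S}(g_k).
\]
The first step is to analyze each factor: a unary factor $u(x_j) = (u^0, u^1)$ has support which is all of $\mathbb{Z}_2$ (if $u^0, u^1$ are both nonzero, contributing no equation), a single point $\{x_j = 0\}$ or $\{x_j = 1\}$ (contributing the affine equation $x_j = 0$ or $x_j = 1$), or empty (in which case $f \equiv 0$ and we are done). A binary equality factor on $(x_i, x_j)$ contributes the $\mathbb{Z}_2$-equation $x_i + x_j = 0$, and a binary disequality factor contributes $x_i + x_j = 1$. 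In all cases, each $\mathscr{S}(g_k)$ is the solution set of at most one affine linear equation over $\mathbb{Z}_2$.

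The second step is to assemble these equations into a single system $AX = 0$ (where $X = (x_1,\dots,x_n,1)^{\tt T}$ as in Definition~\ref{definition-affine}); the rows of $A$ are exactly the coefficient vectors of the equations collected above, and $\mathscr{S}(f)$ is the solution set of this system. This is by definition an affine linear subspace of $\mathbb{Z}_2^n$ (possibly empty), proving that $f$ has affine support.

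There is essentially no obstacle here: the lemma is a structural unpacking of the definition. The only minor subtlety is the handling of unary factors with mixed zero/nonzero entries and of repeated unary factors on the same variable, but these can be folded into a single unary per variable without changing the support, after which the translation into affine equations is automatic.
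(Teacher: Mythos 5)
Your proof is correct. The paper does not actually give a proof of this lemma — it defers to a citation of an earlier reference — so there is no paper proof to compare against; but your argument is the direct, standard one, and it is exactly what the definition demands: the support of a product-type signature is the intersection of the supports of its factors, each of which is the whole space, the solution set of a single $\mathbb{Z}_2$-affine condition ($x_j = c$ from a one-sided unary, $x_i + x_j = 0$ from equality, $x_i + x_j = 1$ from disequality), or empty; intersecting these yields the solution set of a system $AX = 0$ with $X = (x_1,\dots,x_n,1)$, which is precisely what "affine support" means in Definition~\ref{definition-affine}. Your remark that the empty case is subsumed by an inconsistent system, and that multiple unary factors on the same variable collapse into one, correctly handles the only two places a careless write-up could slip.
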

Please see Definition 2.22 in Section 2.5  of \cite{caifu16} for a proof of Lemma~\ref{product-affine-support}.
More information on $\mathscr{A}$
and $\mathscr{P}$ can be found in~\cite{jcbook}.

\begin{lemma}\label{4-prod}
Let $f$ be an {\rm EO} signature of arity $4$ satisfying {\sc ars}
with support size $4$, say
$f^\alpha$, $f^\beta$, $f^{\overline{\alpha}}$ and $f^{\overline{\beta}}\neq 0$
where $\alpha \neq \beta, \overline{\beta}$.
Then $f\in \mathscr{P}$ if and only if 
$|f^\alpha|=|f^\beta|$.
\end{lemma}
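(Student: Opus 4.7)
My plan is to reduce to a canonical form for the support of $f$, identify the only candidate product-type decomposition, and then translate the product-type question into a single numerical identity.

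\emph{Step 1: Canonical support.} Since $f$ is an EO signature of arity $4$, $\mathscr{S}(f) \subseteq \mathscr{H}_4 = \{0011, 0101, 0110, 1001, 1010, 1100\}$, which consists of three complement pairs. By {\sc ars}, $|f^\gamma| = |f^{\bar\gamma}|$ for every $\gamma$, so $\mathscr{S}(f)$ is closed under complementation. Support size $4$ therefore means $\mathscr{S}(f)$ is exactly two of the three complement pairs. Up to a permutation of the four variables there is essentially one such choice; I would fix $\mathscr{S}(f) = \{0101, 1010, 0110, 1001\}$, i.e.\ $\mathscr{S}(f) = \{x_1 \neq x_2\} \cap \{x_3 \neq x_4\}$, and set $a = f^{0101}$, $b = f^{0110}$, so by {\sc ars} $f^{1010} = \bar a$ and $f^{1001} = \bar b$; in this notation the hypothesis $\alpha \neq \beta, \bar\beta$ corresponds exactly to comparing $|a|$ and $|b|$.

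\emph{Step 2: Only one product-type skeleton is possible.} Suppose $f \in \mathscr{P}$, say $f = \prod_k g_k$ with each $g_k$ unary, a binary equality, or a binary disequality. Since $|\mathscr{S}(f)|=4$, no factor may vanish on any element of $\mathscr{S}(f)$. For every pair $(i,j)\in\{(1,3),(1,4),(2,3),(2,4)\}$ there is a support element with $x_i=x_j$ and another with $x_i\neq x_j$, so neither $(=_2)_{ij}$ nor $(\neq_2)_{ij}$ can occur as a factor; the same reasoning rules out $(=_2)_{12}$ and $(=_2)_{34}$. Hence the only binary factors available are $(\neq_2)_{12}$ and $(\neq_2)_{34}$, and $f$ must have the form
\begin{equation*}
f(x_1,x_2,x_3,x_4) = u_1(x_1)\,u_2(x_2)\,u_3(x_3)\,u_4(x_4)\,(\neq_2)(x_1,x_2)\,(\neq_2)(x_3,x_4)
\end{equation*}
for some unary functions $u_i$ (after absorbing repeated binary factors, which on $\mathbb{Z}_2$ are idempotent up to the values on support).

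\emph{Step 3: Translate into an equation on $|a|,|b|$.} Evaluating the displayed form on the four support points gives $a = u_2(1)u_4(1)\prod_i u_i(0)$-type products. A direct computation shows $f^{0101}f^{1010} = f^{0110}f^{1001}$, i.e.\ $a\bar a = b\bar b$, hence $|a|=|b|$. This settles the forward direction.

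\emph{Step 4: Converse by explicit construction.} For the reverse direction, assume $|a|=|b|$ (note $a,b\neq 0$ since the support has size $4$). I would exhibit unaries realizing $f$: take $u_2\equiv 1$, $u_3(0)=u_4(0)=1$, $u_3(1)=b$, $u_4(1)=a$, $u_1(0)=1$, $u_1(1)=\bar a/b$. Then the first three support values are $a$, $b$, $\bar a$ by inspection, and the last gives $u_1(1)u_4(1) = a\bar a/b = |a|^2/b = |b|^2/b = \bar b$, as required. Thus $f\in\mathscr{P}$.

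\emph{Main difficulty.} The only genuinely delicate point is Step 2: one must verify that no other combination of binary equality/disequality factors could appear, and that the apparent freedom in writing $f$ as a product reduces to a single unary-weighted pair of disequalities. Once that is in hand, the forward direction is one line of arithmetic and the converse is a one-shot construction.
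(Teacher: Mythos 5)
Your proof is correct and takes essentially the same route as the paper: reduce the support to the canonical form $(x_1 \neq x_2) \wedge (x_3 \neq x_4)$, observe that any product-type decomposition must consist of unaries times exactly these two disequalities, then derive the norm identity (you via $f^{0101}f^{1010} = f^{0110}f^{1001}$ combined with {\sc ars}, the paper via $bd = \overline{ac}$ and $ad = \overline{bc}$) and finish with an explicit construction for the converse. A minor stylistic difference is that the paper invokes the affine-support property of $\mathscr{P}$ to constrain the support, whereas you correctly observe that the support is already fully pinned down by the lemma's hypothesis alone.
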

\begin{proof}
Suppose $f \in \mathscr{P}$.  Then by Lemma~\ref{product-affine-support}
it has affine support. Being an EO signature with support size 4, 
we can show that,
after renaming  its 4 variables
we may assume the support is defined by $(x_1 \neq x_2) \wedge (x_3 \neq x_4)$.
No binary equality is
used in its definition for
being in $\mathscr{P}$,
and  exactly these two binary disequalities are
used. 
Then $f$ takes  values
$ac, ad, bc, bd$ on $0101, 0110, 1001, 1010$ for some
$a,b,c,d \neq 0$.
By  {\sc ars}, we have $bd = \overline{ac}$ and $ad = \overline{bc}$.
It follows that $|a|=|b|$. Similarly $|c|=|d|$. 
Therefore all nonzero values of $f$ have the same norm. Hence $|f^\alpha|=|f^\beta|$.

Conversely, suppose $f^\alpha= r e^{i \varphi}$ and
$f^\beta = r e^{i \psi}$, for some $r>0$ and $\varphi$, $\psi$.
By renaming  variables we may assume $\alpha = 0101$,
$\beta= 0110$.
Let $a = r e^{i \frac{\varphi+\psi}{2}}$ and
$c = e^{i \frac{\varphi - \psi}{2}}$.
Then the unary functions $(a,\bar a)$ on $x_1$ and $(c,\bar c)$ on $x_3$,
times $(x_1 \neq x_2) \wedge (x_3 \neq x_4)$ defines $f \in  \mathscr{P}$.
\end{proof}

The following tractable result is known~\cite{Cai-Lu-Xia-csp}, 
since $(\neq_2) \in \mathscr{A}
\cap \mathscr{P}$.

\begin{theorem}\label{aptractable}
  Let $\mathcal{F}$ be any set of complex-valued signatures on Boolean variables.  If $\mathcal{F} \subseteq \mathscr{A}$
 or   $\mathcal{F} \subseteq \mathscr{P}$,
 then $\Holant(\neq_2 \mid \mathcal{F})$ is tractable.
\end{theorem}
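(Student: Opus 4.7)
The plan is to reduce the bipartite Holant statement to a known tractability result by absorbing the $\ne_2$ side into $\mathcal{F}$. The first step is the observation already flagged in the excerpt: the disequality $\ne_2$, written as $\chi_{x_1+x_2=1}$, is clearly affine (take $Q\equiv 0$, $\lambda=1$, $A=[1\ 1\ |\ 1]$), so $\ne_2\in\mathscr{A}$; and it is literally the binary disequality appearing in Definition~\ref{definition-product-2}, so $\ne_2\in\mathscr{P}$. Consequently, under either hypothesis of the theorem we have $\mathcal{F}\cup\{\ne_2\}\subseteq\mathscr{A}$ or $\mathcal{F}\cup\{\ne_2\}\subseteq\mathscr{P}$. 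The bipartite problem $\Holant(\ne_2\mid\mathcal{F})$ is syntactically a subproblem of the non-bipartite $\Holant(\mathcal{F}\cup\{\ne_2\})$ (forget the bipartition and use the same signatures), so it suffices to prove tractability of the non-bipartite Holant problem for signature sets lying entirely inside $\mathscr{A}$ or entirely inside $\mathscr{P}$.

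For the affine case, I would sketch the standard Gaussian-elimination plus Gauss-sum algorithm. Given a signature grid, each vertex contributes an affine linear constraint $A_v X_v=0$ and a quadratic factor $\mathfrak i^{Q_v(X_v)}$; combining all vertices yields a single global linear system $AX=0$ over $\mathbb{Z}_2$ whose solution space $V$ is found in polynomial time. On $V$ the partition function becomes $\sum_{x\in V}\mathfrak i^{Q(x)}$ where $Q=\sum_v Q_v$ is multilinear with even cross-term coefficients, so it descends to a well-defined quadratic form on the affine linear space $V$. Diagonalizing this form by a $\mathbb{Z}_2$-invertible change of basis reduces the sum to a product of one-dimensional Gauss sums, each evaluable in constant time. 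This is precisely the tractability algorithm for $\mathscr{A}$ worked out in \cite{Cai-Lu-Xia-csp}, and nothing about the argument uses that the distinguished side carries $=_n$ rather than $\ne_2$.

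For the product-type case, I would argue componentwise. Every signature in $\mathscr{P}$ is a product over pairs/singletons of variables of the three local gadgets $(1,0,0,1)$, $(0,1,1,0)$, and unaries. Expanding all vertex signatures in this way turns the signature grid into a graph whose edges encode equalities or disequalities between Boolean variables, decorated with unary weights; the partition function factors as a product over the connected components of this graph. For each component one checks 2-colorability of the equality/disequality constraints in linear time, and if consistent evaluates the two unary-weight products corresponding to the two global polarities, summing them. This is again a classical tractability algorithm, detailed in \cite{jcbook} and \cite{Cai-Lu-Xia-csp}.

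The only potential subtlety is making sure that when we regard $\Holant(\ne_2\mid\mathcal{F})$ as $\Holant(\mathcal{F}\cup\{\ne_2\})$ no new signatures sneak in outside of $\mathscr{A}$ or $\mathscr{P}$; this is automatic because $\ne_2$ belongs to both classes, and the rest is direct invocation of the known complex-weighted Holant tractability results. Thus the theorem follows with no further ingredients.
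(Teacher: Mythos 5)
Your proposal is correct and follows the same route as the paper: the paper also proves this by noting $(\neq_2)\in\mathscr{A}\cap\mathscr{P}$ and then invoking the known tractability of $\mathscr{A}$- and $\mathscr{P}$-definable Holant problems from \cite{Cai-Lu-Xia-csp,jcbook}. The algorithmic sketches you added (Gaussian elimination plus Gauss sums for $\mathscr{A}$, connected-component 2-coloring for $\mathscr{P}$) merely unfold what those citations contain and are not part of the paper's argument.
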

Problems defined by $\mathscr{A}$  and by
$\mathscr{P}$ are tractable~\cite{jcbook}.
For an EO signature
$f$ of  arity $4$, the complexity classification of \#EO$(f)$ is known \cite{cfx}. (This is the 
six-vertex model for general 4-regular graphs.) For 
the special case when 
the $4$-ary EO signature $f$
further
satisfies {\sc ars}, 
the tractability criterion
in~\cite{cfx} simplifies because in this case $f\in\mathscr{A}$ implies that $f\in\mathscr{P}$.  We restate this
complexity classification of \#EO$(f)$ for our setting.
\begin{theorem}\label{six-vertex}
Let $f$ be an {\rm EO} signature of arity $4$ satisfying 
{\sc ars}. Then {\rm \#EO}$(f)$ is \#P-hard unless 
$f \in \mathscr{P}$.
\end{theorem}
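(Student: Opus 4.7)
The plan is to handle the two directions separately and to derive the hardness direction from the general classification of arity-$4$ ${\rm\#EO}$ problems obtained in~\cite{cfx} (the six-vertex model on general $4$-regular graphs), simplifying its tractability criterion under the {\sc ars} restriction. For the tractable direction, if $f\in\mathscr{P}$ then by Lemma~\ref{eo=holant} one has ${\rm\#EO}(f)\equiv_T\Holant(\neq_2\mid f)$; since $\neq_2\in\mathscr{P}$, Theorem~\ref{aptractable} immediately yields a polynomial-time algorithm, so this direction is automatic once the preliminaries are cited.

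For the hardness direction, I would first use {\sc ars} to restrict the possible support patterns. By {\sc ars} the support $\mathscr{S}(f)\subseteq\mathscr{H}_4$ must be closed under bit-wise complementation, so $|\mathscr{S}(f)|\in\{0,2,4,6\}$, depending on how many of the three complementary pairs $\{0011,1100\}$, $\{0101,1010\}$, $\{0110,1001\}$ carry nonzero mass. The trivial and size-$2$ cases place $f$ in $\mathscr{P}$ directly (a size-$2$ signature factors as a unary times two binary equality/disequality functions), so only sizes $4$ and $6$ require work.

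For $|\mathscr{S}(f)|=4$, the support omits exactly one complementary pair; Lemma~\ref{4-prod} shows that $f\in\mathscr{P}$ iff the two surviving nonzero moduli $|f^\alpha|,|f^\beta|$ agree, and when they disagree the classification of~\cite{cfx} gives \#P-hardness. For $|\mathscr{S}(f)|=6$ the signature is a genuine six-vertex signature with three complex parameters paired with their conjugates under {\sc ars}, and~\cite{cfx} determines a small tractable locus; I would verify that every tractable instance in that locus, once constrained by {\sc ars}, actually lies in $\mathscr{P}$. The essential simplification, flagged in the paragraph preceding the theorem, is that any arity-$4$ EO signature in $\mathscr{A}$ with {\sc ars} has all nonzero values of equal modulus (since the $\mathscr{A}$ form $\lambda\cdot\ii^{Q(X)}$ enforces a common absolute value $|\lambda|$), which combined with Lemma~\ref{4-prod} forces $f\in\mathscr{P}$.

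The main obstacle is teasing out the exact shape of the tractable locus in~\cite{cfx} at support size $6$ and checking that no additional tractable class—such as one arising from an orthogonal holographic transformation of $\mathscr{A}$- or $\mathscr{P}$-signatures—survives the {\sc ars} restriction outside of~$\mathscr{P}$. Once that verification is made, every $f\notin\mathscr{P}$ lands outside the tractable region of~\cite{cfx} and is therefore \#P-hard, completing the dichotomy.
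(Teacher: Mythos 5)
Your approach mirrors the paper's own treatment: the paper does not prove this theorem from scratch but restates the classification from \cite{cfx}, observing that under {\sc ars} the tractability criterion simplifies because $f\in\mathscr{A}$ implies $f\in\mathscr{P}$ for arity-$4$ EO signatures. Your key observation — that an affine signature has all nonzero entries of modulus $|\lambda|$, so by the affine support structure (support size a power of $2$, hence $\leqslant 4$ inside $\mathscr{H}_4$) and Lemma~\ref{4-prod} such an $f$ lands in $\mathscr{P}$ — is exactly the simplification the paper relies on, and your case split by support size ($0,2,4,6$) is the natural way to organize it.

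The one place where your proposal is weaker than what's needed is the part you yourself flag as the ``main obstacle'': you have not actually checked the cfx tractable locus at support size $6$, nor ruled out tractable cases there that arise via holographic transformations outside of $\mathscr{A}$ and $\mathscr{P}$. Note that at support size $6$ the support is all of $\mathscr{H}_4$, which has cardinality $6$ — not a power of two — so $f$ is automatically in neither $\mathscr{A}$ nor $\mathscr{P}$ (both have affine support by Lemma~\ref{product-affine-support}); what remains is precisely to confirm that \cite{cfx} declares every such {\sc ars} signature \#P-hard and that no additional tractable class (vanishing, holographically transformable, etc.) survives the {\sc ars} restriction. The paper delegates this verification entirely to \cite{cfx} without spelling it out, so your proposal is at a comparable level of rigor; but to close the argument you would need to quote the precise tractability criterion from \cite{cfx} and check it case by case, rather than leaving it as an acknowledged obstacle.
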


The complexity classification of \#CSP is also known \cite{Cai-Lu-Xia-csp}.
\begin{theorem}\label{csp-dic} 
Let $\mathcal{F}$ be any set of complex-valued signatures on Boolean variables. Then $\CSP(\mathcal{F})$ is $\SHARPP$-hard unless $\mathcal{F} \subseteq \mathscr{A}$
 or   $\mathcal{F} \subseteq \mathscr{P}$, in which cases the problem is tractable.
\end{theorem}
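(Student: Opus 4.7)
The plan is to split the proof into the tractability direction and the hardness direction, with the latter being by far the more substantial task. For tractability, if $\mathcal{F} \subseteq \mathscr{P}$, each constraint factors into unaries, $=_2$'s and $\neq_2$'s, so a $\CSP(\mathcal{F})$ instance becomes a multigraph whose connected components (under the equality/disequality edges) either have a globally consistent assignment or do not, and in the consistent case the unaries on each component contribute a simple product. A depth-first traversal evaluates this in polynomial time. If $\mathcal{F} \subseteq \mathscr{A}$, the partition function becomes a sum, over the solution set of a linear system over $\mathbb{Z}_2$, of $\mathfrak{i}^{Q(X)}$ for a multilinear quadratic $Q$ whose cross coefficients are all even. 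Because the cross coefficients lie in $2\mathbb{Z}_4$, one can diagonalize the quadratic form along a basis of the affine support using Gaussian elimination, reducing the evaluation to a product of classical Gauss sums, each computable in constant time.

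For hardness, assume there exist $f \in \mathcal{F}\setminus \mathscr{A}$ and $g \in \mathcal{F}\setminus \mathscr{P}$ (possibly $f=g$). The key leverage in the $\CSP$ setting is that $\mathcal{EQ}$ is automatically available, so any two variables can be identified and copies of any signature can be freely connected through equalities. This allows standard \emph{pinning} arguments: using $=_k$ together with a suitable unary or partial evaluation extracted from $f$ (or $g$), one can simulate the constants $(1,0)$ and $(0,1)$, and hence freely restrict any variable to $0$ or $1$. After pinning, I would reduce the proof to the case of low arity by iteratively extracting from $f$ and $g$ signatures of strictly smaller arity via variable fixing and merging, arguing that the property of lying outside $\mathscr{A}$ or outside $\mathscr{P}$ is preserved (or else hardness already follows) down to the binary level.

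At the binary level, the classification becomes an explicit case analysis of $2\times 2$ matrices. A binary signature outside $\mathscr{P}$ is essentially a nondegenerate asymmetric or non-product-form signature, and by combining two such binaries with an equality one typically realizes a binary with eigenvalues of distinct norms, which can then be used to interpolate the entire family $\{=_k^{\lambda}\}_k$. Combined with a binary outside $\mathscr{A}$ (which provides an irrational-phase or non-Gauss-sum value), this allows reduction to a canonical \#P-hard problem such as \#Vertex-Cover, \#Independent-Set with weights, or the complex-weighted permanent via a holographic reduction. The hardness of the underlying base problem gives \#P-hardness for $\CSP(\mathcal{F})$.

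The principal obstacle is the interplay between the \emph{two} distinct tractable families $\mathscr{A}$ and $\mathscr{P}$: a signature $f \notin \mathscr{A}$ need not itself produce hardness (it could lie in $\mathscr{P}$), and vice versa, so the reduction must operate on $f$ and $g$ simultaneously and show that their joint presence defeats both tractable structures. Managing this requires a careful inductive framework — realizing from $f$ a witness of non-affineness that survives pinning, and from $g$ a witness of non-product-type structure that survives interpolation — and arguing that these two witnesses can be combined coherently within a single gadget. Keeping the arguments uniform for complex (not merely real or algebraic) weights, and handling degenerate signatures without collapsing the inductive hypothesis, are the technically delicate points where most of the work of \cite{Cai-Lu-Xia-csp} is concentrated.
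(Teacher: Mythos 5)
The paper does not prove Theorem~\ref{csp-dic}; it cites it as a known result from~\cite{Cai-Lu-Xia-csp} and immediately invokes it (e.g.\ in Corollary~\ref{cor-square}). So there is no internal proof to compare against, only the reference.

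Judged against the actual proof in~\cite{Cai-Lu-Xia-csp}, your tractability half is essentially right: $\mathscr{P}$ reduces to consistency and unary products over connected components, and $\mathscr{A}$ reduces to Gauss sum evaluation over the affine support (though ``diagonalize'' over $\mathbb{Z}_4$ is not quite the mechanism — the even-cross-term condition lets the quadratic split additively after a change of variables, and then each separated Gauss sum is computable). The hardness half, however, is not a proof but a list of pointers to where a proof would live, and two of those pointers paper over the real substance. First, the \emph{Pinning Lemma} — that $\CSP(\mathcal{F}\cup\{[1,0],[0,1]\}) \leqslant_T \CSP(\mathcal{F})$ — is not obtained by ``using $=_k$ together with a suitable unary or partial evaluation''; in general no gadget over $\mathcal{F}\cup\mathcal{EQ}$ realizes the two pinning unaries, and the actual argument is a nonconstructive, support-based global reduction that occupies a substantial fraction of~\cite{Cai-Lu-Xia-csp}. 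Stating ``pinning is available'' as though it followed from having $\mathcal{EQ}$ is a genuine gap. Second, ``iteratively extracting from $f$ and $g$ signatures of strictly smaller arity \ldots\ arguing that the property of lying outside $\mathscr{A}$ or outside $\mathscr{P}$ is preserved'' is not how the proof proceeds and is not true as stated: pinning a non-affine or non-product signature does not reliably yield a non-affine or non-product signature of lower arity. The real proof, once constants are available, proceeds through a delicate case analysis on support structure (affine vs.\ non-affine support, then within affine-support signatures on whether the nonzero values obey a product structure), together with polynomial interpolation of weighted equalities to normalize signatures. You correctly identify the broad ingredients and the central difficulty (the two tractable classes must be defeated simultaneously), and you honestly flag that the weight of the argument sits in~\cite{Cai-Lu-Xia-csp}; but as a self-contained proof attempt the hardness direction has two load-bearing steps asserted rather than established.
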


\section{Main Result and Proof Outline}\label{sec-main}

Starting from this section, a signature means a
complex-valued EO signature satisfying the {\sc ars} condition and $\mathcal{F}$ denotes a set of such signatures.
The main result of this paper is
\begin{theorem}\label{main}
Let $\mathcal{F}$ be a set of {\rm EO} signatures satisfying {\sc ars}. Then,
$\rm{\#EO}( \mathcal{F})$ is \#P-hard unless $\mathcal{F}\subseteq \mathscr{A}$  or $\mathcal{F}\subseteq \mathscr{P}$, in which cases it is tractable.
\end{theorem}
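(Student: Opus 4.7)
The plan is to start from a nonzero signature $f \in \mathcal{F}$ of arity $2n$ and apply the mating construction $\mathfrak{m}_{ij}f$, which produces an arity-$4$ EO signature satisfying {\sc ars} whose matrix form \eqref{m-form} lays bare norms and inner products of the vectors ${\bf f}^{ab}_{ij}$. Since $\rm{\#EO}$ on a single arity-$4$ signature is already classified by Theorem~\ref{six-vertex}, we may assume each $\mathfrak{m}_{ij}f$ is tractable (otherwise $\rm{\#EO}(\mathcal{F})$ is \#P-hard). The Cauchy--Schwarz inequality forces $\mathfrak{m}_{ij}f$ into one of only two structural shapes: either its support has size $2$ and it realizes $\neq_4$ up to scalar; or it has support size $4$, in which case ${\bf f}^{01}$ and ${\bf f}^{10}$ are parallel unit-norm multiples while ${\bf f}^{00} = {\bf f}^{11} = {\bf 0}$, which by Lemma~\ref{lem-decom} factors $f = b(x_i,x_j) \otimes g$ with $b$ a binary signature in $\mathcal{B}$. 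This dichotomy on the mating outcome drives the whole proof via a case split on whether $\neq_4$ is realizable from $\mathcal{F}$.

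In the first case, $\neq_4$ realizable, a direct construction (Lemma~\ref{lem-4.1}) produces every $\neq_{2k} \in \mathcal{DEQ}$, and $\rm{\#EO}(\mathcal{DEQ} \cup \mathcal{F})$ rewrites as the bipartite problem $\holant{\mathcal{DEQ}}{\mathcal{F}}$. The main difficulty is simulating $\mathcal{EQ}$: local gadget constructions cannot do this because realizable signatures remain EO (Lemma~\ref{lemma:eo-gates}) whereas $\mathcal{EQ}$ is not EO. The plan is to simulate $\#\CSP$ \emph{globally}. First, using {\sc ars} pairwise, I would reduce $\#\CSP(|\mathcal{F}|^2)$ to $\holant{\mathcal{DEQ}}{\mathcal{F}}$, which together with Theorem~\ref{csp-dic} forces each $f$ to have affine support. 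Next, I would show that an EO signature with {\sc ars} and affine support has a very rigid ``pairwise opposite'' structure on its support. Finally, this structure is exactly what is needed to simulate $\mathcal{EQ}$ globally, yielding $\#\CSP(\mathcal{F}) \leqslant_T \holant{\mathcal{DEQ}}{\mathcal{F}}$; invoking Theorem~\ref{csp-dic} again gives the dichotomy in this case.

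In the second case, $\neq_4$ \emph{not} realizable, the plan is to prove $\mathcal{F} \subseteq \mathcal{B}$, from which tractability ($\mathcal{B} \subseteq \mathscr{P}$) follows. The strategy is induction on arity: given $f \notin \mathcal{B}$ of arity $2n$, realize a signature $g \notin \mathcal{B}$ of arity $2n-2$ via merging. The mating analysis above already supplies the crucial fact that $f$ has a binary tensor factor whenever the arity-$4$ mating signature has support size $4$ and is not of the $\neq_4$ shape; combined with Lemma~\ref{zero}, Lemma~\ref{0ii0}, and the unique prime factorization (Lemma~\ref{unique}), one can chase binary factors through the commuting merges $\partial_{(ij)}f$. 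Concretely, if some $\partial_{(ij)}f \notin \mathcal{B}$ we are done; otherwise all merges lie in $\mathcal{B}$, and the task is to produce a binary divisor $b(x_u,x_v) \mid f$ shared across enough merges so that $b \mid \partial_{(ij)}f$ for every disjoint $\{i,j\}$, which lifts to $b \mid f$. The arities $4,6,8$ are handled separately by the orthogonality structure of $\mathfrak{m}_{ij}f$.

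The hard part will be the interplay between the merging operator $\partial$ and divisibility in the unique prime factorization inside the inductive step for arity $2n \geqslant 10$. The challenge is that knowing $b \mid \partial_{(st)}f$ for a few pairs is not enough by itself; one needs propagation lemmas — from ``two pairs'' to a third pair where the double merge is nonzero, and from a ``triangle'' of three pairwise-intersecting merges to every disjoint merge — and these propagation steps require having enough free indices not in $\{u,v\}$ to invoke nonvanishing of $\partial_{(st)(ij)}f$, which is precisely why the threshold is arity $\geqslant 10$. Controlling the zero-merge exceptions via $b^\ii = (0,\ii,-\ii,0)$ and ruling out pathological cases where no binary factor propagates (by instead realizing a replacement $f' \notin \mathcal{B}$ that does satisfy the triangle property) is the delicate heart of the argument.
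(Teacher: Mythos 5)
Your proposal is correct and follows essentially the same two-pronged strategy as the paper: a mating/Cauchy--Schwarz dichotomy (Lemma~\ref{twocase}) that either yields $\neq_4$ or forces orthogonality, then the $\neq_4$ branch handled by the global $\#\CSP$ simulation via $|\mathcal{F}|^2$ and the pairwise-opposite affine-support structure (Section~\ref{neq_4}), and the non-$\neq_4$ branch handled by the $\mathcal{F}\subseteq\mathcal{B}$ induction with the triangle/two-pairs propagation lemmas and the arity $4,6,8$ base cases (Section~\ref{no-neq_4}). The one small point to tighten is that the support-size-$4$, parallel-vectors subcase of the mating analysis does not directly \emph{use} the factorization $f=b(x_i,x_j)\otimes g$; rather, since one works with an irreducible $f$ (justified by Lemma~\ref{lem-decom}), this subcase is a contradiction, leaving only the orthogonality alternative.
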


%
Since a complexity dichotomy of ${\rm\#EO}(f)$ for a single signature $f$ of arity $4$ is known (Theorem~\ref{six-vertex}), we wish to leverage this knowledge and realize
arity 4 signatures from a given set of signatures,
to which we can apply the known tractability 
criteria.
We will use the mating construction to 
realize signatures of arity $4$, then apply the Cauchy-Schwarz inequality.
Consider a nonzero signature $f\in \mathcal{F}$. We may assume that $f$ is irreducible. Otherwise we can replace $f$ by its irreducible factors without changing the complexity due to Lemma \ref{lem-decom}.
We have the following lemma.
\begin{lemma}\label{twocase}
Let $f\in \mathcal{F}$ be a nonzero irreducible signature of arity $n\geqslant 4$. Then 
one of the following alternatives holds:
\begin{itemize}
\item {\rm\#EO}$(\mathcal{F})$ is {\rm\#}P-hard,
     \item $\rm{\#EO}(\{\neq_4\}\cup \mathcal{F})
\leqslant_T \rm{\#EO}(\mathcal{F})$, or
    \item there exists a nonzero constant $\lambda$,
    such that for all pairs of distinct indices $\{i, j\}$,
    $M(\mathfrak{m}_{ij}f)= \lambda N^{\otimes 2}$, where $N^{\otimes 2}
    =\left[\begin{smallmatrix}
    0 & 0 & 0 & 1\\
     0 & 0 & 1 & 0\\
      0 & 1 & 0 & 0\\
       1 & 0 & 0 & 0\\
    \end{smallmatrix}\right]$.
    \end{itemize}
     We call the third statement the property of (mutual) orthogonality. 
\end{lemma}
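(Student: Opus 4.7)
The plan is to fix an arbitrary pair of distinct indices $\{i,j\}$ and analyze the arity-$4$ EO signature $\mathfrak{m}_{ij}f$ produced by the mating construction, whose matrix form is given in~(\ref{m-form}). Since ${\rm \#EO}(\mathfrak{m}_{ij}f)\leqslant_T {\rm \#EO}(\mathcal{F})$, Theorem~\ref{six-vertex} leaves two possibilities for each pair: either ${\rm \#EO}(\mathcal{F})$ is $\#$P-hard (which is the first alternative and we are done), or $\mathfrak{m}_{ij}f\in\mathscr{P}$. The remainder of the argument is a case analysis of the admissible forms of $M(\mathfrak{m}_{ij}f)$ under $\mathfrak{m}_{ij}f\in\mathscr{P}$, combining three ingredients: (a)~Lemma~\ref{product-affine-support}, so the support of $\mathfrak{m}_{ij}f$ is affine and of power-of-two size; (b)~Lemma~\ref{4-prod}, which equates the absolute values of all nonzero entries when the support has size $4$; and (c)~the Cauchy--Schwarz bound $|\langle {\bf f}^{01}_{ij},{\bf f}^{10}_{ij}\rangle|^2\leqslant |{\bf f}^{01}_{ij}|^2 \cdot |{\bf f}^{10}_{ij}|^2$ already noted after~(\ref{m-form}).

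By {\sc ars} the nonzero positions of $M(\mathfrak{m}_{ij}f)$ partition into three complementary pairs, $\{0011,1100\}$, $\{0101,1010\}$, and $\{0110,1001\}$, so the support size lies in $\{0,2,4,6\}$. Support $0$ is ruled out by $f\not\equiv 0$, and support size $6$ is not affine, so it only arises in the $\#$P-hard branch. Two remaining patterns, namely the supports $\{0101,1010\}$ and $\{0011,1100,0101,1010\}$, are immediately ruled out by Cauchy--Schwarz: each requires $|{\bf f}^{01}_{ij}|^2=|{\bf f}^{10}_{ij}|^2=0$ coexisting with a nonzero middle diagonal entry $\langle {\bf f}^{01}_{ij},{\bf f}^{10}_{ij}\rangle$, which is absurd. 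Each of the size-$2$ supports $\{0011,1100\}$ and $\{0110,1001\}$ makes $M(\mathfrak{m}_{ij}f)$ a nonzero scalar multiple of a (possibly permuted) matrix of $\neq_4$, so $\neq_4$ is realizable, and, using the paper's remark that all disequality signatures of a given arity are complexity-equivalent, the second alternative holds.

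The two substantive size-$4$ subcases remain. The support $\{0011,1100,0110,1001\}$, combined with Lemma~\ref{4-prod}, gives exactly $M(\mathfrak{m}_{ij}f)=\lambda_{ij}\,N^{\otimes 2}$ with $\lambda_{ij}=|{\bf f}^{ab}_{ij}|^2>0$ for every $(a,b)\in\{0,1\}^2$ and $\langle {\bf f}^{01}_{ij},{\bf f}^{10}_{ij}\rangle=0$, which is the orthogonality statement for this particular pair. The main obstacle is the other size-$4$ pattern, $\{0101,1010,0110,1001\}$: there ${\bf f}^{00}_{ij}={\bf f}^{11}_{ij}=\mathbf{0}$, and the equal-norm condition from Lemma~\ref{4-prod} together with the equality case of Cauchy--Schwarz forces ${\bf f}^{01}_{ij}$ and ${\bf f}^{10}_{ij}$ to be scalar multiples of one another. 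Unpacking this yields a tensor factorization $f=b(x_i,x_j)\otimes h$ with $b$ a nonzero binary signature and $h$ a signature on the remaining $n-2\geqslant 2$ variables, contradicting the irreducibility of $f$; this is the step where the irreducibility hypothesis is genuinely used.

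Finally, if the third alternative holds pairwise for every $\{i,j\}$, the constants $\lambda_{ij}$ must all agree by a single global norm computation:
\[
4\lambda_{ij}\;=\;\sum_{(a,b)\in\{0,1\}^2}|{\bf f}^{ab}_{ij}|^2\;=\;\sum_{\alpha\in\mathbb{Z}_2^n}|f^\alpha|^2,
\]
and the right-hand side is independent of $\{i,j\}$. Hence $\lambda_{ij}=\lambda:=\tfrac{1}{4}\sum_\alpha|f^\alpha|^2$, which is a nonzero constant because $f\not\equiv 0$, delivering the third alternative uniformly across all pairs.
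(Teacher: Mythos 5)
Your proof is correct and takes essentially the same approach as the paper: compute $M(\mathfrak{m}_{ij}f)$ from (\ref{m-form}), invoke Theorem~\ref{six-vertex} to land in $\mathscr{P}$, enumerate the affine supports allowed by {\sc ars} and Cauchy--Schwarz, dispatch the size-$2$ supports to $\neq_4$ and the ``middle-block'' size-$4$ support to a tensor factorization contradicting irreducibility, and finally show $\lambda_{ij}$ is a single global constant. The paper phrases the last step as mating the remaining two dangling pairs (a gadget that evaluates to $4\lambda_{ij}$ and is manifestly pair-independent), whereas you compute the same quantity directly as $\sum_{\alpha}|f^\alpha|^2$; these are two ways of saying the same thing, and your version is arguably the more transparent of the two.
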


\begin{proof}
 We consider the signature $\mathfrak m_{ij}f$ realized by mating two copies of $f$ for all pairs of distinct indices $\{i,j\}\subseteq[n]$.
 If \#EO$(\mathfrak m_{ij}f)$ is already \#P-hard, then \#EO$(\mathcal{F})$ is also \#P-hard since ${\rm\#EO}(\mathfrak m_{ij}f)\leqslant_T\rm{\#EO}(\mathcal{F})$.
Since we already have a complexity dichotomy for arity 4 signatures, we may assume that $\mathfrak m_{ij}f$ satisfies the tractability condition and that \#EO$(\mathfrak m_{ij}f)$ is computable in polynomial time
for every pair $\{i,j\}$.
%
Recall the form (\ref{m-form}). If there exists some $\{i, j\}$,
such that $\mathfrak m_{ij}f\equiv 0$, then ${\bf f}_{ij}^{00}={\bf f}_{ij}^{01}={\bf f}_{ij}^{10}={\bf f}_{ij}^{11}\equiv 0$, which implies $f\equiv 0$. A contradiction. So we have
$\mathfrak m_{ij}f \not\equiv 0$, for all pairs $\{i, j\}$.
Then by Theorem \ref{six-vertex}, \#EO$(\mathfrak m_{ij}f)$ is tractable if and only if $\mathfrak m_{ij}f \in \mathscr{P}$. By Lemma \ref{product-affine-support}, we know $\mathfrak m_{ij}f$ has affine support, and being nonzero it has support size either $2$ or $4$ (by the form in  (\ref{m-form}), the support size is not 1).
There are two cases depending on the support size of $\mathfrak m_{ij}f$ for all pairs $\{i, j\}$.

\begin{enumerate}
     \item  There exists some pair $\{i, j\}$ such that
     $\mathfrak m_{ij}f$ has support  size $2$. Then, 
     \begin{itemize}
         \item Either  $M(\mathfrak m_{ij}f)$  has the form $\lambda_{ij}\left[\begin{smallmatrix}
    0 & 0 & 0 & 1\\
     0 & 0 & 0 & 0\\
      0 & 0 & 0 & 0\\
       1 & 0 & 0 & 0\\
    \end{smallmatrix}\right]$ where $\lambda_{ij}=|{\bf f}_{ij}^{00}|^2=|{\bf f}_{ij}^{11}|^2\neq 0$,
    \item {\em or}  $M(\mathfrak m_{ij}f)$ has the form $\lambda_{ij}\left[\begin{smallmatrix}
    0 & 0 & 0 & 0\\
     0 & 0 & 1 & 0\\
      0 & 1 & 0 & 0\\
       0 & 0 & 0 & 0\\
    \end{smallmatrix}\right]$ where $\lambda_{ij}=|{\bf f}_{ij}^{01}|^2=|{\bf f}_{ij}^{10}|^2\neq 0$. 
     \end{itemize} 
    The form that $\langle{\bf f}_{ij}^{01}, {\bf f}_{ij}^{10}\rangle\neq 0$ while $|{\bf f}_{ij}^{01}|^2=0$ cannot occur since $|\langle{\bf f}_{ij}^{01}, {\bf f}_{ij}^{10}\rangle|\leqslant |{\bf f}_{ij}^{01}||{\bf f}_{ij}^{10}|$.
    In both forms, $\neq_4$ is realizable since $\lambda_{ij} \not =0$. 
    Thus,  {\#EO}$(\{\neq_4\}\cup \mathcal{F})\leqslant_T{\rm\#EO}( \mathcal{F})$.
    
     \item For all pairs $\{i, j\}$, 
     $\mathfrak m_{ij}f$ has support size $4$.
     By Lemma \ref{4-prod},  
     \begin{itemize}
         \item Either $M(\mathfrak m_{ij}f)$  has the form $\lambda_{ij}\left[\begin{smallmatrix}
    0 & 0 & 0 & 1\\
     0 & 0 & 1 & 0\\
      0 & 1 & 0 & 0\\
       1 & 0 & 0 & 0\\
    \end{smallmatrix}\right]$ where $\lambda_{ij}=|{\bf f}_{ij}^{00}|^2=|{\bf f}_{ij}^{11}|^2=|{\bf f}_{ij}^{01}|^2=|{\bf f}_{ij}^{10}|^2\neq 0$,
    \item {\em or} $M(\mathfrak m_{ij}f)$  has the form $\left[\begin{smallmatrix}
    0 & 0 & 0 & 0\\
     0 & \langle{\bf f}_{ij}^{01}, {\bf f}_{ij}^{10}\rangle & |{\bf f}_{ij}^{01}|^2 & 0\\
      0 & |{\bf f}_{ij}^{10}|^2  & \langle{\bf f}_{ij}^{10}, {\bf f}_{ij}^{01}\rangle & 0\\
       0 & 0 & 0 & 0\\
    \end{smallmatrix}\right]$, where 
    $|\langle{\bf f}_{ij}^{01}, {\bf f}_{ij}^{10}\rangle|^2=|{\bf f}_{ij}^{01}|^2|{\bf f}_{ij}^{10}|^2
    \not = 0$. 
     \end{itemize} 
     Again, the form that $\langle{\bf f}_{ij}^{01}, {\bf f}_{ij}^{10}\rangle\neq 0$ while $|{\bf f}_{ij}^{01}|^2=0$ cannot occur.
In the first form, four vectors form a set of mutually \emph{orthogonal} vectors of nonzero equal norm.  
    In the second form,
    by Cauchy-Schwarz, it means that ${\bf f}_{ij}^{01}= c {\bf f}_{ij}^{10}$
    for some  $c \in \mathbb{C}$. In addition, we know $|c|=1$ due to $|{\bf f}_{ij}^{01}|=|{\bf f}_{ij}^{10}|$ by {\sc ars}. Since $|{\bf f}_{ij}^{00}|^2=|{\bf f}_{ij}^{11}|^2=0$, we have ${\bf f}_{ij}^{00}={\bf f}_{ij}^{11}=\bf{0}$, the all-zero vector. Thus, $f$ is factorizable as a tensor product
    $f = b(x_i, x_j) \otimes g$, for some $g$ and some binary signature $b(x_i, x_j)=(0, a, \overline{a}, 0)$, a contradiction because $f$ is irreducible. 

    Thus, in this case, $M(\mathfrak m_{ij}f)=\lambda_{ij}N^{\otimes 2}$ for all pairs $\{i, j\}$. 
    Then, we show that all $\lambda_{ij}$ have the same value.   If we mate further the four dangling variables of $\mathfrak m_{ij}f$, i.e., we connect the  remaining $x_i$ in one copy of $f$ with another copy using $(\neq_2)$,
    and similarly for $x_j$,  which totally mates two copies of $f$, we get a value $4\lambda_{ij}$. 
This value clearly does not depend on the particular indices $\{i, j\}$.
    We denote the common value $\lambda_{ij}$ by $\lambda$.
    Thus, there exists $\lambda\neq 0$ such that  for all pairs  $\{i, j\}$, $M(\mathfrak m_{ij}f)=\lambda N^{\otimes 2}.$ \qedhere
\end{enumerate}
\end{proof}
Now we return to the proof of Theorem~\ref{main}.
By Lemma \ref{twocase}, we have two main cases depending on whether 
$\neq_4$ can be realized by  $\mathfrak m_{ij}f$ from $\mathcal{F}$. 
We give a proof outline to show how they will be handled. 
\begin{enumerate}

\item
The signature $\neq_4$
cannot be realized by  $\mathfrak m_{ij}f$ from 
$\mathcal{F}$. That is, every irreducible signature (or factor of signatures) in $\mathcal{F}$ satisfies the orthogonality property stated in Lemma~\ref{twocase}.

We show that this case happens only if $\mathcal{F}\subseteq \mathcal{B}$ (Theorem \ref{5theo}).
We want to prove this by induction. 
The general strategy is to start with
any signature $f\in \mathcal{F}$ of arity $2n$ that is \emph{not} in $\mathcal{B}$,  we  realize a signature $g$ of arity $2n-2$ that is also \emph{not} in $\mathcal{B}$, i.e. {\rm \#EO}$(
\{g\} \cup \mathcal{F})\leqslant_T{\rm \#EO}(\mathcal{F})$ (Lemma \ref{induction}). 
If we can  reduce the arity down to 4 (this is by a sequence of reductions that is constant in length
independent of the problem instance size of the graph), then we can show it is impossible for such a signature to satisfy the orthogonality. 
Thus, we can use it to realize $\neq_4$ or a \#P-hard signature by Lemma \ref{twocase}. 
However, our induction proof only works when the arity $2n\geqslant 10$ (there is an intrinsic reason for this).
Therefore we must establish
the base cases at arity $4, 6$ and $8$.
Fortunately, using the orthogonality of $f$, we can prove our theorem for signatures of arity $4, 6$ and $8$ separately (Lemma \ref{468}).

For the induction proof, we use merging to realize signatures of lower arity. It naturally reduces the arity by two. 
Given a signature $f$ $\notin \mathcal{B}$ of arity $2n\geqslant 10$, 
if $\partial_{(ij)}f \notin \mathcal{B}$ for some $\{i, j\}$, then we are done. 
So we may assume for every $\{i, j\}$, $\partial_{(ij)}f \in \mathcal{B}$.
We further inquire whether for every $\{i, j\}$, $\partial_{(ij)}f \not \equiv 0$.  If for some
$\{i, j\}$, $\partial_{(ij)}f\equiv 0$, then  it turns out
to be relatively easy to handle (Lemma \ref{0ii0}).  So we may assume
 for every $\{i, j\}$, $\partial_{(ij)}f \not \equiv 0$.
We aim to show that  there is a binary signature $b(x_u, x_v)$ such that $b(x_u, x_v) \mid f$.
If so, the ``quotient'' gives us a signature not in 
$\mathcal{B}$, but of arity $2n-2$, by Lemma \ref{lem-decom}.
In some cases we have to replace $f$ by another $f'$ to accomplish that.

Assuming $\partial_{(ij)}f \in \mathcal{B}$ for all $\{i, j\}$,
we prove there is a $b(x_u, x_v)$ such that $b(x_u, x_v) \mid f$ or $b(x_u, x_v)\mid f'$ in the following steps:
 
\begin{enumerate} 
    \item  If there is a binary signature $b(x_u, x_v)$ such that $b(x_u, x_v) \mid \partial_{(ij)}f$ for every $\{i, j\}$ disjoint with $\{u, v\}$, then $b(x_u, x_v) \mid f$ (Lemma \ref{factor}).
    \item We have assumed  $\partial_{(ij)}f \in \mathcal{B}$ for all $\{i, j\}$. 
    Suppose  there is one $\partial_{(uv)}f\equiv 0$. 
    We show that the binary signature $b^\ii(x_u, x_v)=(0, \ii, -\ii, 0)$ divides $\partial_{(ij)}f$ for every  $\{i, j\}$ disjoint with $\{u, v\}$ (Lemma \ref{onezero}).
    \item Now, we further assume $\partial_{(ij)}f \not\equiv 0$ for all $\{i, j\}$.
    We want to show that if a binary signature $b(x_u, x_v)$  divides a ``triangle'', i.e. $b(x_u, x_v)\mid \partial_{(rs)}f, \partial_{(st)}f, \partial_{(rt)}f$ (we say $f$ satisfies the $\Delta$-property), it  divides $\partial_{(ij)}f$ for every $\{i, j\}$ disjoint with $\{u, v\}$ (Lemma \ref{triangle}).
    To prove this, we need the following delicate lemma. 
 \item If a binary signature $b(x_u, x_v)$ divides ``two pairs'', i.e. $b(x_u, x_v)\mid \partial_{(st)}f, \partial_{(s't')}f$, where $\{s, t\}$ and $\{s', t'\}$ are distinct but not necessarily disjoint, then it divides $\partial_{(ij)}f$ for any $\{i, j\}$ which is disjoint with $\{u, v\}\cup\{s, t\}\cup\{s', t'\}$ that satisfies $\partial_{(st)(ij)}f\not\equiv 0$ and $\partial_{(s't')(ij)}f\not\equiv 0$ (Lemma \ref{twononzero}). 
\item Finally, we show that either (\rmnum 1) $f$ satisfies the $\Delta$-property, or (\rmnum 2) we can realize a signature $f'$, where $f'\not\in \mathcal{B}$ has the same arity as $f$, such that
either $\partial_{(ij)}f \not \in \mathcal{B}$ for some $\{i, j\}$, 
or $f'$ satisfies the $\Delta$-property 
(Lemma \ref{3indices}).
\end{enumerate}
These steps will accomplish the arity reduction inductive
step.

This case is handled in Section \ref{no-neq_4}. We will see that the unique prime factorization plays an important role in the proof.

\item
Otherwise, we have $\rm{\#EO}(\{\neq_4\}\cup \mathcal{F})
\leqslant_T \rm{\#EO}(\mathcal{F})$.

The signature $\neq_4$ can be used to realize any $(\neq_{2k}) \in \mathcal{DEQ}$ (Lemma \ref{lem-4.1}), and then the  problem $\rm{\#EO}(\mathcal{DEQ}\cup \mathcal{F})$ can be expressed as $\holant{\mathcal{DEQ}}{\mathcal{F}}$ (Lemma \ref{lem-4.2}). 
The next idea 
is to simulate \#CSP$(\mathcal{G})\equiv_T\holant{\mathcal{EQ}}{\mathcal{G}}$  using $\holant{\mathcal{DEQ}}{ \mathcal{F}}$ for some $\mathcal{G}$ closely related to $\mathcal{F}$, and we can apply the dichotomy of \#CSP (Theorem \ref{csp-dic}) to get hardness results.
The challenge is to simulate $\mathcal{EQ}$ using $\mathcal{DEQ}$ and  $\mathcal{F}$.
After some reflection one can observe that it is \emph{impossible} to realize $\mathcal{EQ}$ by direct gadget constructions. 
Since signatures in $\mathcal{DEQ}$ and  $\mathcal{F}$ are EO signatures  satisfying   {\sc ars}, 
by Lemma \ref{lemma:eo-gates} any gadget realizable from them is also an  EO signature. But clearly, 
any $(=_k) \in\mathcal{EQ}$ is not an EO signature. 
However we  found an alternative  way to  simulate $\mathcal{EQ}$ globally, and this is achieved  depending
crucially on some special properties of $\mathcal{F}$,
as follows:
\begin{enumerate}

\item First, using {\sc ars} we show  that
{\rm \#CSP}$(|\mathcal{F}|^2)\leqslant_T\holant{\mathcal{DEQ}}{ \mathcal{F}}$
(Lemma \ref{lem-square}), where $|\mathcal{F}|^2$ denotes the set of signatures by taking norm squares
of  signatures in $\mathcal{F}$, namely
$|\mathcal{F}|^2 =\{ |f|^2 \mid f\in \mathcal{F}\}$. 
This directly implies that $\holant{\mathcal{DEQ}}{ \mathcal{F}}$ is \#P-hard unless every signature in $\mathcal{F}$ has affine support (Corollary \ref{cor-square}). 

\item Then, consider an EO signature with affine support. We show its support has a special structure called \emph{pairwise opposite} (Definition \ref{def-oppo} and Lemma \ref{lem-opposite}).

\item Finally, given the support of every signature $f\in \mathcal{F}$ is pairwise opposite, 
we show {\rm \#CSP}$(\mathcal{F})\leqslant_T\holant{\mathcal{DEQ}}{ \mathcal{F}}$ (Lemma \ref{lem-self})
by a global simulation, and hence the problem
$\holant{\mathcal{DEQ}}{ \mathcal{F}}$ is \#P-hard unless $\mathcal{F}\subseteq \mathscr{A}$  or $\mathcal{F}\subseteq \mathscr{P}$ (Corollary \ref{cor-self}). 
\end{enumerate}
It follows that, in this case, we have $\rm{\#EO}(\{\neq_4\}\cup \mathcal{F})$ is \#P-hard unless $\mathcal{F}\subseteq \mathscr{A}$  or $\mathcal{F}\subseteq \mathscr{P}$ (Theorem \ref{theo-4}). 
This case is handled in Section \ref{neq_4}.
We will introduce the pairwise opposite structure and show the global reductions from \#CSP to \#EO problems.
\end{enumerate}

As observed earlier
 $\mathcal{B}\subseteq \mathscr{P}$.
 Thus  \rm{\#EO}$(\mathcal{F})$ is tractable if
 $\mathcal{F}\subseteq \mathcal{B}$,
 by Lemma~\ref{eo=holant} and Theorem~\ref{aptractable}.
  In Section \ref{no-neq_4}, we  show that
if  $\mathcal{F}\not\subseteq \mathcal{B}$
then either \rm{\#EO}$(\mathcal{F})$ 
is \#P-hard, 
or we have
\rm{\#EO}$(\{\neq_4\} \cup \mathcal{F})\leqslant_T$ 
\rm{\#EO}$(\mathcal{F})$.
In Section~\ref{neq_4}, we show that $\rm{\#EO}(\{\neq_4\}\cup \mathcal{F})$ is \#P-hard unless $\mathcal{F}\subseteq \mathscr{A}$ or $\mathcal{F}\subseteq \mathscr{P}$. This completes the proof of Theorem~\ref{main}.

\section{Interplay of Unique Prime Factorization and $\partial_{(ij)}$
 Operations
}\label{no-neq_4}
  In this section, we  show that
if  $\mathcal{F}\not\subseteq \mathcal{B}$
then either \rm{\#EO}$(\mathcal{F})$ is \#P-hard 
or we can realize $\neq_4$, i.e.,
\rm{\#EO}$(\{\neq_4\} \cup \mathcal{F})\leqslant_T$ 
\rm{\#EO}$(\mathcal{F})$, and then the results from Section~\ref{neq_4}
take over.
Suppose $\mathcal{F}\not\subseteq \mathcal{B}$, then
 it contains some signature $f\notin \mathcal{B}$, and 
we prove the statement by induction on the arity of $f$.
The general strategy is that we  start with
any signature $f$ of arity $2n\geqslant 10$ that 
is \emph{not} in $\mathcal{B}$,  and
realize a signature $g$ of arity $2n-2$ that is also \emph{not} in $\mathcal{B}$. 
As stated in Section~\ref{sec:intro} this induction only works for
arity $2n\geqslant 10$.
We prove the base cases of the induction
 separately, when  $f$ has arity 4, 6 or 8.

For the inductive step, we consider $\partial_{(ij)}f$ for all $\{i, j\}$. If there exists $\{i, j\}$ such that $\partial_{(ij)}f \notin \mathcal{B}$, 
then  we can realize
$g=\partial_{(ij)}f$ which has arity $2n-2$, and we are done.
Thus, we assume $\partial_{(ij)}f \in \mathcal{B}$ for all $\{i, j\}$. 
 We denote this property 
 by $f\in \int\mathcal{B}$. 
 Under the assumption that $f\in \int\mathcal{B}$, our goal is to show that there is a binary signature $b(x_u, x_v)$ such that
either $b(x_u, x_v) \mid f$ or there exists another  $f' \not\in \mathcal{B}$ 
 realizable from $f$, such that $f'$ has  the same arity as $f$, and 
$b(x_u, x_v) \mid f'$.
In the second case we may again assume $f' \in \int\mathcal{B}$,
for otherwise we may take $\partial_{(ij)}f'$  for some $\{i, j\}$.
Now we may replace $f$ by $f'$ in the second case. 
From  the factorization $f = b(x_u, x_v)\otimes g$, 
it follows from the definition of $\mathcal{B}$ that $g \notin \mathcal{B}$
since $f \notin \mathcal{B}$.
From  the factorization of $f$,
we can realize $g$ from $f$ by Lemma \ref{lem-decom}, and we are done. We carry out our induction proof in the next five lemmas.

For convenience, we use the following notations.
\begin{itemize}
    \item  $\mathcal{B} = \{ \mbox{tensor products of one or more binary EO signatures satisfying {\sc ars}}\}$.
    \item  $f\in \int\mathcal{B}$ denotes the property that $\partial_{(ij)}f \in \mathcal{B}$ for all $\{i, j\}$.
    \item $f\in \int\mathcal{B}_{\not\equiv 0}$ denotes the property that  $\partial_{(ij)}f\in \mathcal B$ and $\partial_{(ij)}f \not\equiv 0$ for all $\{i, j\}$.
    \item We say $f$ satisfies the 
 $\Delta$-property,  if there exist three distinct
 indices $\{r, s, t\}$
 and a binary signature $b(x_u, x_v)$ such that
$\{u, v\} \cap \{r, s, t\} = \emptyset$, and
 $b(x_u, x_v) \mid \partial_{(rs)}f, \partial_{(st)}f, \partial_{(rt)}f$.
\end{itemize}

\begin{lemma}\label{factor}
Let  $f\in \int \mathcal{B}$ be a signature of arity  $2n\geqslant 6$. If there exists a binary signature $b(x_u, x_v)$ such that $b(x_u, x_v)\mid \partial_{(ij)}f$ for all
$\{i, j\}$ disjoint with $\{u, v\}$, then $b(x_u, x_v)\mid f$.
\end{lemma}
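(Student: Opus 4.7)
The plan is to analyze the four restrictions $f^{ab}_{uv}:=f|_{(x_u,x_v)=(a,b)}$ of $f$ and show that $f^{00}_{uv}\equiv 0$, $f^{11}_{uv}\equiv 0$, and $\bar c\cdot f^{01}_{uv}=c\cdot f^{10}_{uv}$, where I write $b(x_u,x_v)=(0,c,\bar c,0)$ with $c\ne 0$. Granted these, the signature $\tilde g:=c^{-1} f^{01}_{uv}$ satisfies $f^{ab}_{uv}=b^{ab}\tilde g$ for all $(a,b)\in\{0,1\}^2$, which is precisely $f=b(x_u,x_v)\otimes\tilde g$, proving $b(x_u,x_v)\mid f$.

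To establish these three identities, I exploit that the merging operation $\partial_{(ij)}$ and the restriction at $(x_u,x_v)$ commute whenever $\{i,j\}\cap\{u,v\}=\emptyset$. By hypothesis, for every such pair $\{i,j\}$ we can write $\partial_{(ij)}f=b(x_u,x_v)\otimes g_{ij}$ for some $g_{ij}$ (the scalar-multiple alternative in the definition of divisibility is ruled out since $\partial_{(ij)}f$ has arity $2n-2\ge 4$). Restricting both sides at $(x_u,x_v)=(a,b)$ gives $\partial_{(ij)}[f^{ab}_{uv}]=b^{ab}\,g_{ij}$, so $\partial_{(ij)}[f^{00}_{uv}]=\partial_{(ij)}[f^{11}_{uv}]=0$, and, eliminating $g_{ij}$ between the $(0,1)$ and $(1,0)$ cases,
\[
\partial_{(ij)}\bigl[\bar c\cdot f^{01}_{uv}-c\cdot f^{10}_{uv}\bigr]\;=\;0
\]
for every such pair $\{i,j\}$.

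Thus each of $f^{00}_{uv}$, $f^{11}_{uv}$, and $h:=\bar c\cdot f^{01}_{uv}-c\cdot f^{10}_{uv}$ is a signature of arity $2n-2\ge 4$ whose every pairwise merge vanishes (all pairs of its variables being disjoint from $\{u,v\}$). By Lemma~\ref{zero}, every entry of Hamming weight strictly between $0$ and $2n-2$ of each function must be zero. Because $f$ is EO of arity $2n$, the support of $f^{00}_{uv}$ lies in weight-$n$ strings, the support of $f^{11}_{uv}$ in weight-$(n-2)$ strings, and that of $h$ in weight-$(n-1)$ strings; for $2n\ge 6$ (hence $n\ge 3$), each of these weights lies strictly inside the interval $(0,2n-2)$. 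Therefore $f^{00}_{uv}\equiv f^{11}_{uv}\equiv h\equiv 0$, and the three identities follow. The only real subtlety is this weight-range accounting that forces Lemma~\ref{zero} to eliminate the restrictions completely; this is precisely where the arity hypothesis $2n\ge 6$ is used.
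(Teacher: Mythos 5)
Your proof is correct and follows essentially the same route as the paper's: both consider the restrictions $f^{00}_{uv},f^{11}_{uv}$ and the linear combination $\bar c f^{01}_{uv}-c f^{10}_{uv}$, show that all their merges $\partial_{(ij)}$ vanish using the factorization $\partial_{(ij)}f = b(x_u,x_v)\otimes g_{ij}$ together with commutativity of restriction and merging, and then apply Lemma~\ref{zero}. The only cosmetic difference is that the paper invokes the EO case of Lemma~\ref{zero} for the combination $\bar a f^{01}_{uv}-a f^{10}_{uv}$ while you apply the weight-interval form of Lemma~\ref{zero} uniformly to all three restrictions; both are valid.
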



\begin{proof}
Recall that  $f_{uv}^{bc}$ denotes the signature obtained by setting variables $(x_u, x_v)$ of $f$ to $(b, c)\in \{0, 1\}^2$. These are called 
the pinning operations on $\{u ,v\}$. 
Clearly, for any $\{i, j\}$ disjoint with $\{u ,v\}$,
the pinning operations on $\{u ,v\}$ commute with
the merging operation $\partial_{(ij)}$, and so
we have $(\partial_{(ij)}f)_{uv}^{bc}=\partial_{(ij)}(f_{uv}^{bc})$.

We may assume the binary signature has the form
$b(x_u, x_v)=(0, a, \bar{a}, 0)$, where $a\neq 0$.
Consider the signature $f'\vcentcolon=\bar{a}f_{uv}^{01}-af_{uv}^{10}$. It is a signature on variables of $f$ other than $x_u$ and $x_v$. For any $\{i, j\}$ disjoint with $\{u, v\}$, by merging variables $x_i$ and $x_j$ of $f'$, 
and recalling that $\partial_{(ij)}$ is a linear operator, we have 
$$\partial_{(ij)}f'=\partial_{(ij)}(\bar{a}f_{uv}^{01}-af_{uv}^{10})=\bar{a}\partial_{(ij)}(f_{uv}^{01})-a\partial_{(ij)}(f_{uv}^{10})=\bar{a}(\partial_{(ij)}f)_{uv}^{01}-a(\partial_{(ij)}f)_{uv}^{10}.$$ 
By assumption,
$\partial_{(ij)}f=b(x_u, x_v)\otimes g$, where $g$ is 
a signature on variables other than $x_u, x_v, x_i, x_j$.
(Since $\partial_{(ij)}f$ has arity at least 4, $g$ is not a constant.)  Then we have $$
\partial_{(ij)}f' =\bar{a}(\partial_{(ij)}f)_{uv}^{01}-a(\partial_{(ij)}f)_{uv}^{10}=\bar{a}(ag)-a(\bar{a}g)\equiv 0.$$
Note that $f'$ is also an EO signature. By Lemma \ref{zero}, we have $f'\equiv 0$, and hence $\bar{a}f_{uv}^{01}\equiv a f_{uv}^{10}$. 
Moreover, by the factorization of $\partial_{(ij)}f$,
we have $\partial_{(ij)}(f_{uv}^{00})=(\partial_{(ij)}f)_{uv}^{00}\equiv 0$ and $\partial_{(ij)}(f_{uv}^{11})=(\partial_{(ij)}f)_{uv}^{11}\equiv 0$ for any $\{i, j\}$ disjoint with $\{u, v\}$.
Also, since $2n\geqslant 6$, $f_{uv}^{00}(\alpha)=f_{uv}^{11}(\alpha)=0$ when ${\rm wt}(\alpha)=0$ or $2n-2$.
By Lemma \ref{zero} again, we have $f_{uv}^{00}=f_{uv}^{11}\equiv 0$.
Hence, $f=(f_{uv}^{00}, f_{uv}^{01}, f_{uv}^{10}, f_{uv}^{11})=(0, a, \bar{a}, 0)\otimes (\frac{1}{a}f_{uv}^{01})$, and we have $b(x_u, x_v)\mid f$.
\end{proof}

Notice that for arity $2n \geqslant 6$, if $b(x_u, x_v)\mid f$ and thus
$f = b(x_u, x_v)\otimes g$, then by the definition of $\mathcal{B}$,
from $f
\notin \mathcal{B}$
we obtain $g \notin \mathcal{B}$, which has arity $2n-2$,
 completing the induction step using  Lemma \ref{lem-decom}.
Therefore, to apply Lemma~\ref{factor} we want to show that there is a binary signature $b(x_u, x_v)$
such that  $b(x_u, x_v)\mid \partial_{(ij)}f$ for every $\{i, j\}$ disjoint with $\{u, v\}$.
We first consider the case that  $\partial_{(uv)}f\equiv 0$
for some $\{u, v\}$.
\begin{lemma}\label{onezero}
Suppose  $f$ has arity $\geqslant 4$ and $f\in \int \mathcal{B}$. If $\partial_{(uv)}f\equiv 0$
for some $\{u, v\}$,  then 
 the binary signature $b^\ii(x_u, x_v)=(0, \ii, -\ii, 0)$ 
satisfies $b^\ii(x_u, x_v)\mid \partial_{(ij)}f$ for all $\{i, j\}$ disjoint with $\{u, v\}$.
\end{lemma}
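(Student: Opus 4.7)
The proof plan is essentially a one-line application of the commutativity of merging operations together with Lemma~\ref{0ii0}. Here is how I would carry it out.

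First, I would fix an arbitrary pair $\{i,j\}$ disjoint from $\{u,v\}$. Since $f$ has arity at least $4$ and $\{i,j\} \cap \{u,v\} = \emptyset$, the signature $\partial_{(ij)}f$ still has both $x_u$ and $x_v$ among its variables, so the operation $\partial_{(uv)}$ can be legitimately applied to it. By the commutativity of the merging operations (the commutative diagram noted in Section~\ref{prelim}),
\[
\partial_{(uv)}\bigl(\partial_{(ij)}f\bigr) = \partial_{(ij)}\bigl(\partial_{(uv)}f\bigr) = \partial_{(ij)}(0) \equiv 0,
\]
where the second equality uses the hypothesis $\partial_{(uv)}f \equiv 0$ and the linearity of $\partial_{(ij)}$.

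Next I would invoke the hypothesis $f \in \int\mathcal{B}$, which gives $\partial_{(ij)}f \in \mathcal{B}$. So we are in exactly the setting of Lemma~\ref{0ii0}: we have a signature in $\mathcal{B}$ (namely $\partial_{(ij)}f$) whose $\partial_{(uv)}$ vanishes. The ``only if'' direction of Lemma~\ref{0ii0} then yields $b^\ii(x_u, x_v) \mid \partial_{(ij)}f$, which is exactly what we want.

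Since $\{i,j\}$ was an arbitrary pair disjoint from $\{u,v\}$, this proves the conclusion. I do not anticipate any real obstacle: the lemma is essentially a bookkeeping step that combines the commutativity of $\partial_{(ij)}$ and $\partial_{(uv)}$ with the structural characterization in Lemma~\ref{0ii0}, and its role is to set up the later lemmas (Lemma~\ref{triangle}, Lemma~\ref{twononzero}, etc.) that identify a common binary factor $b(x_u, x_v)$ across all $\partial_{(ij)}f$, so that Lemma~\ref{factor} can be applied to conclude $b(x_u, x_v) \mid f$.
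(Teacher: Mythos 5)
Your proof is correct and takes essentially the same approach as the paper: commute $\partial_{(uv)}$ past $\partial_{(ij)}$ to conclude $\partial_{(uv)}(\partial_{(ij)}f)\equiv 0$, then apply Lemma~\ref{0ii0} to the signature $\partial_{(ij)}f\in\mathcal{B}$. The only cosmetic difference is that you spell out the linearity/disjointness bookkeeping a bit more explicitly, which the paper leaves implicit.
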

\begin{proof}
For any $\{i, j\}$ disjoint with $\{u, v\}$, 
the operations $\partial_{(ij)}$ and $\partial_{(uv)}$ commute.
Since $\partial_{(uv)}f\equiv 0$, we have
\[\partial_{(uv)} (\partial_{(ij)} f) = 
\partial_{(ij)} (\partial_{(uv)}f) \equiv 0.\] 
Since $\partial_{(ij)}f\in \mathcal{B}$, by Lemma \ref{0ii0}, we have $b^\ii(x_u, x_v)\mid \partial_{(ij)}f$. 
\end{proof}

In the following, for convenience
we denote 
$\partial_{(ij)} (\partial_{(uv)}f)$ by $\partial_{(ij)(uv)}f$.

Now, we assume $\partial_{(ij)}f\in \mathcal B$ and $\partial_{(ij)}f \not\equiv 0$ for all $\{i, j\}$.
We denote this property by $f\in \int \mathcal{B}_{\not\equiv 0}$.
Each $\partial_{(ij)}f$ has a unique prime factorization. 
 We will show that once we can find some binary signature $b(x_u, x_v)$ 
 that divides a ``triangle'', 
i.e. $b(x_u, x_v)|\partial_{(rs)}f, \partial_{(st)}f, \partial_{(rt)}f$ for three distinct $\{r,s,t\}$ disjoint with $\{u, v\}$, then it 
 divides $\partial_{(ij)}f$ for all $\{i, j\}$ disjoint with $\{u, v\}$.
 We first consider the case that $b(x_u, x_v)$ divides ``two pairs''.
The statement of the following lemma is  delicate.

\begin{lemma}\label{twononzero}
Let $f$ be a signature of arity $2n \geqslant 8$ and  
 $f\in \int \mathcal{B}_{\not\equiv 0}$.
Suppose there exist two pairs of indices $\{s, t\}$ and $\{s', t'\}$ that are distinct but not necessarily disjoint, and a binary signature $b(x_u, x_v)$,
where $\{u, v\} \cap (\{s, t\} \cup \{s', t'\}) = \emptyset$,
 such that $b(x_u, x_v)\mid \partial_{(st)}f, \partial_{(s't')}f$.
Then for any $\{i, j\}$ disjoint with $\{u, v\}\cup\{s, t\}\cup\{s', t'\}$, if $\partial_{(st)(ij)}f\not\equiv 0$ and $\partial_{(s't')(ij)}f\not\equiv 0$, 
then $b(x_u, x_v) \mid \partial_{(ij)}f$.
\end{lemma}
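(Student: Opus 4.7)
The plan is to leverage the unique prime factorization of $\partial_{(ij)}f$ in $\mathcal{B}$ together with the commutativity of merging operations whose index sets are disjoint. First, since $\{u,v\}$, $\{s,t\}$, $\{s',t'\}$ and $\{i,j\}$ are pairwise disjoint in the appropriate way, from $\partial_{(st)}f = b(x_u,x_v)\otimes h$ I would derive $\partial_{(st)(ij)}f = b(x_u,x_v)\otimes \partial_{(ij)}h$. By the nonvanishing hypothesis $\partial_{(st)(ij)}f\not\equiv 0$, so $b(x_u,x_v)\mid \partial_{(st)(ij)}f$; symmetrically, $b(x_u,x_v)\mid \partial_{(s't')(ij)}f$.

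Since $\partial_{(ij)}f\in \mathcal{B}$ and $\partial_{(ij)}f\not\equiv 0$, its unique prime factorization (Lemma~\ref{unique}) writes $\partial_{(ij)}f = \bigotimes_k c_k$ as a tensor product of irreducible binary EO signatures satisfying {\sc ars}. I would picture this decomposition as a perfect matching $M$ on the variables of $\partial_{(ij)}f$, each $c_k$ being an edge. A direct $2\times 2$ computation shows how $\partial_{(st)}$ transforms $M$: if $\{s,t\}\in M$, that edge is removed with a (necessarily nonzero, else the result would vanish) scalar multiplier; if $\{s,t\}\notin M$, the two edges of $M$ containing $s$ and $t$ are spliced through $\neq_2$ into a single new binary EO signature with {\sc ars} on the two remaining endpoints. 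Binary factors disjoint from $\{s,t\}$ are unaffected.

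Now $b(x_u,x_v)\mid \partial_{(st)(ij)}f$, together with the matching structure inherited by the result, forces $\{u,v\}$ to be an edge of the transformed matching. Tracking the possibilities this happens in exactly two scenarios: (A) $\{u,v\}$ was already an edge of $M$; or (B) the $M$-partner of $u$ lies in $\{s,t\}$ and the $M$-partner of $v$ is the remaining element of $\{s,t\}$, so that $\partial_{(st)}$ creates a new $\{u,v\}$ edge. In case (A) the binary factor on $\{u,v\}$ in $\partial_{(st)(ij)}f$ is exactly the factor $c_{k_0}(x_u,x_v)$ inherited from $M$; unique prime factorization then forces $c_{k_0}\sim b$, so $b(x_u,x_v)\mid \partial_{(ij)}f$ and we are done. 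The same dichotomy applied to $\partial_{(s't')}$ yields cases (A$'$) and (B$'$), and again (A$'$) immediately suffices.

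It remains to rule out the simultaneous occurrence of (B) and (B$'$); this is where the main delicacy lies. In (B), the $M$-partner of $u$ lies in $\{s,t\}$; in (B$'$), it lies in $\{s',t'\}$; and similarly for $v$. Since every variable has a unique partner in the matching $M$, both partners must lie in $\{s,t\}\cap\{s',t'\}$. If this intersection is empty we have a contradiction; if it is a single element, then $u$ and $v$ would share a partner, again a contradiction; the only remaining option is $\{s,t\}=\{s',t'\}$, contradicting the hypothesis that the two pairs are distinct. Hence (A) or (A$'$) must hold, proving the lemma. The main obstacle is to carry out the combinatorial case analysis cleanly and to verify that the spliced binary signature produced by $\partial_{(st)}$ is indeed a binary EO signature satisfying {\sc ars}, so that the unique prime factorization apparatus genuinely applies to $\partial_{(st)(ij)}f$.
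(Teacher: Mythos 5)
Your proof is correct and follows essentially the same strategy as the paper's: derive $b(x_u,x_v)\mid \partial_{(st)(ij)}f$ and $b(x_u,x_v)\mid \partial_{(s't')(ij)}f$ from commutativity, then use the unique prime factorization of $\partial_{(ij)}f\in\mathcal{B}$ to split into the case where $\{u,v\}$ already lies in a single binary factor (which immediately gives the conclusion) and the case where $x_u,x_v$ sit in two distinct binary factors $b_1(x_u,x_{u'})\otimes b_2(x_v,x_{v'})$, which is then ruled out by showing $\{s,t\}=\{u',v'\}=\{s',t'\}$. Your perfect-matching picture is a clean reformulation of exactly the same combinatorial argument the paper carries out directly on tensor factorizations, and the verification that splicing two binary EO factors through $\neq_2$ yields another binary EO factor with {\sc ars} (your noted ``main obstacle'') is indeed the small computation $(0,\bar a b, a\bar b,0)$ that the paper implicitly relies on.
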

\begin{proof}
By  hypothesis $f\in \int \mathcal{B}_{\not\equiv 0}$, so for any
$\{i, j\}$, we have $\partial_{(ij)}f \in \mathcal{B}$ and is 
nonzero, and thus it has a unique factorization with binary prime factors.
Let $\{i, j\}$ be disjoint with $\{u, v\}\cup\{s, t\}\cup\{s', t'\}$.
Suppose it satisfies the condition 
$\partial_{(st)(ij)}f\not\equiv 0$ and $\partial_{(s't')(ij)}f\not\equiv 0$.
We first prove that  $x_u$ and $x_v$ must appear in one
single binary prime factor $b'(x_u, x_v)$ in the factorization 
of $\partial_{(ij)}f$. 
That is, 
\begin{equation}\label{eqn-lm5.4-for-real}
\partial_{(ij)}f=b'(x_u, x_v) \otimes g,
\end{equation}
where $g\not\equiv 0$ is a signature on variables other than $x_u, x_v, x_{i}, x_j$.
For a contradiction,
suppose variables $x_u$ and $x_v$ appear in two
distinct binary prime factors $b_1(x_u, x_{u'})$ and $b_2(x_v, x_{v'})$ in the 
prime factorization of $\partial_{(ij)}f$.
Then, 
\begin{equation}\label{eqn-lm5.4-for-contradiction}
\partial_{(ij)}f=b_1(x_u, x_{u'})\otimes b_2(x_v, x_{v'})\otimes g',
\end{equation}
where $g'\not\equiv 0$ is a signature on variables other than $x_u, x_{u'}, x_v, x_{v'}, x_{i}, x_j$.
By hypothesis, $b(x_u, x_v) \mid \partial_{(st)}f$,
thus $\partial_{(st)}f = b(x_u, x_v) \otimes h$ for some
$h$ on variables other than $x_u,  x_v,  x_{s}, x_t$, which certainly 
include $x_{i}, x_j$.
Thus $\partial_{(ij)(st)}f = b(x_u, x_v) \otimes \partial_{(ij)}h$, and
 we have $b(x_u, x_v)|\partial_{(ij)(st)}f = \partial_{(st)(ij)}f$.
By hypothesis for this $\{i, j\}$
we have $\partial_{(st)(ij)}f\not\equiv 0$. This implies that after merging
 variables $x_s$ and $x_t$ of $\partial_{(ij)}f$, $x_u$ and $x_v$ form a nonzero binary signature.
By the form (\ref{eqn-lm5.4-for-contradiction})
of $\partial_{(ij)}f$, the only way  $x_u$ and $x_v$ can form a nonzero binary
signature in $\partial_{(st)(ij)}f$ is that the merge
operation is actually merging $x_{u'}$ and $x_{v'}$.
We conclude that $\{s, t\}=\{u', v'\}$.
We can repeat the same proof replacing $\{s', t'\}$
for $\{s, t\}$,
and since $b(x_u, x_v) \mid \partial_{(s't')(ij)}f$ and $\partial_{(s't')(ij)}f\not\equiv 0$, we have $\{s', t'\}=\{u', v'\}$. Hence, we have $\{s, t\}=\{s', t'\}$. This is a contradiction,
and (\ref{eqn-lm5.4-for-contradiction}) does not hold.

Thus (\ref{eqn-lm5.4-for-real}) holds.
Since $\{s, t\}$ is disjoint with $\{u, v, i, j\}$, by the form
(\ref{eqn-lm5.4-for-real}) of $\partial_{(ij)}f$, when merging variables $x_s$ and $x_t$ of $\partial_{(ij)}f$, we actually merge variables $x_s$ and $x_t$ of $g$ and the binary signature $b'(x_u, x_v)$ is not affected.
Thus,
$$\partial_{(st)(ij)}f=b'(x_u, x_v) \otimes \partial_{(st)}g.$$
That is, $b'(x_u, x_v) \mid \partial_{(st)(ij)}f.$ 
By hypothesis we also have  
 $b(x_u, x_v) \mid \partial_{(st)}f$. By the fact that $\{i, j\}$ is
 disjoint with $\{u, v, s, t\}$, 
we have  $b(x_u, x_v)\mid \partial_{(ij)(st)}f=\partial_{(st)(ij)}f.$ 
Thus $b(x_u, x_v)$ and $b'(x_u, x_v)$ both divide 
$\partial_{(st)(ij)}f\not\equiv 0$.
By the unique factorization lemma (Lemma \ref{unique}),
we have $b(x_u, x_v)=\lambda b'(x_u, x_v)$ for some $\lambda\neq 0$. 
In particular,  by (\ref{eqn-lm5.4-for-real}),
 $b(x_u, x_v) \mid \partial_{(ij)}f$.
\end{proof}

Now we come to the pivotal  ``triangle'' lemma.
Recall that the  $\Delta$-property was defined just before Lemma~\ref{factor}.
Suppose  $f$ satisfies the $\Delta$-property, i.e.,
there is a binary $b(x_u, x_v)$ that divides a ``triangle'', 
 $b(x_u, x_v) \mid \partial_{(rs)}f, \partial_{(st)}f, \partial_{(rt)}f$. 
 A key step in the proof of Lemma~\ref{triangle} is to show that
for any $\{i, j\}$ disjoint with $\{u, v,r, s, t\}$, among
the three iterated ``derivatives''
  $\partial_{(rs)(ij)}f, \partial_{(st)(ij)}f$ and $\partial_{(rt)(ij)}f$, 
at most one of them can be
identically zero. Then Lemma \ref{twononzero} applies.

\begin{lemma}\label{triangle}
Let $f\in \int \mathcal{B}_{\not\equiv 0}$ have arity $2n \geqslant 10$.
Suppose $f$ satisfies the $\Delta$-property.
Then there is a binary signature $b(x_u, x_v)$ such that 
 for any $\{i, j\}$ disjoint with $\{u, v\}$, we have $b(x_u, x_v)\mid \partial_{(ij)}f$.
\end{lemma}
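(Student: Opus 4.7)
The plan is to show that for every pair $\{i, j\}$ disjoint with $\{u, v\}$, at most one of the three iterated derivatives $\partial_{(rs)(ij)}f$, $\partial_{(st)(ij)}f$, $\partial_{(rt)(ij)}f$ can vanish identically, so that Lemma~\ref{twononzero} can be applied to two of the three pairs arising from the $\Delta$-triangle. This approach is direct when $\{i, j\}$ is also disjoint with $\{r, s, t\}$; when it is not, I would bootstrap by constructing an auxiliary triangle from fresh indices and reapplying the disjoint case.

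\medskip
\noindent
\emph{Disjoint case.} Fix $\{i, j\}$ disjoint with $\{u, v, r, s, t\}$ and suppose, for contradiction, that two of the three iterated derivatives vanish; say $\partial_{(rs)(ij)}f \equiv 0$ and $\partial_{(st)(ij)}f \equiv 0$. By commutativity of merging these read $\partial_{(rs)}(\partial_{(ij)}f) \equiv 0$ and $\partial_{(st)}(\partial_{(ij)}f) \equiv 0$. Since $\partial_{(ij)}f \in \mathcal{B}$ by hypothesis, Lemma~\ref{0ii0} forces both $b^\ii(x_r, x_s)$ and $b^\ii(x_s, x_t)$ to divide $\partial_{(ij)}f$; but these are two distinct binary prime factors sharing the variable $x_s$, which contradicts unique prime factorization in $\mathcal{B}$ (a given variable must appear in at most one binary prime factor). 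Hence at least two of the three iterated derivatives are nonzero, and since by the $\Delta$-hypothesis $b(x_u, x_v)$ divides each of $\partial_{(rs)}f$, $\partial_{(st)}f$, $\partial_{(rt)}f$, Lemma~\ref{twononzero} applied to the two surviving pairs yields $b(x_u, x_v) \mid \partial_{(ij)}f$. Note that the two chosen pairs are distinct but may share an index, which is precisely what Lemma~\ref{twononzero} permits.

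\medskip
\noindent
\emph{Remaining pairs.} If $\{i, j\} \subseteq \{r, s, t\}$, the conclusion is immediate from the $\Delta$-hypothesis. Otherwise, by the symmetry among $r, s, t$, write $\{i, j\} = \{r, j_0\}$ with $j_0 \notin \{u, v, r, s, t\}$. Because $2n \geq 10$ there are at least four indices outside $\{u, v, r, s, t, j_0\}$; pick three of them, say $j_1, j_2, j_3$. The disjoint case above already yields $b(x_u, x_v) \mid \partial_{(j_1 j_2)}f, \partial_{(j_2 j_3)}f, \partial_{(j_1 j_3)}f$, so $f$ satisfies the $\Delta$-property with the fresh triangle $\{j_1, j_2, j_3\}$ (same binary $b(x_u, x_v)$). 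Reapplying the disjoint case to this new triangle and the pair $\{r, j_0\}$, which by construction is disjoint with $\{u, v, j_1, j_2, j_3\}$, gives $b(x_u, x_v) \mid \partial_{(r j_0)}f$, completing the lemma.

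\medskip
The main obstacle, and the pivotal idea, is the contradiction in the disjoint case: two simultaneously vanishing iterated derivatives would force two distinct binary primes of the $\mathcal{B}$-signature $\partial_{(ij)}f$ to collide on a common variable, which unique prime factorization forbids. The remaining-pairs step is purely bookkeeping, but it is exactly what uses the arity bound $2n \geq 10$: one must have enough slack outside $\{u, v, r, s, t, j_0\}$ to erect an auxiliary triangle. Note also that one vanishing iterated derivative cannot be ruled out—it corresponds to a single legitimate binary prime factor of $\partial_{(ij)}f$—so the three-pair structure of the $\Delta$-hypothesis is essential: two pairs would not suffice.
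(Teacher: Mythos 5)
Your proof is correct and follows essentially the same route as the paper: establish that at most one of $\partial_{(rs)(ij)}f$, $\partial_{(st)(ij)}f$, $\partial_{(rt)(ij)}f$ can vanish (via Lemma~\ref{0ii0} and unique prime factorization in $\mathcal{B}$, since the three candidate prime factors $b^\ii(x_r,x_s)$, $b^\ii(x_s,x_t)$, $b^\ii(x_r,x_t)$ pairwise share a variable), then invoke Lemma~\ref{twononzero} on the two surviving pairs, and finally bootstrap via a fresh triangle to reach the pairs $\{i,j\}$ that intersect $\{r,s,t\}$. Your handling of the case $\{i,j\}\subseteq\{r,s,t\}$ by direct appeal to the $\Delta$-hypothesis is a minor streamlining, but the overall argument is the same.
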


\begin{proof}
By the  $\Delta$-property,
there is a binary signature $b(x_u, x_v)$ and $\{r, s, t\}$ disjoint with $\{u, v\}$
such that  
$b(x_u, x_v) \mid \partial_{(rs)}f, \partial_{(st)}f, \partial_{(rt)}f$.
For any $\{i, j\}$ disjoint with $\{u, v\}$,
we first consider the case that $\{i, j\}$ is also disjoint with $\{r, s, t\}$. 
Our idea is to show that among $\partial_{(rs)(ij)}f, \partial_{(st)(ij)}f$ and $\partial_{(rt)(ij)}f$, at most one of them can be a zero signature.
This implies that there are  two among these
 that are not identically zero.
 Then by Lemma \ref{twononzero}, we have $b(x_u, x_v) \mid \partial_{(ij)}f$.

By Lemma \ref{0ii0}, $\partial_{(rs)(ij)}f\equiv 0$ iff
the binary signature $b^{\ii}(x_r, x_s)=(0, \ii, -\ii, 0)$ 
divides $\partial_{(ij)}f$. 
Similarly, $\partial_{(st)(ij)}f\equiv 0$ iff $b^{\ii}(x_s, x_t)\mid \partial_{(ij)}f$, and $\partial_{(rt)(ij)}f\equiv 0$ iff $b^{\ii}(x_r, x_t) \mid \partial_{(ij)}f$.
By hypothesis, $f\in \int \mathcal{B}_{\not\equiv 0}$, so
 $\partial_{(ij)}f \not\equiv 0$. 
The signature $\partial_{(ij)}f \in \mathcal{B}$ has a
unique prime factorization. 
By Lemma \ref{unique},
since the  three signatures
$b^{\ii}(x_r, x_s), b^{\ii}(x_s, x_t)$ and $b^{\ii}(x_r, x_t)$ are on pairwise overlapping sets of variables,
   at most one of them can be a tensor factor of $\partial_{(ij)}f$.
Thus, among $\partial_{(rs)(ij)}f, \partial_{(st)(ij)}f$ and $\partial_{(rt)(ij)}f$, at most one of them can be a zero signature, which implies $b(x_u, x_v)\mid \partial_{(ij)}f$, by Lemma \ref{twononzero},
for all $\{i, j\}$  disjoint with $\{u, v, r, s, t\}$.

Now
suppose $\{i, j\}$  is 
disjoint with $\{u, v\}$,
but not disjoint with $\{r, s, t\}$. In the union $\{i, j\}\cup\{r, s, t\}\cup\{u, v\}$, there are at most $6$
distinct indices. Since the arity of $f$ is at least $10$, 
there are three indices $\{r', s', t'\}$ such that $\{r', s', t'\}$ is disjoint with $\{i, j\}\cup\{r, s, t\}\cup\{u, v\}$. 
Since $\{r', s'\}$ is disjoint with $\{u, v, r, s, t\}$, we can replace $\{i, j\}$ by  $\{r', s'\}$
in the proof above for the case when $\{i,j\}$ is disjoint with
$\{u,v, r,s,t\}$, 
and derive $b(x_u, x_v)\mid \partial_{(r's')}f$. 
By the same reason, we also have 
 $b(x_u, x_v)\mid \partial_{(s't')}f$,
 and  $b(x_u, x_v)\mid \partial_{(r't')}f$.
In other words we found a new ``triangle'', that is,
$f$ satisfies the  $\Delta$-property with the binary signature $b(x_u, x_v)$
and the triple $\{r', s', t'\}$ replacing $\{r, s, t\}$.
  Note that now $\{i, j\}$ is disjoint with $\{r', s', t'\}$. So,
 we can apply the proof  above with $\{r, s, t\}$ now replaced by $\{r', s', t'\}$, and we conclude that $b(x_u, x_v) \mid \partial_{(ij)}f$.
\end{proof}

\begin{remark}
This is the first place we require the arity of $f$ to be at least $10$.
\end{remark}

We go for the kill in the next lemma.
\begin{lemma}\label{3indices}
Let $f \in \mathcal{F}$ be a signature of arity $2n \geqslant 10$,  $f\notin \mathcal{B}$ and $f\in \int \mathcal{B}_{\not\equiv 0}$. Then
\begin{itemize}
    \item either $f$ satisfies the $\Delta$-property;
    \item  or there is a signature $f' \not\in \mathcal{B}$ 
that has  the same arity as $f$, such that
 ${\rm \#EO}(f' \cup \mathcal{F}) \leqslant_T {\rm \#EO}(\mathcal{F})$,
and the following hold:
either (1) $f'\not\in \int \mathcal{B}$ or  (2) $f'$ satisfies the $\Delta$-property.
\end{itemize}
\end{lemma}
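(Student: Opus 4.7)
The plan is to assume $f$ fails the $\Delta$-property and construct the required $f'$. Since $f \in \int\mathcal{B}_{\not\equiv 0}$, for every pair $\{i,j\}\subseteq [2n]$ the signature $\partial_{(ij)}f \in \mathcal{B}$ is nonzero and by Lemma \ref{unique} has a unique binary prime factorization, which determines a perfect matching $\pi_{ij}$ of $[2n]\setminus\{i,j\}$ together with a binary EO signature $b^{(ij)}_{uv}$ (up to scalar) for each pair $\{u,v\}\in\pi_{ij}$. The failure of the $\Delta$-property becomes the combinatorial statement that for every triple $\{r,s,t\}$ and every pair $\{u,v\}$ disjoint from it, the three potential binaries $b^{(rs)}_{uv}, b^{(st)}_{uv}, b^{(rt)}_{uv}$ cannot all three simultaneously exist and coincide (up to scalar).

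First I would fix any pair $\{r,s\}$ and any $\{u,v\}\in \pi_{rs}$ with binary factor $b := b^{(rs)}_{uv}$, and search for a third index $t$ completing a triangle: if some $t\in[2n]\setminus\{r,s,u,v\}$ admits both $b\mid \partial_{(rt)}f$ and $b\mid \partial_{(st)}f$, then $f$ itself satisfies the $\Delta$-property, contradicting the assumption. Otherwise every such $t$ violates divisibility on at least one of the two missing sides of the triangle. Using Lemma \ref{twononzero} repeatedly on the pair $\{r,s\}$ together with any other pair carrying $b$ as a factor, I would propagate divisibility by $b$ across many $\partial_{(ij)}f$ and partition the remaining indices into ``supporters'' and ``non-supporters'' of $b$. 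Since $2n\geq 10$, a pigeonhole argument on this partition produces three distinguished indices on which the obstruction to $\Delta$ concentrates in a controlled way.

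Next I would realize $f'$ of arity $2n$ from $f$ by an arity-preserving gadget: connect, via a single copy of $\neq_2$, one variable $x_k$ of $f$ to a variable of a binary EO ars signature $c$ extracted from the prime factorization of some $\partial_{(ij)}f$ (such a $c$ is realizable from $\mathcal{F}$ by Lemma \ref{lem-decom}). By Lemma \ref{lemma:eo-gates} the result $f'$ is an EO signature satisfying \textsc{ars}, it has arity $2n$ and satisfies ${\rm\#EO}(\{f'\}\cup \mathcal{F})\leqslant_T{\rm\#EO}(\mathcal{F})$; because $f \notin \mathcal{B}$ and $c$ is a single binary, one checks directly that $f'\notin \mathcal{B}$ as well. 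The choice of the index $k$ and the binary $c$, dictated by the classification obtained in the previous step, is calibrated so that either $b$ collides with $c$ in some $\partial_{(ij)}f'$ to produce a binary factor of the ``wrong'' form (so that $\partial_{(ij)}f'\notin\mathcal{B}$, giving $f'\notin \int\mathcal{B}$), or the twist creates the missing coincidence on the third side of a triangle so that $f'$ itself satisfies the $\Delta$-property.

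The hard part will be the combinatorial book-keeping of the matchings $\pi_{ij}$ under the failure of $\Delta$ and under the twisting by $c$: one must verify that some choice of $(k,c)$ always lands in one of the two good outcomes and never merely reshuffles the same obstruction into a different disguise. The arity hypothesis $2n\geq 10$ enters the argument twice, first to supply enough disjoint indices for the pigeonhole step, and second to guarantee that iterated derivatives such as $\partial_{(st)(ij)}f$ remain non-identically zero so that Lemma \ref{twononzero} stays applicable at every stage; both uses are what prevent the inductive argument from going through for smaller arities and explain why the cases $2n \in \{4,6,8\}$ must be treated separately.
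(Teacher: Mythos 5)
Your proposal captures some of the right ingredients — the arity-preserving gadget built by connecting a variable of $f$ to a binary EO-\textsc{ars} signature $c$ via $\neq_2$, the observation that this twist is invertible and hence preserves membership in the complement of $\mathcal{B}$, and the repeated use of Lemma~\ref{twononzero}. But the proposal has a genuine gap at its center. You explicitly defer ``the hard part,'' namely the claim that some choice of $(k,c)$ is ``calibrated'' to land $f'$ in one of the two good outcomes, and the vague pigeonhole step that is supposed to classify indices as ``supporters'' versus ``non-supporters'' of $b$ is never made precise. These two pieces \emph{are} the proof; without them there is nothing to check, and it is not evident that your propagation scheme for $b$ via Lemma~\ref{twononzero} would make progress — that lemma requires the iterated derivatives $\partial_{(st)(ij)}f$ and $\partial_{(s't')(ij)}f$ to be nonzero, and those can vanish, which is exactly the obstruction you have not analyzed.

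The paper's proof proceeds quite differently and removes the vagueness. It fixes a single pair, say $\{1,2\}$, writes out the unique prime factorization
$\partial_{(12)}f = b_1(x_3,x_4)\otimes\cdots\otimes b_{n-1}(x_{2n-1},x_{2n})$,
and splits on a clean structural dichotomy: either no factor $b_k$ is a scalar multiple of $(0,\ii,-\ii,0)$, or some is. In the first case, the factorization of $\partial_{(34)}f$ is analyzed concretely (where do $x_1,x_2$ sit in its matching?), the nonvanishing conditions needed for Lemma~\ref{twononzero} hold automatically because there are no $\ii$-factors, and one directly extracts a triangle $b_4(x_9,x_{10}) \mid \partial_{(67)}f,\partial_{(78)}f,\partial_{(68)}f$, so $f$ itself satisfies the $\Delta$-property — no twist is needed and no pigeonhole argument occurs. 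In the second case, the twist is performed \emph{specifically} with $(0,\ii,-\ii,0)$ on exactly those positions $x_{2k-1}$ whose factor is an $\ii$-factor, all at once, which normalizes $\partial_{(12)}f'$ so that it has no $\ii$-factors; invertibility of the twist ($\bigl(N\bigl[\begin{smallmatrix}0&\ii\\-\ii&0\end{smallmatrix}\bigr]\bigr)^4=I$) gives $f'\notin\mathcal{B}$, and then either $f'\notin\int\mathcal{B}$, or some $\partial_{(uv)}f'\equiv 0$ and Lemma~\ref{onezero} applies, or Case~1 applies to $f'$. The $\ii$-factor dichotomy is exactly the structural insight your proposal is missing, and it is what makes the nonvanishing conditions for Lemma~\ref{twononzero} verifiable rather than assumed. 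Without it, your plan is a sketch of a proof, not a proof.
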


\begin{proof}

 

Consider $\partial_{(12)}f$. Since $\partial_{(12)}f\in  \mathcal{B}$ and $\partial_{(12)}f\not\equiv 0$,
without loss of generality, we may assume in the unique prime factorization of $\partial_{(12)}f$, variables $x_3$ and $x_4$ appear in one binary prime factor, $x_5$ and $x_6$ appear in one binary prime factor, and so on. 
That is,
\begin{equation}\label{eqn:lm5-partial12-f}
\partial_{(12)}f= b_1(x_3, x_4)\otimes b_2(x_5, x_6)\otimes b_3(x_7, x_8)\otimes b_4(x_9, x_{10}) \otimes \ldots \otimes b_{n-1}(x_{2n-1}, x_{2n}).
\end{equation}
\begin{description}
\item{Case 1.}
For all $1\leqslant k \leqslant n-1$, $b_k(x_{2k+1}, x_{2k+2})\neq $
a scalar multiple of $(0, \ii, -\ii, 0)$. 

Then by Lemma \ref{0ii0},  $\partial_{(34)(12)}f\not\equiv 0$, and clearly, $b_k(x_{2k+1}, x_{2k+2})|\partial_{(34)(12)}f$  for $k\geqslant 2$.
In particular, we have $$b_2(x_5, x_6), b_3(x_7, x_8), b_4(x_9, x_{10})|\partial_{(34)(12)}f,$$ since $f$ has arity at least $10$.

Now
consider $\partial_{(34)}f$.
We have $\partial_{(34)}f\in \mathcal{B}$,
$\partial_{(34)}f\not\equiv 0$, 
 and $\partial_{(12)(34)}f =
\partial_{(34)(12)}f\not\equiv 0$.
\begin{itemize}
    \item 
If $x_1$ and $x_2$ appear in one binary prime factor $b'_1(x_1, x_2)$ in the unique prime factorization of $\partial_{(34)}f$, then after merging variables $x_1$ and $x_2$, the binary signature $b'_1(x_1, x_2)$ becomes a nonzero constant, but all other binary prime factors
of $\partial_{(34)}f$ are unchanged and appear in the  prime factorization of $\partial_{(12)(34)}f$.
By commutativity
$\partial_{(12)(34)}f$ = $\partial_{(34)(12)}f$, and
by (\ref{eqn:lm5-partial12-f})  the prime factors of $\partial_{(12)(34)}f$
are
precisely  $b_k(x_{2k+1}, x_{2k+2})$, for  $2 \leqslant k \leqslant n-1$,  we conclude that the unique prime factorization of $\partial_{(34)}f$  has the following form (up to a nonzero constant)
$$\partial_{(34)}f=b'_1(x_1, x_2)\otimes b_2(x_5, x_6)\otimes b_3(x_7, x_8)\otimes b_4(x_9, x_{10}) \otimes \ldots \otimes b_{n-1}(x_{2n-1}, x_{2n}).$$
\item
If $x_1$ and $x_2$ appear in two
distinct binary prime factors $b''_1(x_1, x_i)$ and $b''_2(x_2, x_j)$ in the unique prime factorization of $\partial_{(34)}f$, then after merging variables $x_1$ and $x_2$, from (\ref{eqn:lm5-partial12-f})
we have
\[\partial_{(12)(34)}f = \partial_{(34)(12)}f
=  c \cdot
b_2(x_5, x_6)\otimes b_3(x_7, x_8)\otimes b_4(x_9, x_{10}) \otimes 
\ldots \otimes b_{n-1}(x_{2n-1}, x_{2n})\]
for some nonzero constant $c$.
On the other hand,
from the form of $\partial_{(34)}f$,
the two variables $x_i$ and $x_j$ form a new nonzero binary $b''(x_i, x_j)$.
Thus the pair $\{i, j\}$ is either $\{5,6\}$, or $\{7,8\}$, etc.
and we may assume $(i, j)=(5, 6)$ by renaming the variables.
Thus, we have 
$$\partial_{(34)}f=b''_1(x_1, x_5)\otimes b''_2(x_2, x_6)\otimes b_3(x_7, x_8)\otimes b_4(x_9, x_{10}) \otimes \ldots \otimes b_{n-1}(x_{2n-1}, x_{2n}).$$
\end{itemize}
(In the following proof we can use any $b_j$, for $4 \leqslant j \leqslant n-1$;
for definiteness we  set $j=4$, and since $n \geqslant 5$ this choice $b_4$ is permissible.) 
In both cases above, we have $b_4(x_9, x_{10})|\partial_{(34)}f$, and $\partial_{(78)(34)}f\not\equiv 0$ since $b_3(x_7, x_8)\neq(0, \ii, -\ii, 0)$ by assumption. 
Moreover, note that in both cases, $x_6$ and $x_7$ do not appear
as the two variables of a single binary signature tensor factor of
$\partial_{(34)}f$. The same is true for 
$x_6$ and $x_8$. This implies that $\partial_{(67)(34)}f\not\equiv 0$ and $\partial_{(68)(34)}f\not\equiv 0$.
So we have derived
\[b_4(x_9, x_{10}) \mid \partial_{(34)}f,~~
\partial_{(78)(34)}f\not\equiv 0,~~
\partial_{(67)(34)}f\not\equiv 0,~~
\mbox{and}~~ \partial_{(68)(34)}f\not\equiv 0.\]
Clearly, by
(\ref{eqn:lm5-partial12-f}), we also have 
\[b_4(x_9, x_{10}) \mid \partial_{(12)}f,~~ \partial_{(78)(12)}f\not\equiv 0,~~ 
\partial_{(67)(12)}f\not\equiv 0,~~\mbox{and}~~ \partial_{(68)(12)}f\not\equiv 0.\]
Apply Lemma \ref{twononzero} three times
(with $\{u, v\}=\{9, 10\}, \{s, t\}=\{1, 2\}, \{s', t'\}=\{3, 4\},$ and
taking $\{i, j\}=\{6, 7\}, \{7, 8\}, \{6, 8\}$ separately), we have $$b_4(x_9, x_{10})\mid \partial_{(67)}f, \partial_{(78)}f, \partial_{(68)}f.$$
Thus $f$ satisfies the $\Delta$-property 
($\{u, v\}=\{9, 10\}$ and $\{r, s, t\}=\{6, 7, 8\}$)
and we are done.

\item{Case 2.}
There is a binary signature $b_{k-1}(x_{2k-1}, x_{2k})$ in the factorization of  $\partial_{(12)}f$ such that $b_{k-1}(x_{2k-1}, x_{2k})= $
a scalar multiple of $(0, \ii, -\ii, 0)$. 
Then by Lemma \ref{lem-decom}, we have 
the reduction \#EO$((0, \ii, -\ii, 0), f)\leqslant_T$\#EO$(f)$. Connecting the variable $x_{2k-1}$ of $f$ with $(0, \ii, -\ii, 0)$, we can realize a signature $f'$. 
Consider $\partial_{(12)}f'$. Again the operations commute:
 it is the same as connecting the variable $x_{2k-1}$ of $\partial_{(12)}f$ with $(0, \ii, -\ii, 0)$. Since $\partial_{(12)}f$ is a tensor product of binary signatures, connecting the variable $x_{2k-1}$ of $\partial_{(12)}f$ with $(0, \ii, -\ii, 0)$ is just connecting the variable $x_{2k-1}$ of the binary $b_{k-1}(x_{2k-1}, x_{2k})$ with $(0, \ii, -\ii, 0)$, which gives a binary $(0, 1, 1, 0)$. 
That is, $\partial_{(12)}f'$ is still a tensor product of the same binary signatures as in  $\partial_{(12)}f$ except that $b_{k-1}(x_{2k-1}, x_{2k})=(0, \ii, -\ii, 0)$ is replaced by $b'_{k-1}(x_{2k-1}, x_{2k})=(0, 1, 1, 0)$. Similarly, for any binary signature $b_{\ell-1}(x_{2\ell-1}, x_{2\ell})=(0, \ii, -\ii, 0)$ in $\partial_{(12)}f$, we modify it in this way (together all at once). Thus, we can realize a signature $f'$ by connecting some variables with $(0, \ii, -\ii, 0)$ such that 
$$\partial_{(12)}f'= b'_1(x_3, x_4)\otimes b'_2(x_5, x_6)\otimes b'_3(x_7, x_8)\otimes b'_4(x_9, x_{10}) \otimes \ldots \otimes b'_{n-1}(x_{2n-1}, x_{2n}),$$
where $b'_k(x_{2k+1}, x_{2k+2})\neq
$ a scalar multiple of $(0, \ii, -\ii, 0)$ for any $1 \leqslant k \leqslant n-1$.
Moreover, 
we know $f' \notin \mathcal{B}$ since $f\notin \mathcal{B}$;
this follows from the closure property of $\mathcal{B}$
under the operation of connecting
a variable by $(0, \ii, -\ii, 0)$  via
$\neq_2$,
and the fact that if we connect three times $(0, \ii, -\ii, 0)$ via
$\neq_2$ in a chain from $f'$, we get $f$ back:  
$\left(N\left[\begin{smallmatrix}
0 & \ii \\
-\ii & 0
\end{smallmatrix}\right] \right)^4 =I$.

If $f'\notin \int \mathcal{B}$, we are done.
Otherwise, $f'\in \int \mathcal{B}$. If there is $\{u, v\}$ such that $\partial_{(uv)}f' \equiv 0$, then by Lemma \ref{onezero}, we have $b^\ii(x_u, x_v)\mid \partial_{(ij)}f'$ for any $\{i, j\}$ disjoint with $\{u, v\}$ where $b^\ii(x_u, x_v)=(0, \ii, -\ii, 0)$. Then clearly  $f'$ satisfies the
$\Delta$-property.
 Otherwise, $f'\in \int \mathcal{B}_{\not\equiv0}$.
 As we just proved in Case 1, now replacing $f$ by $f'$, we have 
 $b'_4(x_9, x_{10})\mid \partial_{(67)}f', \partial_{(78)}f', \partial_{(68)}f'.$ This completes the proof.
\end{description}
\vspace{-3ex}
\end{proof}
\vspace{-1ex}
\begin{remark}
This proof also requires the arity of $f$ to be at least $10$.
\end{remark}

\begin{lemma}[Induction]\label{induction}
If $\mathcal{F}$ contains a signature $f\notin \mathcal{B}$ of arity $2n\geqslant 10$, then there is a signature $g\notin \mathcal{B}$ of arity $2n-2$ such that {\rm \#EO}$(\{g\}\cup \mathcal{F})\leqslant_T{\rm \#EO}(\mathcal{F})$.
\end{lemma}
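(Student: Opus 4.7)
The plan is to combine the preceding lemmas into a straightforward case analysis on how $f$ behaves under the merging operations $\partial_{(ij)}$. First I would dispose of the easy case: if there exists some pair $\{i,j\}$ such that $\partial_{(ij)}f \notin \mathcal{B}$, then $g \vcentcolon= \partial_{(ij)}f$ has arity $2n-2$, lies outside $\mathcal{B}$, and is directly realizable from $f$ by merging with $\neq_2$, giving $\#\mathrm{EO}(\{g\}\cup\mathcal{F})\leqslant_T \#\mathrm{EO}(\mathcal{F})$. So from now on assume $f \in \int\mathcal{B}$.

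The strategy in the remaining case is to produce a binary signature $b(x_u,x_v)$ that divides $f$, and then take $g$ to be the complementary tensor factor, which has arity $2n-2$, which lies outside $\mathcal{B}$ because $f$ does (by the definition of $\mathcal{B}$), and which is realizable from $f$ by Lemma~\ref{lem-decom}. To produce such a $b(x_u,x_v)$, I would split further: if some $\partial_{(uv)}f \equiv 0$, then Lemma~\ref{onezero} gives $b^{\ii}(x_u,x_v) \mid \partial_{(ij)}f$ for every $\{i,j\}$ disjoint from $\{u,v\}$, and Lemma~\ref{factor} upgrades this to $b^{\ii}(x_u,x_v) \mid f$, finishing this subcase. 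Otherwise $f \in \int\mathcal{B}_{\not\equiv 0}$, so Lemma~\ref{3indices} applies: either $f$ itself already satisfies the $\Delta$-property, or we can realize from $f$ (and hence from $\mathcal{F}$) a new signature $f' \notin \mathcal{B}$ of the same arity that either has some $\partial_{(ij)}f' \notin \mathcal{B}$, in which case we are back in the easy case applied to $f'$, or satisfies the $\Delta$-property.

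Once either $f$ or the replacement $f'$ satisfies the $\Delta$-property, Lemma~\ref{triangle} yields a binary signature $b(x_u,x_v)$ dividing $\partial_{(ij)}(\cdot)$ for every $\{i,j\}$ disjoint from $\{u,v\}$. Applying Lemma~\ref{factor} once more promotes this to $b(x_u,x_v) \mid f$ (respectively $b(x_u,x_v)\mid f'$). Writing the dividend as $b(x_u,x_v)\otimes g$ with $g \notin \mathcal{B}$ of arity $2n-2$, Lemma~\ref{lem-decom} provides the desired reduction $\#\mathrm{EO}(\{g\}\cup\mathcal{F})\leqslant_T \#\mathrm{EO}(\mathcal{F})$.

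I do not expect a serious technical obstacle here because all of the heavy lifting (the unique-prime-factorization arguments, the triangle propagation, and the interplay between $\partial_{(ij)}$ and pinning) is already encapsulated in Lemmas~\ref{factor},~\ref{onezero},~\ref{twononzero},~\ref{triangle}, and~\ref{3indices}. The only delicate point is bookkeeping: when we replace $f$ by $f'$, we must remember that $f'$ is itself realizable from $\mathcal{F}$, so that the final reduction $\#\mathrm{EO}(\{g\}\cup\mathcal{F})\leqslant_T \#\mathrm{EO}(\mathcal{F})$ composes properly through $f'$; this is immediate because $f'$ was obtained from $f$ by connecting variables with binary signatures $(0,\ii,-\ii,0)$ (which are themselves EO signatures satisfying {\sc ars}) and hence by gadget construction. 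The arity hypothesis $2n \geqslant 10$ is needed precisely so that Lemmas~\ref{triangle} and~\ref{3indices} apply.
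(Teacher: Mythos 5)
Your proposal follows exactly the same case decomposition as the paper's proof: first the easy case $\partial_{(ij)}f \notin \mathcal{B}$, then the zero-derivative case via Lemmas~\ref{onezero} and~\ref{factor}, and finally the $\int\mathcal{B}_{\not\equiv 0}$ case via Lemmas~\ref{3indices},~\ref{triangle}, and~\ref{factor}, with the bookkeeping for $f'$ handled in the same way. One small point worth tightening: when the second alternative of Lemma~\ref{3indices} returns an $f'$ satisfying the $\Delta$-property, the lemma's statement only guarantees $f' \in \int\mathcal{B}$, not $f' \in \int\mathcal{B}_{\not\equiv 0}$, which is what Lemma~\ref{triangle} requires; in the subcase where some $\partial_{(uv)}f' \equiv 0$, you should instead invoke Lemmas~\ref{onezero} and~\ref{factor} directly to get $b^\ii(x_u,x_v)\mid f'$ (the paper's phrase ``similar to the proof above for $f$'' implicitly reruns the full case split on $f'$, covering this possibility). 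This is a one-line fix and does not affect the structure of the argument.
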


\begin{proof}
If $f  \not  \in \int \mathcal{B}$, then 
there exists $\{i, j\}$ such that $\partial_{(ij)}f \notin \mathcal{B}$, 
and  we are done by choosing $g=\partial_{(ij)}f$.
Thus, we assume $f \in \int \mathcal{B}$. 
If   $\partial_{(uv)}f\equiv 0$ for some indices $\{u, v\}$,
 then by Lemmas \ref{onezero} and \ref{factor}, 
the binary signature $b^\ii(x_u, x_v)=(0, \ii, -\ii, 0)$ divides $f$.
That is,  $f=b^\ii(x_u, x_v)\otimes g$ where $g$ is a signature of arity $2n-2$,
 and $g\notin \mathcal{B}$ since $f\notin \mathcal{B}$. 
By Lemma \ref{lem-decom},  we have {\rm \#EO}$(\{g\} \cup \mathcal{F})\leqslant_T{\rm \#EO}(\mathcal{F})$.
So we may assume $f\in \int \mathcal{B}_{\not\equiv 0}$. 
Now we apply Lemma \ref{3indices}.
If the first alternative of  Lemma \ref{3indices}
holds, then $f$ satisfies the $\Delta$-property. Then by Lemmas \ref{triangle} and \ref{factor}, there is  a binary signature $b(x_u, x_v)$ such that $b(x_u, x_v)\mid f$. This divisibility of $f$ produces
a signature not in $\mathcal{B}$ of arity $2n-2$
similar to what we have just proved, and we are done.
If the second alternative of  Lemma \ref{3indices} 
holds, then 
 we have a signature $f'\not\in \mathcal{B}$ having the same arity as $f$. 
We have ${\rm \#EO}(\{f' \}\cup \mathcal{F}) \leqslant_T {\rm \#EO}(\mathcal{F})$.  
If $f' \not\in \int \mathcal{B}$,
then there exists $\{i, j\}$ such that $\partial_{(ij)}f' \notin \mathcal{B}$, and we can take $\partial_{(ij)}f'$
as $g$,
and so we are done.
Otherwise, by the conclusion of Lemma \ref{3indices},
 $f'$ satisfies the $\Delta$-property. Similar to the proof
above for $f$, there is  a binary signature $b(x_u, x_v)$ such that $b(x_u, x_v)\mid f'$. 
This divisibility of $f'$ produces
a signature not in $\mathcal{B}$ of arity $2n-2$.
This completes the inductive step.
\end{proof}

Now, we use the orthogonality property
to prove the base cases. 
\begin{lemma}[Base cases]\label{468}
If $\mathcal{F}$ contains a signature $f\notin \mathcal{B}$ of arity $4$, $6$ or $8$, then either
  \rm{\#EO}$( \mathcal{F})$ is \#P-hard or \rm{\#EO}$(\{\neq_4 \} \cup \mathcal{F})\leqslant_T$ \rm{\#EO}$(\mathcal{F})$. 

\end{lemma}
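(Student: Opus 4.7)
The plan is to reduce to an irreducible factor of $f$ and then invoke Lemma~\ref{twocase}. I would first use Lemmas~\ref{decom-eo} and~\ref{lem-decom} to factor $f$ into irreducible EO signatures satisfying {\sc ars} without changing the complexity; since every binary EO signature satisfying {\sc ars} belongs to $\mathcal{B}$, the hypothesis $f\notin\mathcal{B}$ guarantees an irreducible factor $g$ of $f$ of arity at least $4$ with $g\notin\mathcal{B}$. It therefore suffices to derive the conclusion with $g$ in place of $f$.

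I would next apply Lemma~\ref{twocase} to $g$: the first two alternatives (\#P-hardness and realizability of $\neq_4$) give the desired conclusion directly, so the focus is on ruling out the third, orthogonality alternative, namely that $M(\mathfrak{m}_{ij}g)=\lambda N^{\otimes 2}$ for all pairs $\{i,j\}$ with some common nonzero $\lambda$. Unpacking via~(\ref{m-form}), this is equivalent to the four vectors ${\bf g}^{ab}_{ij}$ being pairwise orthogonal with common squared norm $\lambda$. Setting $W_\alpha:=|g(\alpha)|^2$, this translates to the quadrant identity
\[
\sum_{\alpha:\,\alpha_i=a,\,\alpha_j=b}W_\alpha \;=\; \lambda\qquad\text{for all pairs } \{i,j\} \text{ and all } (a,b)\in\{0,1\}^2.
\]
Writing $2n'$ for the arity of $g$, I would evaluate $\sum_\alpha W_\alpha\bigl(\sum_i\alpha_i\bigr)^2$ in two different ways. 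Expanding $(\sum_i\alpha_i)^2=\sum_i\alpha_i+2\sum_{i<j}\alpha_i\alpha_j$ (using $\alpha_i^2=\alpha_i$) and applying the quadrant identity to both the linear and bilinear sums yields $4n'\lambda+2\binom{2n'}{2}\lambda=2n'(2n'+1)\lambda$. On the other hand, because $g$ is an EO signature, $W_\alpha$ is supported on $\mathscr{H}_{2n'}$, so $\sum_i\alpha_i\equiv n'$ on the support and the same second moment equals $(n')^2\cdot 4\lambda$. Comparing the two evaluations gives $2n'(2n'+1)=4(n')^2$, i.e.\ $1=0$, contradicting $\lambda\neq 0$.

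The main obstacle is the careful translation of the matrix identity $M(\mathfrak{m}_{ij}g)=\lambda N^{\otimes 2}$ into the quadrant identity on $W_\alpha$, together with spotting the right moment to match against the value forced by the EO constraint; the pairwise balance on $W$ and the sharp level-set $\sum_i\alpha_i=n'$ are precisely the two pieces that collide in the second moment. Once these are in place, the arithmetic $2n'(2n'+1)\neq 4(n')^2$ closes the argument, and the proof is arity-agnostic, handling arity $4$, $6$, and $8$ (and in fact any arity $\geqslant 4$) uniformly.
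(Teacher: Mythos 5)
Your proposal is correct, and it establishes the lemma by a genuinely different and cleaner route than the paper. Both proofs begin identically: reduce to an irreducible factor $g\notin\mathcal{B}$ of arity $\geqslant 4$ (using Lemmas~\ref{decom-eo} and~\ref{lem-decom}), invoke Lemma~\ref{twocase}, and aim to refute the orthogonality alternative, i.e.\ the quadrant identity $|{\bf g}^{ab}_{ij}|^2=\lambda$ for all pairs $\{i,j\}$ and all $(a,b)$. The difference is in how that identity is refuted. The paper argues entry-by-entry in an arity-specific way: for arity $4$ it shows every support entry has norm $\sqrt{\lambda}$ and then counts two of them in one quadrant vector; for arity $6$ a similar count after establishing $|{\bf g}^{abc}_{ijk}|^2=\lambda/2$; for arity $8$ a more delicate argument involving quantities $\Delta,\Delta'\geqslant 0$ that exploits nonnegativity. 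It also remarks that this linear-system method ``will not work for large $n$'' since the system becomes underdetermined.

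Your second-moment (variance) argument sidesteps all of this. The quadrant identities make, under the nonnegative weight $W$, each coordinate unbiased ($\sum_\alpha W_\alpha\alpha_i=2\lambda$) and each product $\lambda$ ($\sum_\alpha W_\alpha\alpha_i\alpha_j=\lambda$), so the weighted second moment of $\sum_i\alpha_i$ is $2n'(2n'+1)\lambda$; meanwhile the EO support constraint pins $\sum_i\alpha_i\equiv n'$, making the same moment $4(n')^2\lambda$. The mismatch $2n'\lambda=0$ forces $\lambda=0$, contradicting the nonzero-$\lambda$ requirement in Lemma~\ref{twocase}. The computation checks out, and — as you observe — it is arity-agnostic. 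This is in fact strictly more than what Lemma~\ref{468} needs: your argument rules out the orthogonality alternative for every irreducible EO signature with {\sc ars} of arity $\geqslant 4$, which would give Theorem~\ref{5theo} directly and render the arity-reduction induction of Section~\ref{no-neq_4} (Lemmas~\ref{factor}--\ref{induction}) unnecessary. It also shows the paper's concern about the linear system being underdetermined at large arities does not preclude deriving $\lambda=0$, since your moment identity supplies exactly the right linear combination. A minor gloss: your final arithmetic gives $2n'\lambda=0$ rather than literally $1=0$; either way $\lambda\neq 0$ is violated.
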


\begin{proof}
Again by Lemma \ref{lem-decom}, we may assume $f$ is irreducible. Otherwise, we just need to analyze each irreducible factor of $f$. 
More specifically, if $f\notin\mathcal{B}$ and $f$ is reducible, then there exists an irreducible factor $g$ of $f$ such that $g\notin\mathcal{B}$, and $g$ has arity $4$ or $6$. If 
we can use $g$ to realize a \#P-hard signature or $\neq_4$, 
we can also use $f$  to do so. 

By Lemma \ref{twocase}, we may assume that $f$ satisfies the orthogonality. Otherwise, we are done.

    Therefore, we have $$|{\bf f}_{ij}^{ab}|^2=\lambda$$ for any 
    $(a,b) \in \{0, 1\}^2$, and any pair $\{i, j\}$.
    This readily leads to a contradiction for signatures of arity $4$
    as follows.
    Suppose $f$ is an irreducible signature on four variables $x_1, x_2, x_3, x_4$. Let $(i, j, k, \ell)$ be an arbitrary permutation of $\{1, 2, 3, 4\}$.
    Consider the vector ${\bf f}_{ij}^{00}$. It has only one possible nonzero entry ${f}_{ijk\ell}^{0011}$ since the support of $f$ is on half weight.
   Thus, $$|{\bf f}_{ij}^{00}|^2=|f_{ijk\ell}^{0011}|^2=\lambda$$ for any $(x_i, x_j, x_k, x_\ell)=(0, 0, 1, 1)$.
   Since $(i, j, k, \ell)$ is an arbitrary permutation of $\{1, 2, 3, 4\}$,
   $f_{ijk\ell}^{0011}$ is an arbitrary entry of $f$ at half weight, and since $f$ is nonzero, every
   weight two entry of $f$ has the same nonzero norm 
   $\sqrt{\lambda}$.
   However,
    consider the vector ${\bf f}_{ij}^{01}$, it has two  nonzero entries $f_{ijk\ell}^{0101}$ and $f_{ijk\ell}^{0110}$.
    Hence,
    $$\lambda=|{\bf f}_{ij}^{01}|^2=|f_{ijk\ell}^{0101}|^2+|f_{ijk\ell}^{0110}|^2=2\lambda,$$ which means $\lambda=0$. 
    This is a
contradiction. 

Before we go into the technical details of the proof for signatures of arity $6$ and $8$, we first give some intuitions. 
By considering the norm-squares of entries in $f$ as unknowns,
the orthogonality property of $f$ actually gives a linear system. Our proof is to show that when $f$ has small arity $4, 6, 8$, the solution region of such a system only has the trivial zero point. We illustrate this perspective by the arity $4$ case.
Suppose $f$ has arity $4$. It has ${4\choose 2}=6$ possible nonzero entries. These entries satisfy the orthogonality condition. We have 
$$|{\bf f}^{00}_{ij}|^2-\lambda=0, ~~~~|{\bf f}^{01}_{ij}|^2-\lambda=0, ~~~~|{\bf f}^{10}_{ij}|^2-\lambda=0,~~~~ |{\bf f}^{11}_{ij}|^2-\lambda=0$$ for any $\{i, j\}\subseteq\{1, 2, 3, 4\}$.  
There are ${4\choose 2}\times 4=24$ many equations in total. 
If we view these norm-squares of entries $|f^{0011}|^2$, $|f^{0101}|^2$, $|f^{0110}|^2$, $|f^{1001}|^2$, $|f^{1010}|^2$, $|f^{1100}|^2$ (we omit subscripts here) and the value $\lambda$ as variables, those equations are linear equations on these variables. 
By {\sc ars}, we have $|f^{0011}|^2=|f^{1100}|^2$, $|f^{0101}|^2=|f^{1010}|^2$, and $|f^{0110}|^2=|f^{1001}|^2$. So there are only
four 
variables.
Our idea is to show that
the matrix of this linear system which has $24$ many rows but only $4$ columns has full rank. 
We only need $4$ rows to prove this. In our proof for arity 4, we picked the following 4 rows and showed that the induced linear system has full rank:
$$\begin{bmatrix}
1 & 0& 0 &-1\\
0 & 1 & 0 & -1\\
0 & 0& 1 & -1\\
0 & 1& 1 & -1\\
\end{bmatrix}
\begin{bmatrix}
|f^{0011}|^2\\
|f^{0101}|^2\\
|f^{0110}|^2\\
\lambda
\end{bmatrix}=
\begin{bmatrix}
0\\
0\\
0\\
0\\
\end{bmatrix}.
$$

For the arity $6$ case, we will basically show the same thing (i.e., the linear system  has only the trivial zero solution) with some carefully chosen rows. For arity $8$ case, we will use the fact that the variables take nonnegative values and we show the linear system has no nonnegative solution except the zero solution. 

An intuitive reason why this proof could
 succeed for signatures of small arity is that in these cases, we have more equations than variables, which leads to an over-determined linear system. 
For the general case of arity $n$,
there are $4{n\choose 2}$ many equations but ${n\choose n/2}/2+1$ many variables. 
Since $4{n\choose 2}\ll{n\choose n/2}/2+1$ when $n$ is large, this
method will not work for large $n$.
This is why we cannot hope to apply this proof to signatures of large arity. 
    
    Now, we give the formal proof for signatures of arity 6 and 8.
In what follows we assume  $f$ has arity $\geqslant 6$.
    Given a vector ${\bf f}_{ij}^{ab}$, we can pick a third variable $x_k$ and separate ${\bf f}_{ij}^{ab}$ into two vectors ${\bf f}_{ijk}^{ab0}$ and ${\bf f}_{ijk}^{ab1}$ according to $x_k=0$ or $1$.
    By setting $(a, b)=(0, 0)$, we have
    \begin{equation}\label{e1}
        |{\bf f}_{ij}^{00}|^2=|{\bf f}_{ijk}^{000}|^2+|{\bf f}_{ijk}^{001}|^2=\lambda.
    \end{equation}
    Similarly, we consider the vector ${\bf f}_{ik}^{00}$ and separate it according to $x_j=0$ or $1$. We have
    \begin{equation}\label{e2}
        |{\bf f}_{ik}^{00}|^2=|{\bf f}_{ijk}^{000}|^2+|{\bf f}_{ijk}^{010}|^2=\lambda.
    \end{equation}
    Comparing equations (\ref{e1}) and (\ref{e2}), we have $|{\bf f}_{ijk}^{001}|^2=|{\bf f}_{ijk}^{010}|^2$. 
    Moreover, by {\sc ars}, we have $|{\bf f}_{ijk}^{010}|^2=|{\bf f}_{ijk}^{101}|^2.$ Thus, we have $|{\bf f}_{ijk}^{001}|^2=|{\bf f}_{ijk}^{101}|^2$.
    Note that the vector ${\bf f}_{jk}^{01}$ can be separated into two vectors ${\bf f}_{ijk}^{001}$ and ${\bf f}_{ijk}^{101}$ 
    according to $x_i=0$ or $1$.
    Therefore, 
    $$|{\bf f}_{jk}^{01}|^2=|{\bf f}_{ijk}^{001}|^2+|{\bf f}_{ijk}^{101}|^2=\lambda.$$ Thus, we have $|{\bf f}_{ijk}^{001}|^2=|{\bf f}_{ijk}^{101}|^2=\lambda/2$. 
     Then, by equation (\ref{e1}), we have $|{\bf f}_{ijk}^{000}|^2 =\lambda/2$, and again by {\sc ars}, we also have $|{\bf f}_{ijk}^{111}|^2 =|{\bf f}_{ijk}^{000}|^2=\lambda/2$.
     Note that the indices $i, j, k$ can be   arbitrary
     three distinct indices, by symmetry we have 
\begin{equation}\label{eqn:arity>=6-basecase}
 |{\bf f}_{ijk}^{abc}|^2=\lambda/2
\end{equation} for  $f$ of arity $\geqslant 6$,
and for all  $(x_i, x_j, x_k)=(a, b, c)\in\{0, 1\}^3.$ 

      This leads to a contradiction for signatures of arity $6$.
     Suppose $f$ is an irreducible signature on $6$ variables 
     $x_1, x_2, \ldots, x_6$.
     Let $(i, j, k, i', j', k')$ be an arbitrary permutation of $\{1, 2, \ldots, 6\}$.
     Note that the vector ${\bf f}_{ijk}^{000}$ has only one possible nonzero entry $f_{ijki'j'k'}^{000111}$. Thus,
by (\ref{eqn:arity>=6-basecase})
 we have $$|{\bf f}_{ijk}^{000}|^2=|f_{ijki'j'k'}^{000111}|^2=\lambda/2$$
     for any $(x_i, x_j, x_k, x_{i'}, x_{j'}, x_{k'})=(0, 0, 0, 1, 1, 1)$.
     That is, any  entry of $f$ at half weight has the same nonzero norm $\sqrt{\lambda/2}$. However,
     the vector ${\bf f}_{ijk}^{001}$ has ${3\choose2} =3$  nonzero entries. But,
     $$\lambda/2=|{\bf f}_{ijk}^{001}|^2=|f_{ijki'j'k'}^{001011}|^2+|f_{ijki'j'k'}^{001101}|^2+|f_{ijki'j'k'}^{001110}|^2=3\lambda/2,$$ which means $\lambda=0$. 
     This is  a
     contradiction. 
    
    For signatures of arity $8$, we need to go further and use the fact that the norm-square is nonnegative. 
     Given a vector ${\bf f}_{ijk}^{abc}$, we can continue to pick a fourth variable $x_\ell$ and separate ${\bf f}_{ijk}^{abc}$ into two vectors ${\bf f}_{ijk\ell}^{abc0}$ and ${\bf f}_{ijk\ell}^{abc1}$ according to $x_\ell=0$ or $1$.
     By setting $(a, b, c)=(0, 0, 0)$, 
we have from (\ref{eqn:arity>=6-basecase})
     \begin{equation}\label{e3}
     |{\bf f}_{ijk}^{000}|^2=|{\bf f}_{ijk\ell}^{0000}|^2+|{\bf f}_{ijk\ell}^{0001}|^2=\lambda/2.
     \end{equation}
   Similarly, we consider the vector ${\bf f}_{ij\ell}^{001}$ and separate it according to $x_k=0$ or $1$. We have
   \begin{equation}\label{e4}
     |{\bf f}_{ij\ell}^{001}|^2=|{\bf f}_{ijk\ell}^{0001}|^2+|{\bf f}_{ijk\ell}^{0011}|^2=\lambda/2.
     \end{equation}
     Comparing equations (\ref{e3}) and (\ref{e4}), we have $|{\bf f}_{ijk\ell}^{0000}|^2=|{\bf f}_{ijk\ell}^{0011}|^2$. 
     This leads to a contradiction for signatures of arity $8$. 
     
     Suppose $f$ is an irreducible signature on $8$ variables 
     $x_1, x_2, \ldots, x_8$.
     Let $(i, j, k, \ell, i', j', k', \ell')$ be an arbitrary permutation of $\{1, 2, \ldots, 8\}$.
     The vector ${\bf f}_{ijk\ell}^{0000}$ has only one possible nonzero entry $f_{ijk\ell i'j'k'\ell'}^{00001111}$. 
     Thus, 
     \begin{equation}\label{e5}
         |{\bf f}_{ijk\ell}^{0000}|^2=|f_{ijk\ell i'j'k'\ell'}^{00001111}|^2.
         \end{equation}
     The vector ${\bf f}_{ijk\ell}^{0011}$ has ${4\choose 2}=6$ possible nonzero entries including $f_{ijk\ell i'j'k'\ell'}^{00110011}$.
     Thus, \begin{equation}\label{e6}
     |{\bf f}_{ijk\ell}^{0011}|^2=|f_{ijk\ell i'j'k'\ell'}^{00110011}|^2+\Delta,
     \end{equation}
     where $\Delta$ denotes the sum of norm-squares of the other $5$ entries in ${\bf f}_{ijk\ell}^{0011}$ and we know $\Delta\geqslant 0$. 
     Since the left-hand sides of equations (\ref{e5}) and (\ref{e6}) are equal, we have 
     \begin{equation}\label{e7}
         |f_{ijk\ell i'j'k'\ell'}^{00001111}|^2=|f_{ijk\ell i'j'k'\ell'}^{00110011}|^2+\Delta.
     \end{equation}
     Similarly, consider vectors ${\bf f}_{iji'j'}^{0000}$ and ${\bf f}_{iji'j'}^{0011}$.
     We have $|{\bf f}_{iji'j'}^{0000}|^2=|{\bf f}_{iji'j'}^{0011}|^2$.
     The vector ${\bf f}_{iji'j'}^{0000}$ has only one possible nonzero entry. Thus, $$|{\bf f}_{iji'j'}^{0000}|^2=|f_{ijk\ell i'j'k'\ell'}^{00110011}|^2.$$
     The vector ${\bf f}_{iji'j'}^{0011}$ has $6$ possible nonzero entries. Thus, 
     $$|{\bf f}_{iji'j'}^{0011}|^2=|f_{ijk\ell i'j'k'\ell'}^{00001111}|^2+\Delta',$$ where $\Delta'$ denotes the sum of norm-squares of the other $5$ entries in ${\bf f}_{iji'j'}^{0011}$ and we know $\Delta' \geqslant 0$. 
     Thus, we have 
     \begin{equation}\label{e8}
         |f_{ijk\ell i'j'k'\ell'}^{00110011}|^2=|f_{ijk\ell i'j'k'\ell'}^{00001111}|^2+\Delta'
     \end{equation}
     Comparing equations (\ref{e7}) and (\ref{e8}), we have $\Delta=-\Delta'$, which means $\Delta=\Delta'=0$ due to $\Delta\geqslant0$ and $\Delta'\geqslant 0$.
     Since $\Delta$ is the sum of $5$ norm-squares, each of which is nonnegative, $\Delta=0$ means each norm-square in the sum $\Delta$ is $0$.
     In particular, $|f_{ijk\ell i'j'k'\ell'}^{00111100}|^2$ is a term in the sum $\Delta$. We have $|f_{ijk\ell i'j'k'\ell'}^{00111100}|^2=0$.
     Since the order of indices is picked arbitrarily, 
all entries of $f$ are zero. Thus, $f$ is a zero signature. A contradiction.
\end{proof}

\begin{theorem}\label{5theo}
If $\mathcal{F}\not \subseteq \mathcal{B}$, then either
 \rm{\#EO}$(\mathcal{F})$ is \#P-hard or \rm{\#EO}$(\{\neq_4\} \cup \mathcal{F})\leqslant_T$ \rm{\#EO}$(\mathcal{F})$.
\end{theorem}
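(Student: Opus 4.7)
The plan is a straightforward induction on arity, powered by the two preceding lemmas of this section. Since $\mathcal{F}\not\subseteq\mathcal{B}$, pick some $f\in\mathcal{F}$ with $f\notin\mathcal{B}$ and let $2n$ be its (necessarily even) arity. If $2n\in\{4,6,8\}$, Lemma~\ref{468} applied directly to $\mathcal{F}$ already yields the desired dichotomy, so the only substantive case is $2n\geqslant 10$.

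In that case I invoke Lemma~\ref{induction} to realize a signature $g\notin\mathcal{B}$ of arity $2n-2$ with ${\rm\#EO}(\{g\}\cup\mathcal{F})\leqslant_T{\rm\#EO}(\mathcal{F})$. Set $\mathcal{F}_1=\{g\}\cup\mathcal{F}$; then ${\rm\#EO}(\mathcal{F}_1)\equiv_T{\rm\#EO}(\mathcal{F})$ (the nontrivial direction is Lemma~\ref{induction}, the reverse is free since $\mathcal{F}\subseteq\mathcal{F}_1$), and $\mathcal{F}_1$ still contains a witness outside $\mathcal{B}$ of strictly smaller even arity. If that arity is still $\geqslant 10$ I apply Lemma~\ref{induction} again to $g$, producing $\mathcal{F}_2\supseteq\mathcal{F}_1$ with ${\rm\#EO}(\mathcal{F}_2)\equiv_T{\rm\#EO}(\mathcal{F})$ and a witness of arity $2n-4\notin\mathcal{B}$, and so on.

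After at most $n-4$ such reductions—a constant number independent of any input signature grid—I obtain a set $\widehat{\mathcal{F}}\supseteq\mathcal{F}$ with ${\rm\#EO}(\widehat{\mathcal{F}})\equiv_T{\rm\#EO}(\mathcal{F})$ that contains some $\widehat{f}\notin\mathcal{B}$ of arity in $\{4,6,8\}$; since arity decreases by exactly $2$ each iteration and starts at an even value $\geqslant 10$, the process must terminate in this range. Applying Lemma~\ref{468} to $\widehat{\mathcal{F}}$ then returns either \#P-hardness of ${\rm\#EO}(\widehat{\mathcal{F}})$, and hence of ${\rm\#EO}(\mathcal{F})$, or the reduction ${\rm\#EO}(\{\neq_4\}\cup\widehat{\mathcal{F}})\leqslant_T{\rm\#EO}(\widehat{\mathcal{F}})$; combining with $\mathcal{F}\subseteq\widehat{\mathcal{F}}$ this immediately gives ${\rm\#EO}(\{\neq_4\}\cup\mathcal{F})\leqslant_T{\rm\#EO}(\mathcal{F})$, completing the proof.

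There is no real obstacle left at this stage: all of the technical weight has already been discharged—the unique prime factorization and $\partial_{(ij)}$-commutativity arguments for the inductive arity reduction are in Lemma~\ref{induction}, and the orthogonality together with the over-determined linear-system arguments for the three base arities are in Lemma~\ref{468}. The statement above is essentially a clean bookkeeping wrap-up, with the only small care needed being that the arity-reduction chain preserves the invariants ``even arity'' and ``still contains a witness outside $\mathcal{B}$,'' both of which are immediate from Lemma~\ref{induction}.
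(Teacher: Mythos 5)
Your proof is correct and takes exactly the same route as the paper: induction on arity with Lemma~\ref{induction} as the inductive step (arity $2n\geqslant 10$ drops to $2n-2$, preserving evenness and membership outside $\mathcal{B}$) and Lemma~\ref{468} as the base case (arity $4,6,8$). The paper's own proof of Theorem~\ref{5theo} is simply the two-sentence statement of this fact; you have spelled out the bookkeeping that the paper leaves implicit.
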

\begin{proof}
The base case is 
Lemma~\ref{468} and  the inductive step is 
Lemma~\ref{induction}. Done by induction.
\end{proof}

\section{Reduction from \#CSP to \#EO Problems}\label{neq_4}
In this section, we will show $\rm{\#EO}(\{\neq_4\}\cup \mathcal{F})$ is \#P-hard unless $\mathcal{F}\subseteq \mathscr{A}$ or $\mathcal{F}\subseteq \mathscr{P}$.
The first steps are simple;
the availability  of $\neq_4$ allows us to realize any $(\neq_{2k})$ and therefore all of $\mathcal{DEQ}$.

\begin{lemma}\label{lem-4.1}
$\rm{\#EO}(\mathcal{DEQ}\cup \mathcal{F}) \leqslant_T \rm{\#EO}(\{\neq_4\}\cup \mathcal{F})$.
\end{lemma}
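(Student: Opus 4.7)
The plan is induction on $k$, proving that every $\neq_{2k} \in \mathcal{DEQ}$ is realizable as an EO gadget from $\{\neq_4\}$; the binary $\neq_2$ is implicitly available because, by Lemma~\ref{eo=holant}, every edge in an EO-signature grid is effectively labeled by $\neq_2$. The base cases $k=1$ and $k=2$ are immediate: $\neq_2$ is the edge label, and $\neq_4$ is given.

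For the inductive step, I would take a copy of $\neq_{2k}$ (realizable by the induction hypothesis, hence available by Lemma~\ref{lemma:eo-gates} and its corollary) and a copy of $\neq_4$, then merge one variable of each via $\neq_2$. Concretely, writing the variables of $\neq_{2k}$ as $(x_1,\dots,x_k,x_{k+1},\dots,x_{2k})$ with $x_1=\cdots=x_k$ and $x_{k+1}=\cdots=x_{2k}$ on the support, and those of $\neq_4$ as $(y_1,y_2,y_3,y_4)$ with $y_1=y_2$ and $y_3=y_4$, merging $x_k$ with $y_1$ via $\neq_2$ forces $x_k=\overline{y_1}$. Combined with the within-group equalities of $\neq_{2k}$ and $\neq_4$, this ties the ``$x_1$-group'' of $\neq_{2k}$ to the ``$y_3$-group'' of $\neq_4$: the remaining $2k+2$ variables admit only two complementary assignments, with $\{x_1,\dots,x_{k-1},y_3,y_4\}$ taking one value and $\{x_{k+1},\dots,x_{2k},y_2\}$ taking the opposite value, each group of size $k+1$. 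In both configurations every constituent signature contributes the value $1$, so the gadget's signature is exactly a disequality of arity $2(k+1)$ up to a permutation of variables.

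To finish, I would invoke the fact established in the preliminaries that any variable permutation of $\neq_{2n}$ is equivalent under $\equiv_T$ to $\neq_{2n}$ itself for $\#EO$ problems (the paper's explicit statement $\#EO(\mathcal{F},f)\equiv_T \#EO(\mathcal{F},\neq_{2n})$ whenever $f$ is obtained from $\neq_{2n}$ by relabeling variables). This lets me identify the realized gadget with $\neq_{2(k+1)}$ inside the available signature set, closing the induction and yielding the reduction $\rm{\#EO}(\mathcal{DEQ}\cup\mathcal{F})\leqslant_T \rm{\#EO}(\{\neq_4\}\cup\mathcal{F})$. No serious obstacle is anticipated here: the argument is a single one-edge gadget verified by a small case analysis, and the only care needed is the bookkeeping to confirm that the support indeed splits into two complementary $(k+1,k+1)$ blocks.
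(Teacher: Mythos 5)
Your proposal is correct and follows essentially the same approach as the paper: connect one variable of $\neq_{2k}$ with one variable of $\neq_4$ via $\neq_2$ to realize $\neq_{2(k+1)}$, and induct. The paper's one-line proof leaves the gadget check implicit but explicitly remarks that each occurrence of a $\mathcal{DEQ}$ signature is realized by a linear-size gadget, which is the (minor) observation needed to confirm the reduction is polynomial time; your inductive construction is indeed linear in $k$, so this is covered.
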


\begin{proof}
Connecting $\neq_{2k}$ ($k\geqslant 2$) and $\neq_4$ using $\neq_2$ we get 
$\neq_{2k+2}$. 
Every occurrence of signatures in $\mathcal{DEQ}$
can be realized by a linear size gadget.
Then we have $\rm{\#EO}(\mathcal{DEQ}\cup \mathcal{F}) \leqslant_T \rm{\#EO}(\{\neq_4\}\cup \mathcal{F})$.
\end{proof}

Recall that $\rm{\#EO}(\mathcal{DEQ}\cup \mathcal{F})$ is just $\holant{\neq_2}{\mathcal{DEQ}\cup \mathcal{F}}$ 
expressed in the Holant framework. We show that after we
get  $\mathcal{DEQ}$ on the right hand side (RHS)
in the above Holant problem, 
we can also realize $\mathcal{DEQ}$ on the left-hand side (LHS).
\begin{lemma}\label{lem-4.2}
$\holant{\mathcal{DEQ}}{ \mathcal{F}}\leqslant_T \holant{\neq_2}{\mathcal{DEQ}\cup \mathcal{F}}$,
which is equivalent to  
$\rm{\#EO}(\mathcal{DEQ}\cup \mathcal{F})$.
%
\end{lemma}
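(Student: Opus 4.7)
The plan is to give a direct gadget reduction that moves every $\mathcal{DEQ}$ vertex from the LHS of $\holant{\mathcal{DEQ}}{\mathcal{F}}$ to the RHS of $\holant{\neq_2}{\mathcal{DEQ}\cup\mathcal{F}}$ by subdividing every edge of the input bipartite graph with a fresh $\neq_2$ node. Concretely, given an instance $\Omega=(H,\pi)$ of $\holant{\mathcal{DEQ}}{\mathcal{F}}$ with $H=(U,V,E)$, where each $u\in U$ carries some $\neq_{2k(u)}\in\mathcal{DEQ}$ and each $v\in V$ carries some $f_v\in \mathcal{F}$, I would construct $\Omega'$ as follows: for each $e=(u,v)\in E$ introduce a new vertex $w_e$ labeled by $\neq_2$ and replace $e$ by the two edges $(w_e,u)$ and $(w_e,v)$. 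The new bipartite graph has $\{w_e:e\in E\}$ on the LHS (all labeled by $\neq_2$) and the original vertex set $U\cup V$ on the RHS, carrying the original labels which now lie in the combined set $\mathcal{DEQ}\cup\mathcal{F}$. This is a valid instance of $\holant{\neq_2}{\mathcal{DEQ}\cup\mathcal{F}}$ of size linear in $|\Omega|$.

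The key observation that makes the correspondence work is that every $\neq_{2k}\in\mathcal{DEQ}$ is invariant under bit-wise complementation of all its inputs, i.e.\ $(\neq_{2k})(\alpha)=(\neq_{2k})(\overline{\alpha})$ for every $\alpha$; this is immediate from the definition, since the support $\{\alpha,\overline{\alpha}\}$ is closed under complementation. Using this, I would exhibit a bijection between 0/1 edge assignments $\sigma:E\to\{0,1\}$ of $\Omega$ and those assignments $\tau$ of $\Omega'$ that satisfy all $\neq_2$ constraints (unsatisfying $\tau$ contribute zero and can be ignored). The bijection sets $\tau(w_e,v)=\sigma(e)$ and $\tau(w_e,u)=\overline{\sigma(e)}$ for each $e=(u,v)\in E$, which is forced on the $u$-end once the $v$-end is fixed.

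Under this bijection each $v\in V$ sees exactly $\sigma|_{E(v)}$ in $\Omega'$, so $f_v$ contributes $f_v(\sigma|_{E(v)})$, identical to its contribution in $\Omega$. Each $u\in U$ sees the complement $\overline{\sigma|_{E(u)}}$, but by complement-invariance the signature $\neq_{2k(u)}$ contributes the same value as in $\Omega$. Every $w_e$ contributes the factor $1$ since its two inputs differ by construction. Hence the vertex evaluations agree term by term and $\Holant_\Omega=\Holant_{\Omega'}$, giving the desired polynomial-time reduction.

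The final clause, namely the equivalence of $\holant{\neq_2}{\mathcal{DEQ}\cup\mathcal{F}}$ with $\#\mathrm{EO}(\mathcal{DEQ}\cup\mathcal{F})$, is not a separate claim but an immediate application of Lemma~\ref{eo=holant} to the set $\mathcal{DEQ}\cup\mathcal{F}$. I do not expect a real obstacle here; the only thing to be careful about is the direction of the $\sigma\leftrightarrow\tau$ correspondence, namely placing the complement on the $u$-end rather than the $v$-end, so that complement-invariance of $\mathcal{DEQ}$ absorbs the flip while the $\mathcal{F}$ signatures continue to act on their original inputs.
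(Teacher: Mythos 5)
Your proposal is correct and takes essentially the same approach as the paper's. The paper realizes $\neq_{2k}$ on the LHS as a gadget by connecting $2k$ copies of $\neq_2$ (LHS) to one $\neq_{2k}$ vertex (RHS); your edge-subdivision construction is exactly this gadget substitution applied simultaneously at every vertex of $U$, with the key underlying fact (complement-invariance of each $\neq_{2k}$, which absorbs the bit-flip introduced by $\neq_2$) spelled out via the explicit bijection, rather than left implicit as in the paper.
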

\begin{proof}
In $\holant{\neq_2}{\mathcal{DEQ}\cup \mathcal{F}}$
we take $2k$ copies of $\neq_2$ on the LHS 
and connect one variable of each copy of $\neq_2$   to all $2k$ variables of
one copy of  $\neq_{2k}$ on the RHS. This gives us the constraint function $\neq_{2k}$ on the LHS.
\end{proof}
Now, consider 
an arbitrary instance of $\holant{\mathcal{DEQ}}{ \mathcal{F}}$;  it is given by a bipartite graph. 
Similar to how we express \#CSP$(\mathcal{F})$ using 
$\holant{\mathcal{EQ}}{\mathcal{F}}$,
in $\holant{\mathcal{DEQ}}{ \mathcal{F}}$ we can view vertices on the LHS (labeled by $(\neq_{2k})\in \mathcal{DEQ}$) as variables, and vertices on the RHS (labeled by $f\in \mathcal{F}$) as constraints.
However, the difference here is that in this setting, both a variable itself and its negation appear as input variables of constraints, and 
they always appear the same number of times. 
More specifically, for a variable vertex $x$ labeled by $\neq_{2k}$, the entire set of $2k$ edges incident to $x$ can be divided into two subsets, each of which consisting of $k$ edges.
In each subset, every edge takes the same value, while two edges in different sets always take opposite values.
Then, we can view the $k$ edges in one subset as the variable $x$ appearing $k$ times, while another $k$ edges in the other subset as its negation $\overline{x}$ appearing $k$ times.

 Recall that  signatures $f\in\mathcal{F}$ satisfy {\sc ars}. 
 Suppose $f\in\mathcal{F}$ has arity $2n$. Then, consider the function  $f(\overline{x_1}, \overline{x_2}, \ldots, \overline{x_{2n}})$. 
That is, we replace the input variables by their negations. Then we have $f(\overline{x_1}, \overline{x_2}, \ldots, \overline{x_{2n}})=\overline{f(x_1, x_2, \ldots, x_{2n})}$ by   {\sc ars}.
Define the norm square function $|f|^2$, which takes value $|f(x_1, \ldots, x_{2n})|^2$ on input $(x_1, \ldots, x_{2n})$. Then, we have 
$$|f|^2(x_1, \ldots, x_{2n})=f(x_1, \ldots, x_{2n})\overline{f(x_1, \ldots, x_{2n})}=f(x_1, \ldots, x_{2n})f(\overline{x_1}, \ldots, \overline{x_{2n}}),$$
and this gives the following reduction.
\begin{lemma}\label{lem-square}
Let $|\mathcal{F}|^2=\{ |f|^2 \mid f\in \mathcal{F}\}.$ Then {\rm \#CSP}$(|\mathcal{F}|^2)\leqslant_T\holant{\mathcal{DEQ}}{ \mathcal{F}}.$
\end{lemma}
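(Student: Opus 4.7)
The plan is to give a direct gadget-based global simulation exploiting the key identity $|f|^2(x_1,\ldots,x_{2n}) = f(x_1,\ldots,x_{2n}) \cdot f(\overline{x_1},\ldots,\overline{x_{2n}})$, which holds by \textsc{ars}. Given an arbitrary instance $\Omega$ of $\#\mathrm{CSP}(|\mathcal{F}|^2)$, described by a set $V$ of Boolean variables together with constraints of the form $|f_c|^2(\vec x_c)$, I will construct an instance $\Omega'$ of $\mathrm{Holant}(\mathcal{DEQ} \mid \mathcal{F})$ whose partition function equals $\#\mathrm{CSP}_\Omega(|\mathcal{F}|^2)$.

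The construction proceeds as follows. For each constraint $|f_c|^2(x_{i_1},\ldots,x_{i_{2n}})$ of $\Omega$, I will place two copies of $f_c$ on the RHS of $\Omega'$, call them $f_c^{+}$ and $f_c^{-}$. For each variable $x \in V$ that appears (counted with multiplicity) in exactly $k$ constraint-argument slots of $\Omega$, I will place a single disequality vertex $\neq_{2k}$ on the LHS of $\Omega'$. The edge assignments are the delicate bookkeeping: for every occurrence of $x$ as the $j$-th argument of a constraint $|f_c|^2$, I connect the $j$-th dangling edge of $f_c^{+}$ to one slot of the ``first half'' of the $\neq_{2k}$ vertex for $x$, and the $j$-th dangling edge of $f_c^{-}$ to one slot of the ``second half''. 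This uses exactly $k$ slots in each half, so the arities match. Here I invoke the remark from the preliminaries that the ${2k \choose k}/2$ variants of $\neq_{2k}$ under variable permutation are all equivalent for Holant-complexity purposes, so no care is needed about which specific slot in each half receives which edge.

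To verify correctness, note that each $\neq_{2k}$ vertex forces all $k$ first-half edges to share a common bit $a \in \{0,1\}$ and all $k$ second-half edges to share the opposite bit $\overline{a}$, contributing weight $1$; this amounts to summing over the two choices of a single value $a$ assigned to $x$. Under such a consistent assignment corresponding to $\vec x \in \{0,1\}^V$, each $f_c^{+}$ is evaluated at $\vec x_c$ and each $f_c^{-}$ is evaluated at $\overline{\vec x_c}$. By \textsc{ars}, $f_c(\overline{\vec x_c}) = \overline{f_c(\vec x_c)}$, so their product is $|f_c(\vec x_c)|^2$. Therefore
\[
\mathrm{Holant}_{\Omega'} \;=\; \sum_{\vec x \in \{0,1\}^V} \prod_{c} f_c(\vec x_c)\, f_c(\overline{\vec x_c}) \;=\; \sum_{\vec x \in \{0,1\}^V} \prod_{c} |f_c|^2(\vec x_c) \;=\; \#\mathrm{CSP}_\Omega(|\mathcal{F}|^2).
\]
The construction is clearly polynomial-time computable, which yields the desired Turing reduction. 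The only subtlety (which is not really an obstacle) is the global accounting: the multiplicity $k$ of each variable must be read off from $\Omega$ so that the arity of the disequality vertex matches precisely the total number of $+$ and $-$ edges incident to it; no local gadget replacement of $|f|^2$ by a fixed constant-size constraint would suffice, which is why the simulation must be genuinely global.
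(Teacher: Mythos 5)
Your proposal is correct and follows essentially the same route as the paper's proof: both hinge on the identity $|f|^2(\vec x) = f(\vec x)\,f(\overline{\vec x})$ (from \textsc{ars}), duplicate each constraint into two copies evaluated on complementary assignments, and observe that this makes each variable and its negation occur an equal number of times, so a $\neq_{2k}$ vertex can enforce the consistent assignment. You merely spell out the bipartite gadget construction (edges of $f_c^+$ to one half, of $f_c^-$ to the other half) that the paper leaves implicit after rewriting the sum.
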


\begin{proof}
Given an instance $I$ of  {\rm \#CSP}$(|\mathcal{F}|^2)$ over $m$ variables. Suppose it
contains $\ell$ occurrences of constraints $|f_i|^2\in |\mathcal{F}|^2$  $(i\in [\ell])$ of arity $2n_i$,
and $f_i$ is applied to the variables
$x_{i_1}, \ldots, x_{i_{2n_i}}$.
Then
\begin{equation}\label{equ-square}
{\rm \#CSP}(I)= \sum_{x_1, \ldots, x_m \in \mathbb{Z}_2}\prod^{\ell}_{i=1}|f_i|^2(x_{i_1}, \ldots, x_{i_{2n_i}}) 
= \sum_{x_1, \ldots, x_m \in \mathbb{Z}_2}\prod^{\ell}_{i=1}f_i(x_{i_1}, \ldots, x_{i_{2n_i}})f_i(\overline{x_{i_1}}, \ldots, \overline{x_{i_{2n_i}}}).
\end{equation}
Notice that in the final form of (\ref{equ-square}), for each variable $x\in \{x_1, \ldots, x_m\}$, both itself and its negation appear as input variables to various  constraints $f_i\in \mathcal{F}$. Moreover, 
there is a one-to-one correspondence between
each occurrence of $x$ and that of $\bar{x}$. Thus, $x$ and $\bar{x}$ appear the same number of times. 
Thus the partition function
${\rm \#CSP}(I)$ for the {\rm \#CSP}$(|\mathcal{F}|^2)$
problem
can be expressed as the partition function
of an instance of $\holant{\mathcal{DEQ}}{ \mathcal{F}}$
of polynomially bounded size.
\end{proof}

Directly by this reduction, we have the following hardness result.
Corollary~\ref{cor-square} follows from
Theorem~\ref{csp-dic}.

\begin{corollary}\label{cor-square}
$\holant{\mathcal{DEQ}}{\mathcal{F}}$ is \#P-hard if there is some $f\in\mathcal{F}$ such that $\mathscr{S}(f)$ is not affine.
\end{corollary}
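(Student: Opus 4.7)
The plan is to chain Lemma~\ref{lem-square} with the \#CSP dichotomy (Theorem~\ref{csp-dic}) applied to the single norm-square signature $|f|^2$. The first observation I would make is that $\mathscr{S}(|f|^2)=\mathscr{S}(f)$, because $|f|^2(\alpha)=f(\alpha)\overline{f(\alpha)}$ vanishes exactly where $f(\alpha)$ does. Hence if $\mathscr{S}(f)$ is not an affine linear subspace of $\mathbb{Z}_2^{2n}$, neither is $\mathscr{S}(|f|^2)$.

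Next I would invoke two structural facts: every signature in $\mathscr{A}$ has affine support, immediately from Definition~\ref{definition-affine}, and every signature in $\mathscr{P}$ has affine support by Lemma~\ref{product-affine-support}. Taken together, these force $|f|^2\notin\mathscr{A}\cup\mathscr{P}$. Therefore, by Theorem~\ref{csp-dic} the single-function problem $\CSP(\{|f|^2\})$ is \#P-hard.

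Finally, since $f\in\mathcal{F}$ implies $|f|^2\in|\mathcal{F}|^2$, there is a trivial reduction $\CSP(\{|f|^2\})\leqslant_T\CSP(|\mathcal{F}|^2)$, and Lemma~\ref{lem-square} supplies $\CSP(|\mathcal{F}|^2)\leqslant_T\holant{\mathcal{DEQ}}{\mathcal{F}}$. Composing these reductions yields \#P-hardness of $\holant{\mathcal{DEQ}}{\mathcal{F}}$, as required.

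There is no substantive obstacle: the corollary reduces to a three-step chain, each step either a definitional observation about supports or a direct appeal to a previously established result. The only mild point to verify is that the \#CSP dichotomy is being invoked on a singleton constraint set, which is exactly what the hypothesis hands us, namely the single offending $f\in\mathcal{F}$ whose norm-square transports the non-affineness from $\mathcal{F}$ into $|\mathcal{F}|^2$.
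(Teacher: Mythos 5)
Your proof is correct and matches the paper's argument in all essentials: both pass from $\mathscr{S}(f)$ non-affine to $\mathscr{S}(|f|^2)$ non-affine, use Definition~\ref{definition-affine} and Lemma~\ref{product-affine-support} to exclude $|f|^2$ from $\mathscr{A}$ and $\mathscr{P}$, and then chain the \#CSP dichotomy (Theorem~\ref{csp-dic}) with Lemma~\ref{lem-square}. The only cosmetic difference is that you route through the singleton $\CSP(\{|f|^2\})$ before lifting to $\CSP(|\mathcal{F}|^2)$, whereas the paper directly observes $|\mathcal{F}|^2\not\subseteq\mathscr{A}$ and $|\mathcal{F}|^2\not\subseteq\mathscr{P}$; these are trivially equivalent.
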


\begin{proof}
By the definition of $|f|^2$, we know $\mathscr{S}(|f|^2)=\mathscr{S}(f)$. Thus, there is some $|f|^2\in |\mathcal{F}|^2$ such that $\mathscr{S}(|f|^2)$ is not affine. This implies that $|\mathcal{F}|^2 \not\subseteq \mathscr{A}$. Moreover, by Lemma \ref{product-affine-support}, we also have $|\mathcal{F}|^2 \not\subseteq \mathscr{P}$. 
By Theorem \ref{csp-dic},  {\rm \#CSP}$(|\mathcal{F}|^2)$ is \#P-hard and hence, by Lemma \ref{lem-square},  $\holant{\mathcal{DEQ}}{\mathcal{F}}$ is \#P-hard.
\end{proof}

Now, we may assume every signature $f\in \mathcal{F}$ has affine support. 
Quite amazingly,
 if an EO signature has affine support, then its support
 must have a special structure,  called \emph{pairwise opposite}.

\begin{definition}[Pairwise opposite]\label{def-oppo}
Let $\mathscr S\subseteq \mathbb{Z}_2^{2n}$  be an affine linear subspace.
We say $\mathscr S$ is pairwise opposite if we can partition the $2n$ variables into $n$ pairs such that on $\mathscr S$, two variables of each pair always take opposite values. 
If $\mathscr S$ is pairwise opposite, we fix a pairing. Then each pair under this paring is called an opposite pair.
\end{definition}

\begin{example}
Let $\mathscr S=\{(x_1, x_2, \ldots, x_{2n}) \mid x_1, \ldots, x_{2n} \in \mathbb{Z}_2, \overline{x_i}=  x_{n+i} \ ( i\in [n])\} $. Then $\mathscr S$ is pairwise opposite. 
Moreover, any affine linear subspace of $\mathscr{S}$ is pairwise opposite. 

For instance, let $C$ be the Hamming  $(7,4)$-code. We consider its dual Hamming code $C^{\perp}$. $C^{\perp}$ is a linear subspace of $\mathbb{Z}_2^7$ of dimension 3.
Let
\[\mathscr S_C = \{{\bf \alpha} \circ \overline{{\bf \alpha}} \in \mathbb{Z}_2^{14}
 \mid {\bf \alpha} \in C^{\perp}\}.\]
Then $\mathscr S_C$ is pairwise opposite. This $\mathscr S_C$ is introduced in \cite{Cai-Lu-Xia-holant-c} related
to a certain
tractable family of signatures for a class of Holant problems.
\end{example}

Note that if an affine linear subspace $\mathscr S\subseteq \mathbb{Z}_2^{2n}$ is pairwise opposite, then $\mathscr S\subseteq \mathscr{H}_{2n}$. Now, we show the other direction is also true.
This result 
should be of  independent interest.

\begin{lemma}\label{lem-opposite}
Let $\mathscr{S}\subseteq \mathbb{Z}_2^{2n}$ be an affine linear subspace. If $\mathscr S\subseteq \mathscr{H}_{2n}$, 
then $\mathscr S$ is pairwise opposite.
\end{lemma}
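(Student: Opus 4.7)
The plan is to introduce an equivalence relation on the $2n$ coordinates, use the half-weight constraint to split each equivalence class into two equal-sized subclasses (one ``same'' and one ``opposite'' relative to a fixed reference point), and then pair across within each class.

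Write $\mathscr{S} = \alpha_0 + V$, where $\alpha_0 \in \mathscr{S}$ and $V \subseteq \mathbb{Z}_2^{2n}$ is the associated linear subspace. Define $i \equiv j$ iff $v_i = v_j$ for every $v \in V$; this is an equivalence relation with classes $C_1, \ldots, C_k$. Every $v \in V$ is constant on each class (denote its value by $v_C \in \{0,1\}$), and distinct classes are distinguished by some $v \in V$. Consequently, for $i, j$ in the same class $C$ and any $\alpha = \alpha_0 + v \in \mathscr{S}$, $\alpha_i + \alpha_j = \alpha_{0,i} + \alpha_{0,j}$ does not depend on $\alpha$, so $C$ splits as $C = C^{(0)} \sqcup C^{(1)}$ with $C^{(b)} = \{i \in C : \alpha_{0,i} = b\}$: coordinates from the same $C^{(b)}$ always agree on $\mathscr{S}$, while coordinates from $C^{(0)}$ and $C^{(1)}$ always disagree. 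Thus it suffices to prove $|C^{(0)}| = |C^{(1)}|$ for every class $C$; pairing $C^{(0)}$ bijectively with $C^{(1)}$ within each class will then yield the desired pairwise opposite pairing.

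Set $d_C = |C^{(0)}| - |C^{(1)}| \in \mathbb{Z}$. The weight condition $\mathrm{wt}(\alpha_0 + v) = n$ for every $v \in V$, expanded using the class-constancy of $v$, becomes $\sum_C v_C d_C = 0$ (in $\mathbb{Z}$, with $v_C \in \{0,1\}$); the case $v = 0$ gives $\sum_C d_C = 0$. These constraints alone are insufficient, so we exploit that $V$ is closed under XOR. The entrywise integer identity $v \oplus v' = v + v' - 2(v \wedge v')$, combined with $v, v', v \oplus v' \in V$, immediately yields $\sum_C (v \wedge v')_C d_C = 0$; a short induction extends this to $\sum_{C \in \bigcap_{j} \mathrm{sp}(v_j)} d_C = 0$ for every finite collection $v_1, \ldots, v_m \in V$, where $\mathrm{sp}(v) := \{C : v_C = 1\}$.

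Finally, pick a basis $v_1, \ldots, v_r$ of $V$ and assign each class $C$ the profile $\tau(C) = \{i : (v_i)_C = 1\} \subseteq [r]$; this map $\tau$ is injective by the definition of $\equiv$. Möbius inversion (inclusion--exclusion) on the Boolean lattice of subsets of $[r]$ gives the pointwise identity
\[
\mathbf{1}_{\{C\}} \;=\; \sum_{T \subseteq [r] \setminus \tau(C)} (-1)^{|T|}\, \mathbf{1}_{\bigcap_{i \in \tau(C) \cup T} \mathrm{sp}(v_i)},
\]
where the empty intersection arising when $\tau(C) \cup T = \emptyset$ is interpreted as the full set of classes. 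Pairing this identity against $d$ expresses $d_C$ as a signed sum of terms each of which vanishes: non-empty intersections by the previous step, and the full-set term (when it occurs) by $\sum_C d_C = 0$. Therefore $d_C = 0$ for every class, completing the proof. The main obstacle is precisely this last step---lifting the $\mathbb{F}_2$-level XOR/weight constraints to genuine integer constraints on the $d_C$ and then isolating each class via Möbius inversion---and it is here that the paper's promised Möbius inversion technique plays its central role.
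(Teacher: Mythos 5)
Your proof is correct and takes essentially the same route as the paper's: after fixing a reference point $\alpha_0$, you identify the same fiber decomposition the paper uses (your equivalence classes are the paper's $\Lambda$-fibers, and your $C^{(0)}$, $C^{(1)}$ correspond to $U^{=}(I)$, $V^{=}(I)$), derive the subset/intersection-level balance from the half-weight constraints across all of $\mathscr{S}$, and finish with M\"obius inversion on the subset lattice to isolate each class. The only real difference is cosmetic: you obtain the intermediate step from the integer identity $v \oplus v' = v + v' - 2(v \wedge v')$ applied iteratively, whereas the paper packages the same information into its Claims 1--2 with coefficients $(-2)^{|J|-1}$ (which is just the fully expanded form of that identity) and inducts on $|I|$ rather than on the number of vectors.
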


\begin{proof}
The lemma is trivially true if $|\mathscr{S}| = 0, 1$.
Suppose $\dim(\mathscr{S})=k \geqslant 1$.
We can pick a set of free variables $F=\{x_1,  \ldots, x_k\}$,  then on $\mathscr{S}$, every variable $x$ is expressible as
a unique affine linear combination over $\mathbb{Z}_2$ of these free variables, 
$x=\lambda_1 x_1+\ldots +\lambda_k x_k +\lambda_{k+1}$, where $\lambda_1, \ldots, \lambda_{k+1} \in \mathbb{Z}_2$.
(If $x$ takes a  constant value
on $\mathscr{S}$, it is still an affine linear combination of these free variables.)

We separate out all $2n$ variables into two types, those with $\lambda_{k+1} =
0$ (linear form) and those with $\lambda_{k+1} =
1$ (affine, \emph{but not} linear form).
If we set all free variables $x_1, \ldots, x_k$ to 0, we get 
a vector $\alpha \in \mathscr{S}$ with ${\rm wt}(\alpha)=n$.
Each $x$ of the first type contributes zero and each $x$ of the second type
contributes one.
Hence among all $2n$ variables, there are exactly $n$ variables 
of each type, and the chosen free variables are
among the first type. Without loss of generality, we may assume
variables of the first and second type are
 $U=\{x_1, \ldots, x_{n}\}$ 
 and $V=\{x_{n+1}, \ldots, x_{2n}\}$.
 



For any  variable $x=\lambda_1 x_1+\ldots +\lambda_k x_k +\lambda_{k+1}$, with
respect to this unique affine linear expression, 
let $\Lambda(x)=\{i\in [k] \mid \lambda_i=1\}$, the
set of free variables that do appear in the 
expression of $x$. 
We have, $$x=\sum_{i\in \Lambda(x)}x_i ~~\text{ if }x\in U, ~~~~~\text{ and }~~~~~ x=1+\sum_{i\in \Lambda(x)}x_i ~~\text{ if }x\in V.$$
Clearly, for $i\in [k]$, $\Lambda(x_i)=\{i\}$. 
For any subset $I \subseteq [k]$, we let
\[U^{\subseteq} (I) = \{x \in U \mid I \subseteq \Lambda(x)\}, ~~~~~\text{ and }~~~~~
U^{=} (I) = \{x \in U \mid \Lambda(x) = I\}.\]
Define $V^{\subseteq} (I)$ and $V^{=}(I)$ analogously, with $V$ in place of $U$.
For any subset $I \subseteq [k]$, 
let $\alpha^I \in \mathscr{S}$ be the vector determined by setting free variables $x_i=1$ for  $i\in I$ and  $x_i=0$ for $i\in [k] - I$. 
  Within the $2n$ bit positions in the vector $\alpha^I$,
  for any variable $x \in U$, 
    $$x=1 ~~\text{ if } |I \cap \Lambda(x)| \text{ is odd, ~~~~and~~~~ } x=0 \text{  ~~otherwise.}$$ 
    Symmetrically for any variable $x \in V$,  we have $$x=0 ~~\text{ if } |I \cap \Lambda(x)| \text{ is odd,
    ~~~~and~~~~ } x=1 \text{  ~~otherwise.}$$
    Let $U^{\rm odd}(I)=\{x\in U\mid |I \cap \Lambda(x)| \text{ is odd}\}$ and $V^{\rm odd}(I)=\{x\in V\mid |I \cap \Lambda(x)| \text{ is odd}\}$. 
    Since $$n = {\rm wt}(\alpha^I)=|U^{\rm odd}(I)|+(n-|V^{\rm odd}(I)|),$$ we have $|U^{\rm odd}(I)|=|V^{\rm odd}(I)|$, for all $I \subseteq [k]$.

\vspace{.1in}
\noindent
{\bf Claim 1.} For all  $I \subseteq [k]$,
\[|U^{\rm odd}(I)|
= \sum_{J \subseteq I: J\neq \emptyset} (-2)^{|J|-1} |U^{\subseteq}(J)|.\]


To prove this Claim, we count the contributions of every $x \in U$
to both sides of the equation. 
For  $x \in U$,  let $m(x) = |I \cap \Lambda(x)|$. 
This $x$ contributes one or zero to the LHS,  according to 
whether $m(x)$ is odd or even  respectively.
On the RHS, its contribution is
\[\sum_{j=1}^{m(x)} (-2)^{j-1} \sum_{J \subseteq I \cap \Lambda(x): |J|=j} 1
= \sum_{j=1}^{m(x)} (-2)^{j-1} {m(x) \choose j}=(-2)^{-1}  \left[ (1-2)^{m(x)} - 1 \right],\]
which is also precisely one or zero according to 
whether $m(x)$ is odd or even  respectively.


\vspace{.1in}

The same statement is true for $V^{\rm odd}(I)$ replacing $U$ by $V$, with the same proof.

\noindent
{\bf Claim 2.} For all  $I \subseteq [k]$,
\[|V^{\rm odd}(I)|
= \sum_{J \subseteq I: J\neq \emptyset} (-2)^{|J|-1} |V^{\subseteq}(J)|.\]

\vspace{.1in}

We show next that  $|U^{\rm \subseteq}(I)|=|V^{\rm \subseteq}(I)|$
for all $I \subseteq [k]$.
If $I = \emptyset$, then $U^{\subseteq}(I) = U$ and $V^{\subseteq}(I) = V$,
and so they have the same cardinality, both being $n$.
Inductively, for any $I \subseteq [k]$, suppose we already know
that $|U^{\subseteq}(J)| = |V^{\subseteq}(J)|$, for all
proper subsets $J \subsetneq I$, then since
$|U^{\rm odd}(I)|=|V^{\rm odd}(I)|$, by the two Claims
we have $|U^{\subseteq}(I)| = |V^{\subseteq}(I)|$
as well, since the coefficient $(-2)^{|I|-1} \not =0$.

Then, by definition
$$|U^\subseteq(I)|=\sum_{I\subseteq J \subseteq [k]}|U^=(J)|.$$
By the M\"{o}bius inversion formula,
we have
$$|U^=(I)|=\sum_{I\subseteq J \subseteq [k]}(-1)^{|J|-|I|}|U^\subseteq(J)|.$$
Indeed, $$\sum_{I\subseteq J \subseteq [k]}(-1)^{|J|-|I|}\sum_{J\subseteq J' \subseteq [k]}|U^=(J')|
= \sum_{I\subseteq J' \subseteq [k]}  \sum_{I\subseteq J \subseteq J'}(-1)^{|J|-|I|}
|U^=(J')|,$$
and for a proper containment $I \subsetneq  J'$ the coefficient of $|U^=(J')|$ is $(1-1)^{|J'|-|I|} =0$,
and it is 1 for $I = J'$.

The same statement is true for $V$.
Thus, we have $|U^=(I)|=|V^=(I)|$  for all $I \subseteq [k]$.

This allows us to set up
a pairing between $U$ and $V$ such that for each
pair of paired
variables $(x, y)\in U \times V$, 
we have $\Lambda(x)=\Lambda(y)$. For any $I \subseteq [k]$, we arbitrarily pick a pairing between  $U^=(I)$ and $V^=(I)$. This is achievable because they have the same cardinality. 
Since the following decompositions for both $U$ and $V$ are disjoint unions 
$$U=\bigcup\limits_{I \subseteq [k]}U^=(I)
~~~~~~\mbox{and}~~~~~~
V=\bigcup\limits_{I \subseteq [k]}V^=(I),
$$
we get  a global pairing between $U$ and $V$, 
such that for  each pair of paired
variables  $(x, y)\in U \times V$, we have $\Lambda(x)=\Lambda(y)$.
Recall that on $\mathscr{S}$,
since $x \in U$, we have $x=\sum_{i\in \Lambda(x)}x_i$;
meanwhile since $y \in V$ we have
$y=1 + \sum_{i\in \Lambda(y)}x_i$. It follows that
$\overline x = y$ on $\mathscr{S}$.
\end{proof}

Now, we are going to simulate \#CSP$(\mathcal{F})$  using $\holant{\mathcal{DEQ}}{ \mathcal{F}}$ when $\mathcal{F}$ consists of signatures with affine support. Suppose $f(x_1, \ldots, x_{2n})\in\mathcal{F}$ has affine support, by Lemma \ref{lem-opposite}, we know $\mathscr{S}(f)$ is pairwise opposite.
By permuting variables, we may assume for $i\in [n]$, $(x_i, x_{n+i})$ is paired as an opposite pair. 
Then, 
we have the following reduction.

\begin{lemma}\label{lem-self}
Suppose $\mathcal{F}$ is a set of {\rm EO}
signatures. If every signature $f\in \mathcal{F}$ has affine support, then {\rm \#CSP}$(\mathcal{F})\leqslant_T\holant{\mathcal{DEQ}}{ \mathcal{F}}.$
\end{lemma}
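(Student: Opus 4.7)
The plan is to construct, for any $\CSP(\mathcal{F})$ instance $\Omega$, a Holant instance $\Omega'$ of $\holant{\mathcal{DEQ}}{\mathcal{F}}$ whose partition function equals $\CSP(\Omega)$. The key tool is the pairwise-opposite structure guaranteed by Lemma \ref{lem-opposite}: every $f \in \mathcal{F}$ of arity $2n$ admits a pairing of its variables into $n$ opposite pairs. After permuting variables, I may assume position $i$ and position $n+i$ form an opposite pair, so that on the support $x_{n+i} = \overline{x_i}$. Call positions $1, \ldots, n$ the \emph{positive} positions and positions $n+1, \ldots, 2n$ the \emph{negative} positions.

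For each constraint $f_c(y_{j^c_1}, \ldots, y_{j^c_{2n_c}})$ in $\Omega$, I would place an RHS vertex $v_c$ labeled $f_c$ in $\Omega'$, with its $2n_c$ edges corresponding to the $2n_c$ positions. For each variable $y_j$ of $\Omega$, let $p_j$ and $q_j$ denote the total number of positive and negative appearances of $y_j$ across all constraints; note $\sum_j p_j = \sum_j q_j = \sum_c n_c$. When $p_j = q_j$, the LHS gadget for $y_j$ is simply a single $\neq_{2p_j}$ vertex with the $p_j$ positive-appearance edges forming one group and the $q_j$ negative-appearance edges forming the other, which forces all positive appearances to some common value $a_j$ and all negative appearances to $\overline{a_j}$. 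Once the gadgets are in place, I would verify directly that every non-zero edge configuration of $\Omega'$ corresponds bijectively to a $\{0,1\}$-assignment of $y_1, \ldots, y_m$ and that the multiplicative contribution of $v_c$ at each configuration equals $f_c(y_{j^c_1}, \ldots, y_{j^c_{2n_c}})$, yielding $\Holant(\Omega') = \CSP(\Omega)$.

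The main obstacle is the case $p_j \neq q_j$, because $\mathcal{DEQ}$ contains only \emph{balanced} disequality signatures $\neq_{2k}$, so no single element of $\mathcal{DEQ}$ directly realises an unbalanced split. I would resolve this via a global re-pairing argument that exploits two facts simultaneously: the identity $\sum_j p_j = \sum_j q_j$ (so that positive and negative ``excesses'' must globally cancel) and the internally forced opposite pairings inside each $f_c$ (the edge at position $i$ of $v_c$ is already pinned by the support of $f_c$ to be the negation of the edge at position $n+i$). These internal pairings serve as bridges that allow an excess positive appearance of $y_j$ to be grouped with an excess negative appearance of some $y_{j'}$ into a single balanced $\neq_{2k}$ vertex, without introducing any spurious constraint beyond those already implicit in $\Omega$. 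The most delicate step will be to confirm that these bridging gadgets introduce exactly the right multiplicative factors so that $\Holant(\Omega') = \CSP(\Omega)$ on the nose rather than only up to a non-trivial scalar, and that this global re-pairing can always be carried out regardless of how the imbalances $p_j - q_j$ are distributed among the variables.
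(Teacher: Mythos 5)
Your proposal correctly identifies the key tool (the pairwise-opposite structure from Lemma~\ref{lem-opposite}) and correctly locates the obstruction: the per-variable positive/negative counts $p_j, q_j$ need not be equal, so placing a single $\neq_{2p_j}$ per CSP variable fails. However, the ``global re-pairing argument'' you sketch to repair this is precisely where the real work of the proof lies, and it is not carried out. The paper's proof makes your re-pairing idea concrete as follows: build a graph $G$ on the CSP variables with an edge joining $x$ and $y$ whenever they appear as an opposite pair inside some constraint; take its connected components; pick a representative $x^{\tt r}$ in each. If a component is non-bipartite, $\CSP(I)\equiv 0$ (a degenerate case your sketch does not address). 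Otherwise each variable equals $x^{\tt r}$ or $\overline{x^{\tt r}}$ according to path parity, so the \#CSP sum rewrites exactly (no gadgets are inserted and hence no scalar correction is needed) as a sum over the representative variables alone. The crucial observation is that the balance you are worried about holds automatically \emph{at the level of representatives}, not of the original CSP variables: each opposite pair inside a constraint is internal to a single component, and it contributes exactly one edge carrying $x^{\tt r}$ and one carrying $\overline{x^{\tt r}}$ to that component's LHS vertex (because the two members of the pair have opposite path parity to $x^{\tt r}$). Thus every representative variable sees equally many $x^{\tt r}$ and $\overline{x^{\tt r}}$ incidences and can be realized by a single $\neq_{2k}\in\mathcal{DEQ}$. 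Your intuition---merge excess positive and negative appearances along the internally forced pairings---is the right one, but the proposal stops short of the two steps that actually need proof: the connected-component construction with the bipartiteness check, and the verification that the representative-level balance is automatic.
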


\begin{proof}
Given an instance $I$ of  {\rm \#CSP}$(\mathcal{F})$ over $m$ variables $V=\{x_1, \ldots, x_m\}$. Suppose it
contains $\ell$ constraints $f_i$  $(i\in [\ell])$ of arity $2n_i$, and $f_i$ is applied to the variables
$x_{i_1}, \ldots, x_{i_{2n_i}}$.
We define a graph $G=(V, E)$, where $V$ is the variable set and $(x, y)\in E$ if variables $x, y$ appear as an opposite pair in some $\mathscr{S}(f_i)$.
Consider all connected components of $G$.
We get a partition of $V$. 
Pick a representative variable in each connected component and define $V^{\tt r}$ to be the set of representative variables. 
Without loss of generality, we assume $V^{\tt r}=\{x_1, \ldots, x_{m^{\tt r}}\}$. For each variable $x\in V$, we use $x^{\tt r} \in V^{\tt r}$ to denote its representative variable. 
By the definition of opposite pairs, 
for any assignment with a nonzero contribution,
we have $x=\overline{x^{\tt r}}$ if there is a path of odd length from $x$ to $x^{\tt r}$ and $x=x^{\tt r}$ if there is a path of even length from $x$ to $x^{\tt r}$ (if $x^{\tt r}$ is $x$ itself, we say there is a path of length $0$ from $x^{\tt r}$ to $x$). 
If for some $x$, we have both $x=\overline{x^{\tt r}}$ and $x=x^{\tt r}$, (that is, the connected component containing $x$ is not a bipartite graph), then we know \#CSP$(I)\equiv 0$ since  $x=\bar x$ is impossible.
Otherwise, for each variable $x\in V$ we have either $x=\overline{x^{\tt r}}$ or $x=x^{\tt r}$, but not both. 

Then, for any nonzero term in the sum
$${\rm \#CSP}(I)= \sum_{x_1, \ldots, x_m \in \mathbb{Z}_2}\prod^{\ell}_{i=1}f_i(x_{i_1}, \ldots, x_{i_{2n_i}}),$$
the assignment of all variables in $V$ can be uniquely extended from its restriction on representative variables $V^{\tt r}$.
Moreover, since $\mathscr{S}(f_i)$ is pairwise opposite, for each opposite pair $(x_{i_s}, x_{i_{n+s}})$, we know exactly one variable is equal to $x^{\tt r}_{i_s}$ while the other one is equal to $\overline {x^{\tt r}_{i_s}}$. 
Thus each pair  $(x_{i_s}, x_{i_{n+s}})$
is either $(x^{\tt r}_{i_s}, \overline {x^{\tt r}_{i_s}})$
or $(\overline {x^{\tt r}_{i_s}}, x^{\tt r}_{i_s})$.
We will write this as
$(\widehat{x^{\tt r}_{i_s}},
\overline{\widehat {x^{\tt r}_{i_s}}})$.
Then, we have
\begin{equation}\label{equ-self}
{\rm \#CSP}(I)= \sum_{x_1, \ldots, x_{m^{\tt r}} \in \mathbb{Z}_2}\prod^{\ell}_{i=1}f_i(x_{i_1}, \ldots,  x_{i_{2n_i}}) 
= \sum_{x_1, \ldots, x_{m^{\tt r}} \in \mathbb{Z}_2}\prod^{\ell}_{i=1}f_i(
\widehat{x^{\tt r}_{i_1}}, \ldots, \widehat{x^{\tt r}_{i_{n_i}}}, \overline{\widehat{x^{\tt r}_{i_1}}}, \ldots, \overline{\widehat{x^{\tt r}_{i_{n_i}}}}).
\end{equation}
The final form of (\ref{equ-self}) is an instance of $\holant{\mathcal{DEQ}}{ \mathcal{F}}$.
\end{proof}

By this reduction, we have the following hardness result.
\begin{corollary}\label{cor-self}
If every signature $f\in \mathcal{F}$ has affine support, then $\holant{\mathcal{DEQ}}{\mathcal{F}}$ is \#P-hard unless $\mathcal{F}\subseteq \mathscr{A}$ or $\mathcal{F}\subseteq \mathscr{P}$.
\end{corollary}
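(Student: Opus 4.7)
The proof of Corollary~\ref{cor-self} should be essentially immediate from the tools already assembled. The plan is to chain together the reduction of Lemma~\ref{lem-self} with the known \#CSP dichotomy (Theorem~\ref{csp-dic}), and handle the tractable side by appealing to Theorem~\ref{aptractable}.

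First, since every $f \in \mathcal{F}$ is an EO signature satisfying {\sc ars} (this is the running assumption of Section~\ref{neq_4}) and has affine support by hypothesis, Lemma~\ref{lem-opposite} tells us that $\mathscr{S}(f)$ is pairwise opposite for every $f \in \mathcal{F}$. This is precisely the structural condition required by Lemma~\ref{lem-self}, which then gives
\[
\#\text{CSP}(\mathcal{F}) \leqslant_T \holant{\mathcal{DEQ}}{\mathcal{F}}.
\]
Next, we split on the \#CSP dichotomy. Apply Theorem~\ref{csp-dic} to the set $\mathcal{F}$: either $\mathcal{F} \subseteq \mathscr{A}$ or $\mathcal{F} \subseteq \mathscr{P}$, or else $\#\text{CSP}(\mathcal{F})$ is \#P-hard. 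In the latter case the reduction above transfers the \#P-hardness to $\holant{\mathcal{DEQ}}{\mathcal{F}}$, finishing the hard case.

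For the tractable side, observe that $\holant{\mathcal{DEQ}}{\mathcal{F}}$ is a special case of $\Holant(\neq_2 \mid \mathcal{F} \cup \mathcal{DEQ})$ (every $\neq_{2k} \in \mathcal{DEQ}$ is itself an EO signature realizable from copies of $\neq_2$, so adding it to the right-hand side does not change the complexity class). One checks that $\mathcal{DEQ} \subseteq \mathscr{A} \cap \mathscr{P}$, since each $\neq_{2k}$ is a product of binary disequalities (placing it in $\mathscr{P}$) and is likewise an affine signature with $\lambda = 1$ and trivial quadratic phase (placing it in $\mathscr{A}$). Consequently, if $\mathcal{F} \subseteq \mathscr{A}$ then $\mathcal{F} \cup \mathcal{DEQ} \subseteq \mathscr{A}$, and if $\mathcal{F} \subseteq \mathscr{P}$ then $\mathcal{F} \cup \mathcal{DEQ} \subseteq \mathscr{P}$. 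In either case Theorem~\ref{aptractable} yields polynomial-time tractability.

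There is no real obstacle here; the nontrivial work has already been done in Lemmas~\ref{lem-opposite} and \ref{lem-self}. The only point that deserves a line of verification is the membership $\mathcal{DEQ} \subseteq \mathscr{A} \cap \mathscr{P}$, which ensures the tractable side is not spoiled by enlarging the signature set with disequalities on the right-hand side.
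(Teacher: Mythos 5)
Your proposal is correct and takes the same route the paper intends: the corollary is immediate from chaining Lemma~\ref{lem-self} (which already takes affine support as its hypothesis) with the \#CSP dichotomy of Theorem~\ref{csp-dic}. The only minor point is that the corollary asserts only the hardness direction, so the closing discussion of the tractable side and the membership $\mathcal{DEQ}\subseteq \mathscr{A}\cap\mathscr{P}$ is unnecessary here (though it is accurate and does no harm); likewise, invoking Lemma~\ref{lem-opposite} explicitly is redundant since Lemma~\ref{lem-self} is already stated under the affine-support hypothesis.
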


\begin{theorem}\label{theo-4}
$\rm{\#EO}(\{\neq_4\}\cup \mathcal{F})$ is \#P-hard unless $\mathcal{F}\subseteq \mathscr{A}$ or $\mathcal{F}\subseteq \mathscr{P}$.
\end{theorem}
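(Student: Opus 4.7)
The plan is to chain together the reductions and structural results established earlier in the section. Starting from an instance of $\rm{\#EO}(\{\neq_4\}\cup\mathcal{F}) = \holant{\neq_2}{\{\neq_4\}\cup\mathcal{F}}$, I would first invoke Lemma~\ref{lem-4.1} to realize all of $\mathcal{DEQ}$ on the right-hand side (iteratively connecting a copy of $\neq_{2k}$ to $\neq_4$ through $\neq_2$ yields $\neq_{2k+2}$), obtaining
\[\rm{\#EO}(\mathcal{DEQ}\cup\mathcal{F}) \;\leqslant_T\; \rm{\#EO}(\{\neq_4\}\cup\mathcal{F}).\]
Then Lemma~\ref{lem-4.2} lifts $\mathcal{DEQ}$ to the left-hand side, so it suffices to show $\holant{\mathcal{DEQ}}{\mathcal{F}}$ is \#P-hard under the stated assumption.

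Next I would split into two cases according to whether every $f\in\mathcal{F}$ has affine support. If some $f\in\mathcal{F}$ fails to have affine support, then Corollary~\ref{cor-square}, which was obtained by simulating $\rm{\#CSP}(|\mathcal{F}|^2)$ inside $\holant{\mathcal{DEQ}}{\mathcal{F}}$ using the {\sc ars} identity $f(\overline{\alpha})=\overline{f(\alpha)}$, immediately yields \#P-hardness.

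Otherwise every $f\in\mathcal{F}$ has affine support. Here the key structural input is Lemma~\ref{lem-opposite}: an affine linear subspace contained in $\mathscr{H}_{2n}$ is automatically pairwise opposite. This lets me apply Lemma~\ref{lem-self} to obtain the global reduction $\rm{\#CSP}(\mathcal{F})\leqslant_T \holant{\mathcal{DEQ}}{\mathcal{F}}$, where the crucial point is that for each opposite pair in the support of some $f\in\mathcal{F}$, an $\ne_{2k}$ on the left of the bipartite Holant problem forces the two variables to take opposite values, exactly replicating the equality identifications of a \#CSP instance after collapsing each connected component to a representative variable.

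Finally, invoking the \#CSP dichotomy (Theorem~\ref{csp-dic}), $\rm{\#CSP}(\mathcal{F})$ is \#P-hard unless $\mathcal{F}\subseteq\mathscr{A}$ or $\mathcal{F}\subseteq\mathscr{P}$, which through the reduction gives the same hardness for $\holant{\mathcal{DEQ}}{\mathcal{F}}$ and hence for $\rm{\#EO}(\{\neq_4\}\cup\mathcal{F})$. In the remaining tractable case $\mathcal{F}\subseteq\mathscr{A}$ or $\mathcal{F}\subseteq\mathscr{P}$, since $(\neq_2),(\neq_4)\in\mathscr{A}\cap\mathscr{P}$, tractability follows from Theorem~\ref{aptractable} together with Lemma~\ref{eo=holant}. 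The only nontrivial obstacle — and it is handled entirely by the earlier lemmas — is that $\mathcal{EQ}$ cannot be realized locally from EO signatures; the pairwise-opposite structure of Lemma~\ref{lem-opposite} is what permits the necessary \emph{global} simulation in Lemma~\ref{lem-self}.
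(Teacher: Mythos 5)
Your proposal is correct and follows the same route as the paper: chain Lemmas~\ref{lem-4.1} and~\ref{lem-4.2} to reduce to showing hardness of $\holant{\mathcal{DEQ}}{\mathcal{F}}$, then split on affine support, invoking Corollary~\ref{cor-square} in the non-affine case and (via Lemma~\ref{lem-opposite}, Lemma~\ref{lem-self}, and the \#CSP dichotomy) Corollary~\ref{cor-self} in the affine case. The only cosmetic difference is that you unroll Corollary~\ref{cor-self} into its two ingredients and append an unneeded tractability remark, but the logic is identical to the paper's.
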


\begin{proof}
It follows from Lemmas \ref{lem-4.1}, \ref{lem-4.2}, Corollaries \ref{cor-square} and \ref{cor-self}.
\end{proof}

Combining Theorems \ref{aptractable},  \ref{5theo} and \ref{theo-4},
we can finish the proof of the main Theorem \ref{main}.

\begin{proof}
  (of Theorem \ref{main})
If $\mathcal{F}\subseteq \mathscr{A}$  or $\mathcal{F}\subseteq \mathscr{P}$, then by Theorems \ref{aptractable}, 
$\rm{\#EO}( \mathcal{F})$ is tractable.
Suppose $\mathcal{F}\not \subseteq \mathscr{A}$  and
$\mathcal{F}\not \subseteq \mathscr{P}$, then certainly
 $\mathcal{F}\not \subseteq \mathcal{B}$ as $\mathcal{B} \subseteq  \mathscr{P}$. Then Theorems  \ref{5theo} and \ref{theo-4} 
 complete the proof.
\end{proof}

\section{\#EO  Encompasses \#CSP}
In this section, we will show that the \#EO  framework encompasses \#CSP.

Let $g$ be an arbitrary signature of arity $n>0$ (with no assumption on satisfying {\sc ars} or to be EO). 
We associate $g$ with an EO signature $\widetilde{g}$ of arity $2n$ in the following way. 
We define 
\[
\widetilde{g}(x_1, \ldots, x_n, x_{n+1}, \ldots, x_{2n})=
\begin{cases}
g(x_1, \ldots x_n) & \text{ if } x_i\neq x_{i+n}  ~~~~ (i \in [n]),\\ ~~~~~~~0 & \text{ otherwise.}
\end{cases}
\]
Clearly, $\widetilde{g}$ is an EO signature. Moreover, its support is pairwise opposite, i.e., $x_i$ and $x_{n+i}$ form an opposite pair. 
We say $x_i$ is in the first half of the inputs  of $\widetilde{g}$, while $x_{n+i}$ is in the second half. 
We define 
$\widetilde{\mathcal{G}}=\{\widetilde{g}\mid g \in \mathcal{G}\}$ for an arbitrary signature set $\mathcal{G}$.
We show that \#CSP is expressible in the \#EO framework by the following theorem.
\begin{theorem}\label{thm:csp-by-eo}
For every signature set $\mathcal{G}$ and
the {\rm EO}  signature set $\widetilde{\mathcal{G}}$
defined above, we have
\[\CSP(\mathcal{G})\equiv_T{\rm \#EO}(\widetilde{\mathcal{G}}).\]
\end{theorem}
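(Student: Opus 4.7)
The plan is to prove both directions of the Turing equivalence, with the forward direction being the essential content and the reverse following from tractability transfer plus the known \#CSP dichotomy. For $\CSP(\mathcal{G}) \leqslant_T {\rm \#EO}(\widetilde{\mathcal{G}})$, I would begin with a \#CSP instance over Boolean variables $y_1, \ldots, y_m$ with $\ell$ constraint applications $g_t(y_{t,1}, \ldots, y_{t,n_t})$, and construct an EO-signature grid $\Omega$ as follows. Place one $\widetilde{g_t}$-vertex $v_t$ per constraint application. For each variable $y_j$ occurring at positions $(i_1, s_1), \ldots, (i_{k_j}, s_{k_j})$ (meaning $y_j$ is the $s_r$-th input of $g_{i_r}$), link these occurrences by a cycle of $k_j$ edges in $\Omega$: for each $r = 1, \ldots, k_j$ add an edge from the first-half slot $x_{s_r}$ of $\widetilde{g_{i_r}}$ to the second-half slot $x_{n_{i_{r+1}}+s_{r+1}}$ of $\widetilde{g_{i_{r+1}}}$, with indices taken cyclically mod $k_j$. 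When $k_j = 1$ this degenerates to a self-loop inside $\widetilde{g_{i_1}}$ between slots $x_{s_1}$ and $x_{n_{i_1}+s_1}$.

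The key verification is that on any valid Eulerian orientation of $\Omega$, every first-half slot on the $y_j$-cycle takes a common value (to be identified with $y_j$) and every second-half slot on that cycle takes the complement $\bar y_j$. This holds because every edge acts as $\neq_2$ on its two ends (Lemma~\ref{eo=holant}) and the internal pairing inside each $\widetilde{g_{i_r}}$ also flips the value (from the support condition $x_s \neq x_{n+s}$); the edge-end cycle has even length $2k_j$ and alternates between these two kinds of flips, producing a consistent two-coloring. With this identification the partition function $\#{\rm EO}_\Omega$ coincides termwise with the \#CSP value of the original instance, since each $\widetilde{g_t}$-vertex contributes exactly $g_t(y_{t,1}, \ldots, y_{t,n_t})$ and the sum over EO-orientations corresponds to the sum over $(y_1, \ldots, y_m) \in \{0,1\}^m$.

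For the reverse direction ${\rm \#EO}(\widetilde{\mathcal{G}}) \leqslant_T \CSP(\mathcal{G})$, I would argue by tractability transfer using Theorem~\ref{csp-dic}. If $\mathcal{G} \subseteq \mathscr{A}$, then each $\widetilde{g}$ is obtained from $g$ by further intersecting its affine support with the linear conditions $x_i + x_{n+i} = 1$, preserving the affine form, so $\widetilde{g} \in \mathscr{A}$. If $\mathcal{G} \subseteq \mathscr{P}$, then $\widetilde{g}$ is the product of $g$ (already product-type) with the $n$ binary disequalities $\neq_2(x_i, x_{n+i})$, hence $\widetilde{g} \in \mathscr{P}$. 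By Theorem~\ref{aptractable}, ${\rm \#EO}(\widetilde{\mathcal{G}})$ is then tractable and trivially Turing-reduces to $\CSP(\mathcal{G})$. Otherwise $\CSP(\mathcal{G})$ is \#P-hard by Theorem~\ref{csp-dic}, and since ${\rm \#EO}(\widetilde{\mathcal{G}})$ lies in $\#$P, the reduction is immediate via \#P-completeness.

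The main obstacle will be the bookkeeping in the forward construction: carefully tracking which slot of $\widetilde{g_{i_r}}$ is labeled first-half $x_{s_r}$ versus its partner $x_{n_{i_r}+s_r}$, correctly handling the degenerate self-loop case $k_j = 1$ (where the EO-convention on a loop gives two orientations summing to the two values of $y_j$), and verifying that the alternating sequence of $\neq_2$-flips and internal pairing-flips around each $y_j$-cycle composes to the identity, so that every cycle contributes a single independent binary degree of freedom corresponding to the original \#CSP variable.
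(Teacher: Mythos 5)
Your forward direction is essentially the paper's construction with the intermediate degree-$2$ $\neq_2$ vertices absorbed into the edges: the alternating-flip argument you give (edge-flip followed by internal opposite-pair flip, composing to the identity around each variable cycle) is the same as the paper's circuit argument phrased without the explicit $u^i$ nodes, and your handling of the self-loop case $k_j = 1$ is correct. The genuine difference is in the reverse direction. The paper gives a \emph{constructive} reduction ${\rm \#EO}(\widetilde{\mathcal{G}}) \leqslant_T \CSP(\mathcal{G} \cup \{\neq_2\})$: given an ${\rm \#EO}(\widetilde{\mathcal{G}})$ instance, one decomposes its graph into edge-disjoint circuits by following the opposite-pair structure of each $\widetilde{g}$, introduces one Boolean variable per circuit, and inserts $\neq_2$ factors to compensate when the default orientation traverses a pair in the ``wrong'' order; Theorem~\ref{csp-dic} is invoked only in the last step to absorb the auxiliary $\neq_2$ into $\CSP(\mathcal{G})$. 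You bypass that construction entirely with a pure dichotomy-transfer argument: when $\mathcal{G} \subseteq \mathscr{A}$ or $\mathcal{G} \subseteq \mathscr{P}$ you verify the (correct) closure claims $\widetilde{\mathcal{G}} \subseteq \mathscr{A}$ or $\widetilde{\mathcal{G}} \subseteq \mathscr{P}$, and otherwise you appeal to \#P-hardness of $\CSP(\mathcal{G})$. Both routes are valid; the paper's is more self-contained and exhibits the two problems as instance-by-instance identical up to $\neq_2$, which supports its ``categorical'' remark, while yours is shorter but leans entirely on the known dichotomy. One small caution: the phrase ``lies in \#P'' is loose for complex-weighted counting problems --- what you actually need (and what the paper also uses implicitly in the same step) is that such partition functions with algebraic entries are computable in ${\rm FP}^{\rm \#P}$, hence Turing-reducible to any \#P-hard problem.
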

\begin{remark}
Before we give the proof, we remark that
this theorem is not
 merely stating that for an
 arbitrary  $\CSP(\mathcal{G})$ problem,
 one can reduce 
 every instance  of $\CSP(\mathcal{G})$ to  an instance of  a suitable ${\rm \#EO}(\widetilde{\mathcal{G}})$ problem.
 Theorem~\ref{thm:csp-by-eo} is stronger 
 and categorical: For every signature set $\mathcal{G}$
 in the $\CSP$ framework,
 there is a (uniformly constructible)  EO  signature set $\widetilde{\mathcal{G}}$
 such that  $\CSP(\mathcal{G})$ is the same as the \#EO problem
 ${\rm \#EO}(\widetilde{\mathcal{G}})$.  In particular, a complexity dichotomy
 for  \#EO problems would generalize
the complexity dichotomy
 for $\CSP$ problems on Boolean variables (which is already known).  This is similar to the fact
 that the  $\CSP$ framework is encompassed by the Holant framework (but not vice versa,
 thus the latter is the more expressive one).
 In contrast, an instance dependent reduction in one direction would not have the
 consequence that a complexity dichotomy for the latter generalizes
 a  complexity dichotomy for the former.
\end{remark}

\begin{proof}
We first show that
every instance of
$\CSP(\mathcal{G})$ is expressible
canonically as an instance  of ${\rm \#EO}(\widetilde{\mathcal{G}}),$ thus, $\CSP(\mathcal{G})\leqslant_T{\rm \#EO}(\widetilde{\mathcal{G}}).$
Let $G=(U, V, E)$ be a bipartite graph representing an instance $I$ of $\CSP(\mathcal{G})$,
where each $u\in U$ is a variable and each $v\in V$ is labeled by a constraint function $g\in \mathcal{G}$.
We will modify the instance $I$ to an instance $\widetilde{I}$ of ${\rm \#EO}(\widetilde{\mathcal{G}})$ that evaluates to
the same  value, as follows.

\begin{enumerate}
    \item 
For every $u\in U$,
 we create  $k = \deg_G(u)$ vertices denoted by $u^i$ 
 ($ 1 \leqslant i \leqslant k$). 
(For example, in Figure \ref{fig:csp-to-eo}, vertices $u_1$, $u_2$ and $u_3$ are decomposed into $3$, $2$ and $1$ vertices respectively.)
Then we connect the $k$ edges originally incident to $u$ to these $k$ new vertices,
so
that each new vertex is incident to exactly one edge.
(To be specific
we assume the edges at $u$ in $I$ are ordered
from 1 to $k$, and  we connect the $i$-th edge to $u^i$.
These are edges drawn by solid lines in Figure \ref{fig:csp-to-eo}(b).)
We denote this graph by $G'$.
Each $u^i$ in  $G'$ has degree $1$ and the degree of each $v\in V$ does not change. 

\item
For each edge $e^i=(u^i, v)$ in the graph $G'$, we add an edge $\bar e^i=(u^{i+1}, v)$ to $G'$ and we call them a pair.
(Here if $\deg_G(u)=k$ then we use $u^{k+1}$ to denote $u^1$;  we will add a multiple edge if $e^{i+1}=(u^{i+1}, v)$ is already in $G'$.
These edges $\bar e^i$ are  drawn by dashed lines in Figure \ref{fig:csp-to-eo}(b).)
This defines a graph $\widetilde{G}$. 
Each $u^i$ in  $\widetilde{G}$ has degree $2$ and we label it by $\neq_2$. 
If $\deg_G(v)=n$ and is labeled by the constraint function $g \in \mathcal{G}$, 
then $v$  in $\widetilde{G}$ has degree $2n$
and we label it by the corresponding $\widetilde{g}\in \widetilde{\mathcal{G}}$.
We place the signature $\widetilde{g}$ in a way such that 
every  pair of edges $e^i=(u^i, v)$ and $\overline{e}^i=(u^{i+1}, v)$ incident to the same $v$ 
appears as an opposite pair in the inputs of the function $\widetilde{g}$, 
and $e^i$ appears in the first half of the inputs of $\widetilde{g}$ while $\overline{e}^i$ appears in the second half.  
Recall that $\widetilde{g}$ is defined to be pairwise opposite such that
its $j$-th variable in the first half
is paired with its $(n+j)$-th variable
in the second half.
This defines an instance $\widetilde{I}$ of ${\rm \#EO}(\widetilde{\mathcal{G}})$.
\end{enumerate}
\begin{figure}[!htbp]
\centering
		\includegraphics[height=2.1in]{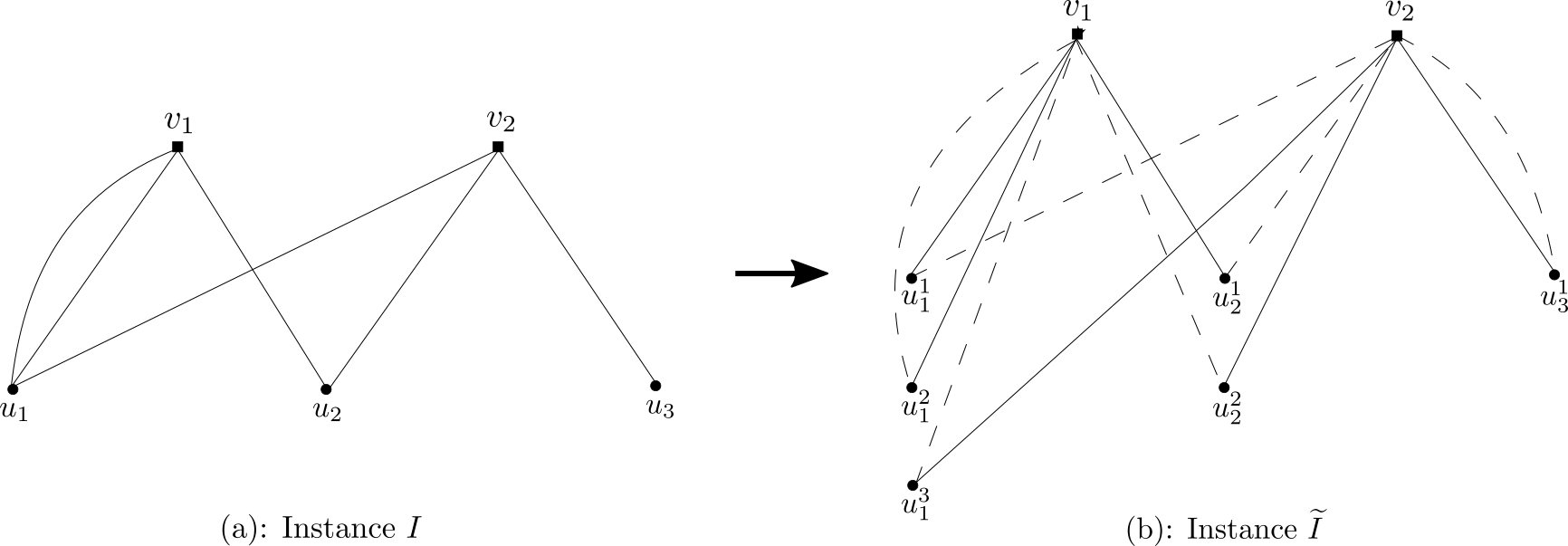}
	\caption{The reduction from \#CSP to \#EO}
	\label{fig:csp-to-eo}
	\end{figure}
	
	We  show that  ${\rm \#EO}_{\widetilde{I}}$ has the same
	 value as the instance $I$ for $\CSP(\mathcal{G})$.
Consider each variable $u\in U$. Suppose it has $\deg_G(u)=k$ in the instance $I$. 
It corresponds to $k$ vertices $u^1, \ldots, u^k$ and $2k$ edges $e^1, \overline{e}^1, \ldots, e^i, \overline{e}^i, \ldots, e^k$ and $\overline{e}^k$. 
These $2k$ edges form a circuit $C_u$.
For example, in Figure\ref{fig:csp-to-eo}, 
$u_1^1, v_1, u_1^2, v_1, u_1^3, v_2$ back to $u_1^1$ is such a circuit 
where the edges are successively
$e_1^1, \overline{e}_1^1, e_1^2,  \overline{e}_1^2, e_1^3, \overline{e}_1^3$
(edges drawn by solid lines and dashed lines alternate).
Note that, for every pair of edges $e^i$ and $\overline{e}^i$, 
we placed the signature $\widetilde{g}$
such that $e^i$ and $\overline{e}^i$ appear as an opposite pair.
Thus, we may assume $e^i$ and
$\overline{e}^i$ take opposite values
in the evaluation of ${\rm \#EO}_{\widetilde{I}}$.
Also, since each $u^i$ is labeled by $\neq_2$, we may also assume $\overline{e}^i$ and  $e^{i+1}$
take opposite values
in the evaluation. (This is really a consequence of
the definition of \#EO problems.)
Thus, for any (possible) nonzero term in the sum ${\rm \#EO}_{\widetilde{I}}$, 
as a consequence of the support of signatures in  $\widetilde{G}$ and $\neq_2$,
we know on each circuit $C_u$  all edges  must take values $(0, 1, 0, 1, \cdots, 0, 1)$ or
 $(1, 0, 1, 0, \cdots, 1, 0)$, i.e., the values of $0, 1$ alternate.
 Therefore, on the circuit $C_u$, we have $e^1, e^2, \ldots, e^k$ all take the same 0-1 value, 
 which corresponds to the 0-1 assignment on the variable $u$ in the \#CSP instance $I$.
 Recall in the definition of $\widetilde{g}$, its value can be determined by the first half of its inputs. 
 By the placement of $\widetilde{g}$, the first half of its inputs are edges in the graph $G'$ (drawn by solid lines).
 Therefore, the contribution of $\widetilde{g}$
 to ${\rm \#EO}_{\widetilde{I}}$ is exactly the same as  the  contribution of  $g$ 
 in the \#CSP instance $I$.
 Thus, these two instances have the same value.
 
For the other direction, we first note that $\CSP( \mathcal{G} \cup \{\neq_2\})\leqslant_T \CSP(\mathcal{G}) $. 
If $\CSP(\mathcal{G})$ is \#P-hard, the reduction holds trivially since every \#CSP problem can be reduced in P-time to a \#P-hard problem.
Otherwise, by Theorem~\ref{csp-dic}, $\CSP(\mathcal{G})$ is tractable and $\mathcal{G}\subseteq \mathscr{A}$ or $ \mathscr{P}$.
Since  $(\neq_2) \in 
\mathscr{A} \cap  \mathscr{P}$,  we have 
$\mathcal{G} \cup \{\neq_2\}\subseteq \mathscr{A}$ or $ \mathscr{P}$.
Thus, $\CSP( \mathcal{G} \cup \{\neq_2\})$ is tractable. Then, again the  reduction holds trivially. 
Then, we will show that 
${\rm \#EO}(\widetilde{\mathcal{G}})\leqslant_T
\CSP( \mathcal{G} \cup \{\neq_2\})$.

Consider an arbitrary instance $I'$ of ${\rm \#EO}(\widetilde{\mathcal{G}})$.
Because every signature in $\widetilde{\mathcal{G}}$ has the pairing structure
among its variables, we can decompose the graph of $I'$ into edge disjoint
circuits, by always following the paired variables at each constraint vertex.
For each 
 edge disjoint
circuit, we choose
an arbitrary default orientation.
The circuit visits constraint
vertices in some order according to
the default orientation.
The visit follows successive pairs of edges.
Recall that as a consequence of the support of constraint functions,
on each circuit, all these pairs of edges in the successive order
must take the same ordered pair of values $(x, \bar x)$,
where $x\in \{0, 1\}$.
Thus, we can define a Boolean variable $x$ from
the edges on each such circuit. 
From this
 a corresponding instance $I$ for $\CSP( \mathcal{G} \cup \{\neq_2\})$ can be obtained
 that has the same value as $I'$ in ${\rm \#EO}(\widetilde{\mathcal{G}})$.
 
 More specifically,
 suppose $g(x_1, \ldots, x_n) \in \mathcal{G}$ and let $g'(x_1, \ldots, x_n)
 = g(x^{\epsilon_1}_1, \ldots, x^{\epsilon_n}_n)$, where
 each $x^{\epsilon_i}_i$ is either
 $x_i$ or $\overline{x_i}$.
 To discuss the complexity of
 $\CSP( \mathcal{G} \cup \{\neq_2\})$,
 using $(\neq_2)$
 we may assume every function
 obtained by flipping any number of 
 variables in a function $g \in \mathcal{G}$ is also in $\mathcal{G}$.

Now, consider the default orientation of each circuit.
At constraint vertices, the default orientation visits successive pairs
of edges corresponding to
 paired inputs of constraint functions,
say, $\{x_j, x_{n+j}\}$.  
If the default orientation 
always visits in the order
$x_j$  followed by $x_{n+j}$,
then this is exactly how
the canonical construction given above
and we can recover an instance
$I$ for  $\CSP(\mathcal{G})$ with the same value.
If at some  constraint  $\widetilde{g}$ of arity $2n$ the default orientation 
happens to visit 
in the order  $x_{n+j}$ followed by 
$x_j$, we can use one copy of
$\neq_2$ to modify the original
function $g$ to get another 
constraint $g'$, so that
the corresponding $\widetilde{g'}$ is just $\widetilde{g}$ with a
flip between its variables
$x_{n+j}$  and
$x_j$.  
Then 
according to the  default orientation 
the visit is
in the order  $x_j$
followed by $x_{n+j}$. 
%
%
%
\end{proof}

\section{Holographic Transformation to Real-Valued Holant Problems}\label{apen-holo}
We show that the {\sc ars} condition corresponds to real-valued signatures under a holographic transformation without restricting on EO signatures. First we introduce the idea of holographic transformations. It is convenient to consider bipartite graphs.
For a general graph,
we can always transform it into a bipartite graph while preserving the Holant value,
as follows.
For each edge in the graph,
we replace it by a path of length two.
(This operation is called a \emph{2-stretch} of the graph and yields the edge-vertex incidence graph.)
Each new vertex is assigned the binary \textsc{Equality} signature $(=_2) = (f^{00}, f^{01}, f^{10}, f^{11})
= (1,0,0,1)$.
For an invertible $2$-by-$2$ matrix $T \in {\rm GL}_2({\mathbb{C}})$
 and a signature $f$ of arity $n$, written as
a column vector (contravariant tensor) $f \in \mathbb{C}^{2^n}$, we denote by
$T^{-1}f = (T^{-1})^{\otimes n} f$ the transformed signature.
  For a signature set $\mathcal{F}$,
define $T^{-1} \mathcal{F} = \{T^{-1}f \mid  f \in \mathcal{F}\}$ the set of
transformed signatures.
For signatures written as
 row vectors (covariant tensors) we define
$f T$ and  $\mathcal{F} T$ similarly.
In the special case of the  matrix
$Z = \frac{1}{\sqrt{2}} \left[\begin{smallmatrix} 1 & 1 \\ \ii & -\ii \end{smallmatrix}\right]$,
we also define $\widehat{\mathcal{F}} = Z  \mathcal{F}$ and $\widehat{f} = Z  f$.
Note that $Z^{-1} = \frac{1}{\sqrt{2}} \left[\begin{smallmatrix} 1 & -\ii \\ 1 & \ii \end{smallmatrix}\right]$, and $(=_2) Z^{\otimes 2} = (\neq_2)$, i.e.,  $Z$ transforms  
 binary \textsc{Equality}  to
 binary \textsc{Disequality}.

Let $T \in {\rm GL}_2({\mathbb{C}})$.
The holographic transformation defined by $T$ is the following operation:
given a signature grid $\Omega = (H, \pi)$ of $\holant{\mathcal{F}}{\mathcal{G}}$,
for the same bipartite graph $H$,
we get a new signature grid $\Omega' = (H, \pi')$ of $\holant{\mathcal{F} T}{T^{-1} \mathcal{G}}$ by replacing each signature in
$\mathcal{F}$ or $\mathcal{G}$ with the corresponding signature in $\mathcal{F} T$ or $T^{-1} \mathcal{G}$.

\begin{theorem}[Valiant's Holant Theorem~\cite{Val08}]
 For any $T \in {\rm GL}_2({\mathbb{C}})$,
  \[\Holant(\Omega; \mathcal{F} \mid \mathcal{G}) = \Holant(\Omega'; \mathcal{F} T \mid T^{-1} \mathcal{G}).\]
\end{theorem}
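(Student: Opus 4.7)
The plan is to realize the Holant partition function as a complete tensor contraction on the bipartite graph $H=(U,V,E)$ and then exploit the trivial identity $T T^{-1} = I$ inserted on every edge. First I would rewrite
\[
\Holant_\Omega = \sum_{\sigma:E\to\{0,1\}} \prod_{u\in U} f_u(\sigma|_{E(u)}) \prod_{v\in V} g_v(\sigma|_{E(v)})
\]
as the complete contraction of $\bigotimes_{u\in U} f_u$ with $\bigotimes_{v\in V} g_v$ along the edge incidences of $H$: each edge $e=(u,v)$ performs the binary contraction $\sum_{x\in\{0,1\}}(f_u)_{\dotsc x\dotsc}(g_v)_{\dotsc x\dotsc}$, and these contractions are carried out independently over all edges.

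The key step is to use the identity $T T^{-1} = I$ on every edge simultaneously. For a single edge $e=(u,v)$ the scalar contraction $\sum_x a_x b_x = a^{\top} b$ equals $(a^{\top} T)(T^{-1} b)$, so inserting $T T^{-1}$ on the edge does not change the value. Performing this insertion on all $|E|$ edges and regrouping the $T$-factors with their neighboring vertex signatures, each incidence of $e$ at $u$ contributes a $T$ applied to the corresponding tensor leg of $f_u$; since this happens for every edge incident to $u$, the net effect on $f_u$ is $f_u \mapsto f_u T^{\otimes \deg(u)}$, which under the row-vector convention of the definition is precisely $f_u T$. Symmetrically, each $g_v$ is replaced by $(T^{-1})^{\otimes \deg(v)} g_v = T^{-1} g_v$. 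The resulting tensor contraction is the partition function of the relabeled grid $\Omega'$, so $\Holant_\Omega = \Holant_{\Omega'}$.

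The only bookkeeping point to watch is the duality of the two transformation laws: $\mathcal{F}$-signatures are treated as covariant (row) tensors transforming as $f\mapsto fT$, while $\mathcal{G}$-signatures are contravariant (column) tensors transforming as $g\mapsto T^{-1}g$; the two rules are dual and together implement the identity $I$ on each edge. This is why the bipartiteness of $H$ — guaranteeing each edge has exactly one $\mathcal{F}$-endpoint and one $\mathcal{G}$-endpoint — is essential for the clean split of the inserted $T T^{-1}$. No deeper analytic or combinatorial content enters beyond multilinearity of tensor contraction and $TT^{-1}=I$, so the identity holds for any $T \in \mathrm{GL}_2(\mathbb{C})$, with no reliance on the {\sc ars} condition or the EO restriction used elsewhere in the paper.
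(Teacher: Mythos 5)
The paper states this as a cited result (Valiant's Holant Theorem, [Val08]) and does not supply its own proof, so there is nothing internal to compare against. Your argument is the standard proof of the theorem and it is correct: expressing $\Holant_\Omega$ as the full tensor contraction over the bipartite edge set, inserting $TT^{-1}=I$ on every edge, and regrouping the $T$-legs with the $\mathcal{F}$-side (covariant, row) signatures and the $T^{-1}$-legs with the $\mathcal{G}$-side (contravariant, column) signatures precisely yields the transformed grid $\Omega'$, with bipartiteness guaranteeing the clean split. You have also correctly noted that the identity uses nothing beyond multilinearity and invertibility of $T$, and is independent of the {\sc ars}/EO restrictions used elsewhere in the paper.
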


Therefore,
a holographic transformation does not change the value and thus the complexity of a Holant problem in the bipartite setting.
Note that \#EO$(\mathcal{F})$ can be expressed as $\holant{\neq_2}{\mathcal{F}}$ and $(\neq_2) (Z^{-1})^{\otimes 2}= (=_2)$. Thus, we have
$$\#{\rm EO}(\mathcal{F})\equiv_T \holant{\neq_2}{\mathcal{F}}\equiv_T\holant{(\neq_2)Z^{-1}} {Z\mathcal{F}}\equiv_T \Holant(=_2 \mid \widehat{\mathcal{F}})\equiv_T\Holant(\widehat{\mathcal{F}}).$$
We can determine
   $\widehat{\mathcal{F}}$, for $\mathcal{F}$ consisting of (not necessarily EO) signatures satisfying  {\sc ars}, as follows.
\begin{theorem}
A complex-valued signature $f$ satisfies {\sc ars} if and only if $\widehat{f} = Z f$ is real-valued.
\end{theorem}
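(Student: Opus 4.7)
The plan is to compute $\widehat{f}(\beta) = (Z^{\otimes n} f)(\beta)$ explicitly, introduce the natural involution on signatures associated with {\sc ars}, and then verify that this involution corresponds under the transformation $Z$ to complex conjugation. The key observation that makes the bookkeeping painless is that the entries of $Z$ admit the uniform expression $Z_{b,a} = \frac{1}{\sqrt{2}}\, \ii^{b}(-1)^{ab}$ for $a,b\in\{0,1\}$, so taking the $n$-fold tensor product gives the clean formula
\[
\widehat{f}(\beta) \;=\; \frac{\ii^{|\beta|}}{2^{n/2}} \sum_{\alpha \in \mathbb{Z}_2^n} (-1)^{\alpha \cdot \beta}\, f(\alpha),
\]
where $|\beta|$ is the Hamming weight and $\alpha\cdot\beta = \sum_k \alpha_k \beta_k$.

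Next I would introduce the involution $\phi$ defined by $(\phi f)(\alpha) = \overline{f(\overline{\alpha})}$. By construction $f$ satisfies {\sc ars} if and only if $\phi(f) = f$. Since $Z$ is invertible, the holographic transformation $f \mapsto \widehat{f}$ is a bijection of signature spaces, so it suffices to establish the single identity
\[
\widehat{\phi(f)} \;=\; \overline{\widehat{f}}.
\]
Granting this, $\phi(f)=f$ is equivalent to $\widehat{\phi(f)} = \widehat{f}$, which is equivalent to $\overline{\widehat{f}} = \widehat{f}$, i.e.\ to $\widehat{f}$ being real-valued; both directions of the theorem follow simultaneously.

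The identity itself is a one-line change of variables. Applying the formula to $\phi(f)$ and substituting $\alpha' = \overline{\alpha}$ gives
\[
\widehat{\phi(f)}(\beta) \;=\; \frac{\ii^{|\beta|}}{2^{n/2}} \sum_{\alpha} (-1)^{\alpha\cdot\beta}\,\overline{f(\overline{\alpha})} \;=\; \frac{\ii^{|\beta|}}{2^{n/2}} \sum_{\alpha'} (-1)^{\overline{\alpha'}\cdot\beta}\,\overline{f(\alpha')}.
\]
Using $\overline{\alpha'}\cdot\beta = |\beta| - \alpha'\cdot\beta \pmod 2$, one pulls a factor $(-1)^{|\beta|}$ out of the sum, and since $\ii^{|\beta|}(-1)^{|\beta|} = (-\ii)^{|\beta|}$, the right-hand side becomes exactly $\overline{\widehat{f}(\beta)}$ read off from the original formula with $f$ replaced by $\overline{f}$. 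This completes the proof.

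The main obstacle here is essentially cosmetic: carefully tracking the powers of $\ii$ and signs coming from $Z$. Writing $Z_{b,a} = \frac{1}{\sqrt{2}}\, \ii^{b}(-1)^{ab}$ once and for all converts the whole argument into a single change of variable $\alpha \leftrightarrow \overline{\alpha}$, and no deeper structural property of the signature or of $Z$ is required.
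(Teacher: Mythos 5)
Your proof is correct, and it takes a route that is genuinely different from the paper's. The paper proves the two implications separately by two direct computations: the forward direction expands $\widehat{f} = Z^{\otimes n} f$, conjugates, applies {\sc ars}, and re-indexes $c_j = 1-b_j$; the backward direction repeats essentially the same manipulation with $Z^{-1}$ to express $f$ in terms of $\widehat f$. You instead establish the single intertwining identity $\widehat{\phi(f)} = \overline{\widehat{f}}$, where $\phi$ is the {\sc ars} involution $(\phi f)(\alpha) = \overline{f(\overline\alpha)}$, and then observe that because $Z^{\otimes n}$ is a bijection, $\phi(f) = f$ is equivalent to $\widehat{\phi(f)} = \widehat{f}$, hence to $\overline{\widehat f} = \widehat f$. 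This collapses the two directions into one computation and makes the structural content transparent: the holographic transformation by $Z$ conjugates the {\sc ars} involution into complex conjugation. The computation itself (the uniform entry formula $Z_{b,a} = \frac{1}{\sqrt2}\ii^b(-1)^{ab}$, yielding $\widehat f(\beta) = \frac{\ii^{|\beta|}}{2^{n/2}}\sum_\alpha (-1)^{\alpha\cdot\beta}f(\alpha)$, followed by the change of variables $\alpha\leftrightarrow\overline\alpha$ and the identity $\ii^{|\beta|}(-1)^{|\beta|} = (-\ii)^{|\beta|}$) checks out. The paper's version is more elementary in that it never invokes invertibility of the transformation; yours is shorter and makes the ``conjugate involutions'' picture explicit, which is arguably the more illuminating way to see why the theorem is true.
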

\begin{proof}
We first prove that if $f$ satisfies {\sc ars} then $\widehat{f}$ is real.

We have
$2^{n/2}\widehat{f}=
 \left[\begin{smallmatrix} 1 & 1 \\ \ii & -\ii \end{smallmatrix}\right]^{\otimes n}
f$, and thus for  all $(a_1, \ldots, a_n) \in \{0, 1\}^n$,
\[2^{n/2}\widehat{f}^{a_1 \ldots a_n}
=\sum_{(b_1, \ldots, b_n) \in \{0, 1\}^n }
f^{b_1, \ldots, b_n}
\prod_{1 \leqslant j \leqslant n}
\left\{(-1)^{ a_j  b_j}\ii^{a_j}
\right\}.\]
Then,
\begin{eqnarray*}
2^{n/2}\overline{\widehat{f}^{a_1 \ldots a_n}}
&=&
\sum_{(b_1, \ldots, b_n) \in \{0, 1\}^n }
\overline{f^{b_1 \ldots b_n}}
\prod_{1 \leqslant j \leqslant n}\left\{
(-1)^{ a_j  b_j}
(-\ii)^{ a_j } \right\}\\
&=&
\sum_{(c_1, \ldots, c_n) \in \{0, 1\}^n }
f^{c_1 \ldots c_n}
\prod_{1 \leqslant j \leqslant n}\left\{
(-1)^{a_j  (1-c_j)}
(-\ii)^{a_j}\right\}
\\
&=& 2^{n/2}\widehat{f}^{a_1 \ldots a_n}.
\end{eqnarray*}
Hence, $\widehat{f}$ is real.

Now in the opposite direction,
suppose 
$\widehat{f}$ is real.
We have
$2^{n/2}f=
 \left[\begin{smallmatrix} 1 & -\ii \\ 1 & \ii \end{smallmatrix}\right]^{\otimes n}
\widehat{f}$, and thus for  all $(a_1, \ldots, a_n) \in \{0, 1\}^n$,
\[2^{n/2}f^{a_1 \ldots a_n}
=\sum_{(b_1, \ldots, b_n) \in \{0, 1\}^n }
\widehat{f}^{b_1, \ldots, b_n}
\prod_{1 \leqslant j \leqslant n}
\left\{(-1)^{ a_j  b_j}(-\ii)^{b_j}
\right\}.\]
So
\[2^{n/2}f^{\overline{a_1} \ldots \overline{a_n}}
=\sum_{(b_1, \ldots, b_n) \in \{0, 1\}^n }
\widehat{f}^{b_1, \ldots, b_n}
\prod_{1 \leqslant j \leqslant n}
\left\{(-1)^{ (1-a_j)  b_j}(-\ii)^{b_j}
\right\}.\]
Then,
\begin{eqnarray*}
2^{n/2}\overline{f^{a_1 \ldots a_n}}
&=&
\sum_{(b_1, \ldots, b_n) \in \{0, 1\}^n }
\overline{\widehat{f}^{b_1 \ldots b_n}}
\prod_{1 \leqslant j \leqslant n}\left\{
(-1)^{ a_j  b_j}
\ii^{ b_j } \right\}\\
&=&
\sum_{(b_1, \ldots, b_n) \in \{0, 1\}^n }
\widehat{f}^{b_1 \ldots b_n}
\prod_{1 \leqslant j \leqslant n}\left\{
(-1)^{a_j  b_j}
\ii^{b_j}\right\}
\\
&=& 2^{n/2}f^{\overline{a_1} \ldots \overline{a_n}}. 
\end{eqnarray*}
Hence, $f$ satisfies {\sc ars}.
\end{proof}


Due to this  holographic transformation, 
the classification of \#EO problems with {\sc ars} is not only interesting
in its own right, but also it serves as a basic building block in  the
classification program for real-valued Holant problems over signature sets that are not necessarily symmetric.
The signatures 
with {\sc ars} are precisely those signatures that can be transformed into real-valued signatures in the Holant setting.

\section{Conclusion and Outlook}\label{sec-conclusion}
The main technical contribution of this paper is to introduce
unique prime factorization for signatures as a method to prove complexity
dichotomies. 
Combined with merging operations, it is a powerful technique to analyze
the complexity of signatures, and
build inductive proofs. This method should be more widely
applicable in the study of Holant problems. 
The results of this paper can serve as building blocks towards a classification of real-valued Holant problems (with no symmetry assumptions on the signatures). 
Under a suitable holographic transformation, these \#EO problems with {\sc ars} correspond
to precisely a class of real valued Holant problems. 
It also seems that
the EO signatures, which have support on half Hamming weight, are where  some 
of the most intricate cases 
for the final classification  lie.

We give the following Figure~\ref{fig:structure} as a 
partial map for the complexity classification program for Holant problems on the Boolean domain.
We indicate how the framework of \#EO problems considered in this paper fits in this program.

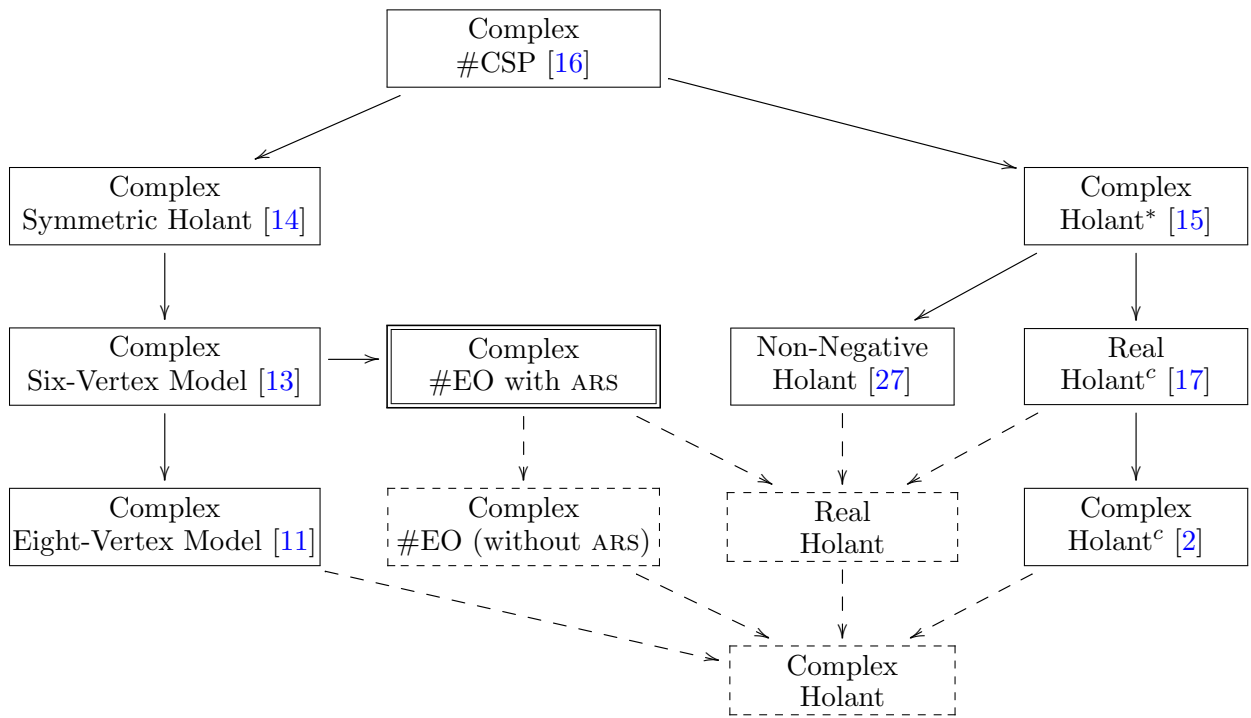
\begin{figure}[!hb]
\centering
\vspace{-1ex}
$$\xymatrix@C-=4ex{
& \framebox[22ex]{~~~~~
\txt{Complex \\ \#CSP~\cite{Cai-Lu-Xia-csp}}
~~~~~}
\ar[dl]\ar[drr] &    \\
 \framebox[25ex]{\txt{Complex\\
 Symmetric Holant~\cite{cai2016complete}}}\ar[d]
&   & 
  & \framebox[18ex]{\txt{Complex\\
 Holant$^\ast$~\cite{cai2011dichotomy}}}\ar[dl]\ar[d]  \\
  \framebox[25ex]{\txt{Complex\\
 Six-Vertex Model~\cite{cfx}}}\ar[d]\ar[r]
&  \doublebox{~~~\txt{Complex\\
 {\rm \#EO} with {\sc ars}}~~~}\ar@{-->}[d]\ar@{-->}[dr]  & \framebox[18ex]{\txt{Non-Negative\\
 Holant~\cite{beida}}}\ar@{-->}[d]
  & \framebox[18ex]{\txt{Real\\
 Holant$^c$~\cite{Cai-Lu-Xia-holant-c}}}\ar[d]\ar@{-->}[dl]  \\
  \framebox[25ex]{\txt{Complex\\
 Eight-Vertex Model~\cite{cai2017complexity}}}\ar@{-->}[drr]<-1.2ex>
&  \dboxed{\txt{Complex\\
{\#EO} 
(without {\sc ars})}}\ar@{-->}[dr]  & \dboxed{~~~~~~\hspace{0.5ex}\txt{Real\\
 Holant}~~~~~~\hspace{0.5ex}}\ar@{-->}[d]
  & \framebox[18ex]{\txt{Complex\\
 Holant$^c$~\cite{Backens-Holant-c}}}\ar@{-->}[dl]  \\
  &&\dboxed{~~~~~\txt{Complex\\
 Holant}~~~~~} &
}$$
\caption{A partial map 
of the complexity classification program for Holant problems}
\label{fig:structure}
 \end{figure}

The complexity classification for Holant problems is built on the dichotomy for \#CSP on the Boolean domain \cite{Cai-Lu-Xia-csp}. 
Based on this result, a dichotomy for complex-valued Holant problems with symmetric signatures (the function value depends only on the Hamming weight of the input) was established \cite{cai2016complete}.
For asymmetric signatures, 
the first result is a dichotomy for a restricted class called $\Holant^\ast$ problems where all unary signatures are assumed to be available \cite{cai2011dichotomy}.
Later, it was generalized to (first real-valued \cite{Cai-Lu-Xia-holant-c} and then complex-valued \cite{Backens-holant-plus,Backens-Holant-c}) $\Holant^c$ problems where two pinning unary signatures $(1, 0)$ and $(0, 1)$ are available.
In addition, based on the dichotomy for $\Holant^\ast$ problems, a  dichotomy for non-negative  $\Holant$ problems was proved \cite{beida} without assuming any auxiliary signatures.
 Simultaneously, progress has been made for  Holant problems parameterized by complex-valued signatures of even arities. 
The base case is a single 4-ary signature. (The case that all signatures are binary is known to be tractable.)
A dichotomy is proved for complex-valued six-vertex models \cite{cfx} and later it was generalized to complex-valued eight-vertex models \cite{cai2017complexity}.
The framework of \#EO problems generalizes six-vertex models to higher arities. 
In this paper, we prove a dichotomy for complex-valued \#EO problems with {\sc ars}, which can be transformed to a class of real-valued Holant problems.
Combining with other results, we hope this result will lead to a classification for all real-valued Holant problems. 
The ultimate goal is definitely a full classification for all complex-valued Holant problems.
In order to achieve this, we think a  classification for complex-valued \#EO problems \emph{without} assuming {\sc ars} may serve as a building block.

 \section*{Recent Development}
Most recently, after this paper was accepted, 
the vision that is enunciated in this paper
has been realized. A full complexity dichotomy
for all real-valued Holant problems on the Boolean domain has been achieved~\cite{real-holant}, and the
results in the present paper indeed played a crucial role in its proof.

\section*{Acknowledgement}
We   thank the anonymous referees for their very helpful comments. We also thank the editor Prof. Lane Hemaspaandra for his very detailed editorial comments.
Their many suggestions have helped us greatly to improve the presentation of the material.

\appendix

\section{An $8$-ary Signature  not Satisfying the $\Delta$-Property}

Some lemmas in Section~\ref{no-neq_4} (e.g., Lemmas~\ref{twononzero}, \ref{triangle}, and~\ref{3indices}) are subtle both in their precise statements and proofs. In particular,
they require high arities.
Here, we give a signature of arity $8$ that cannot be handled by using ``commutativity of merging operations'', which is the underlying ``principle'' that
 is essential for our induction proof (Lemma \ref{induction}, which requires arity at least 10). 
 This illustrates some of the intricacies and  why we have to deal with signatures of
 lower arities separately (Lemma \ref{468}).
 For convenience, we use $(i j)$ to denote the binary disequality signature on variables $x_i$ and $x_j$,
 and in the table (\ref{merging-form}) below we omit the tensor product symbol $\otimes$ between the binary 
  disequality signatures.
Let $f$ be an  EO signature satisfying {\sc ars} 
on eight variables $x_1, x_2, \ldots, x_8$.  Suppose $f$ has the following structural decompositions in $\mathcal{B}$ under various merging operations.
(The precise forms in the following 28 tensor product factorizations in $\mathcal{B}$ 
are meticulously chosen; embedded in it is a nontrivial symmetry subgroup of $S_8$.
It is not clear at this point that there are any function that can satisfy
all these requirements.)
\begin{equation}\label{merging-form}
    \begin{aligned}
    \partial_{(12)}f=(34)(56)(78), \ \partial_{(34)}f=(12)(56)(78), \  \partial_{(56)}f=(12)(34)(78), \ \partial_{(78)}f=(12)(34)(56),\\
    \partial_{(13)}f=(24)(57)(68), \ \partial_{(24)}f=(13)(57)(68), \ 
    \partial_{(57)}f=(13)(24)(68), \ \partial_{(68)}f=(13)(24)(57), \\
     \partial_{(14)}f=(23)(58)(67), \  \partial_{(23)}f=(14)(58)(67), \
    \partial_{(58)}f=(14)(23)(67),  \ \partial_{(67)}f=(14)(23)(58), \\
     \partial_{(15)}f=(26)(37)(48), \ \partial_{(26)}f=(15)(37)(48), \
    \partial_{(37)}f=(15)(26)(48), \ \partial_{(48)}f=(15)(26)(37), \\
     \partial_{(16)}f=(25)(38)(47), \ \partial_{(25)}f=(16)(38)(47), \
    \partial_{(38)}f=(16)(25)(47),  \ \partial_{(47)}f=(16)(25)(38), \\
     \partial_{(17)}f=(28)(35)(46), \ \partial_{(28)}f=(17)(35)(46), \
    \partial_{(35)}f=(17)(28)(46), \ \partial_{(46)}f=(17)(28)(35), \\
     \partial_{(18)}f=(27)(36)(45), \ \partial_{(27)}f=(18)(36)(45), \
    \partial_{(36)}f=(27)(18)(45), \ \partial_{(45)}f=(27)(36)(18). \\
    \end{aligned}
\end{equation}

 There are a total of ${8 \choose 2} = 28$ binary disequalities $(ij)$,
 each appears exactly three times in (\ref{merging-form}) (each disequality $(ij)$ appears in a single row in
 table (\ref{merging-form})).
One can verify that this $f$  does not satisfy the $\Delta$-property. We carefully design these particular factorizations so  that for any binary disequality signature $(ij)$, it divides exactly three totally disjoint pairs $\partial_{(s_1 t_1)}f$, $\partial_{(s_2 t_2)}f$ and $\partial_{(s_3 t_3)}f$, which do not form a ``triangle''. 
However (it is tedious but one can   verify that)
by taking any two disjoint pairs $\{i, j\}$ and $\{u, v\}$, the above 
tensor factorizations
always satisfy $\partial_{(ij)(uv)}f=\partial_{(uv)(ij)}f$. 
(Of course any actual signature $f$ must satisfy this.)
It follows that this case cannot be handled by our induction proof.

The above structural decompositions (\ref{merging-form}) under various merging operations may look like artificially designed 
and one may wonder whether there exists such a signature that  satisfies them all. 
While not obvious, the answer is yes. 
We give a signature that does satisfy  (\ref{merging-form}). 

Consider the following signature $f_8$ which has its
support set defined by
\begin{equation*}
\begin{aligned}
\mathscr{S}(f_8)=\{(x_1, x_2, \ldots, x_8)\in \{0, 1\}^{8} \mid~~
&x_1+x_2+x_3+x_4\equiv 0 \bmod2, ~ x_5+x_6+x_7+x_8\equiv 0 \bmod2,\\
&x_1+x_2+x_5+x_6\equiv 0 \bmod2, ~ x_1+x_3+x_5+x_7\equiv 0 \bmod2,\\
&\text{ and } 
\sum\nolimits_{i=1}^8x_i = 4
\}.
\end{aligned}
\end{equation*}
The function is defined to be
$f_8(\alpha)=1$ when $\alpha \in \mathscr{S}(f)$ and $f_8(\alpha)= 0$ elsewhere.
It can be verified that $f_8$ is an EO signature with {\sc ars}. Its support has the following structure: 
(1) it is on half-weight;
(2) the sum of the first four variables is even;
(3) the last four variables are either 
identical to the first four variables, or 
are the complement of the first four variables.


One can verify that the signature $f_8$ satisfies forms (\ref{merging-form}).
Thus, there does exist a case that cannot be handled by our induction proof alone.
This partially explains some of the intricacies of our proof.

\bibliography{eo}{}

\end{document}